\documentclass[11pt]{article}
\usepackage{geometry}

\usepackage{graphicx} % Required for inserting images
\usepackage{booktabs}
\usepackage{placeins}
\usepackage{algorithm}
\usepackage{algpseudocode}
\usepackage{amsmath,amssymb, amsthm}
\usepackage{thm-restate}
\usepackage{amsfonts}
\usepackage{xcolor}
\usepackage[labelfont=bf]{caption}
\usepackage{subcaption}
 \usepackage[skip=5pt]{parskip}
 \usepackage{mdframed}
 \usepackage[colorlinks=true,pdfpagemode=UseNone,urlcolor=blue,linkcolor=blue,citecolor=violet,pdfstartview=FitH]{hyperref}
 \usepackage[noabbrev,nameinlink]{cleveref}
 \usepackage{fullpage}
 \usepackage{amsthm}

 \newtheorem{theorem}{Theorem}[section]
 \newtheorem{lemma}[theorem]{Lemma}
  \newtheorem{claim}[theorem]{Claim}
  
 \newtheorem{fact}[theorem]{Fact}
 \newtheorem{result}{Result}
 \newtheorem{remark}{Remark}
 \newtheorem{corollary}[theorem]{Corollary}

\newcommand{\eps}{\varepsilon}

\newcommand{\calA}{\mathcal{A}}

\newcommand{\calD}{\mathcal{D}}

\newcommand{\calP}{\mathcal{P}}

\newcommand{\Succ}{\mathsf{Succ}}

\usepackage[colorinlistoftodos,prependcaption,textsize=tiny]{todonotes}

\newtheorem{observation}[theorem]{Observation}
\newcommand{\Ot}{\ensuremath{\widetilde{O}}}

\DeclareMathOperator*{\Exp}{\ensuremath{{\mathbf{E}}}}

\renewcommand{\Pr}{\mathbb{P}}

\newcommand{\ceil}[1]{{\left\lceil{#1}\right\rceil}}

\newenvironment{rslt}{
	\begin{mdframed}[backgroundcolor=gray!20,topline=false,bottomline=false,leftline=false,rightline=false]
		\begin{result}
		}{
		\end{result}
	\end{mdframed}
}

\title{Randomized Algorithms for Learning Partitions with \\ Near Optimal Query Complexity in Constant Rounds\thanks{This work began at Dagstuhl Seminar 26131, ``New Trends in Clustering''. We thank the organizers for inviting us, and to the staff of Schloss Dagstuhl for providing a productive research environment.}} %with near Optimal Query Complexity}
\author{
	Deeparnab Chakrabarty\\
	Dartmouth College\\
	\texttt{deeparnab@dartmouth.edu}
	\and
	Aditi Dudeja\\
	The Chinese University of Hong Kong (Shenzhen)\\
	\texttt{aditidudeja@cuhk.edu.cn}
	\and
	David Saulpic\\
	CNRS and Université Paris Cité\\
	\texttt{david.saulpic@irif.fr}
}
\date{}

\begin{document}
\maketitle

\begin{abstract}
	We study the round complexity of learning a hidden partition $\mathcal{P}$ of an $n$-element universe using PAIR queries: PAIR($x,y$) 
	tells us whether $x$ and $y$ belong to the same part of the partition or not. While it is easy to learn using $n|\mathcal{P}|$ queries using a basic algorithm and this query complexity is optimal, 
	this basic algorithm is highly sequential. Black, Mazumdar, and Saha [COLT 2025] recently gave tight deterministic round/query tradeoffs when the number of parts of $\mathcal{P}$ is known. In particular they prove $\Theta(\log\log n)$ rounds are sufficient and necessary to limit the number of queries to $n|\mathcal{P}|$. They leave proving a randomized lower bound as an open direction. \smallskip
	 
	We show that randomization dramatically changes the picture. When the number of parts $k = |\mathcal{P}|$ is known, we give a simple 3-round randomized algorithm using $O(nk\log n)$ queries with high probability, and prove that 2 rounds require $\Omega(n^{4/3}k^{2/3})$ queries -- the same as deterministic algorithms. 
    We also study a more general setting where the number of parts is unknown. In this case, we give a 4-round randomized algorithm using $O(n|\mathcal P|\log^2 n)$ queries with high probability, and prove that 3-rounds cannot achieve near-optimal query complexity. 
	Furthermore, we show an even bigger separation in this regime between randomized and deterministic algorithms: for the latter, $\Theta(\log n/\log\log n)$ rounds are necessary and sufficient to 
	obtain near-optimal query complexity. \smallskip

        Thus, for partition learning, randomization yields a stark improvement in round complexity of algorithms with near optimal query complexity.

\end{abstract}
\thispagestyle{empty}
\newpage
\setcounter{page}{1}
\section{Introduction}

%what are you studying?
We consider the following learning problem in the query access model.
We are given a finite universe $U$ of $n$ elements. There is a hidden partition $\calP$ of $U$ that we have to learn. We have access to PAIR queries: for any pair 
of elements $\{x,y\} \subseteq U$, the PAIR query on $(x,y)$ reveals whether they are in the same part of $\calP$ or not.
It is easy to obtain an $O(n|\calP|)$-query algorithm by repeatedly picking an arbitrary unclassified element and performing (simultaneously) PAIR query with every other unclassified element
to learn the part containing that element. 
%where $k$ is the number of parts in $\calP$: we pick any element and make (at most) $(n-1)$ queries with everyone else
%and learn the full part containing this element; after that we remove and repeat $k-1$ times. 
Furthermore, it is known that this query complexity is optimal~\cite{DKMR14, MS17b, LM22}.
However, the above algorithm is highly {\em adaptive/sequential} and proceeds in $|\calP|-1$ {\em rounds} of question and answers, where the questions made in a round depends on the answers obtained in the previous rounds. A one round algorithm is also called a {\em non-adaptive} algorithm.
In this paper, we are interested in understanding what is the fewest rounds of queries any algorithm which maintains (nearly) optimal query complexity can learn in.
%the ``round-complexity'', that is the number of rounds of queries, of the above problem while maintaining (nearly) optimal query complexity.

%why is this important and interesting?
The above learning problem arises in many scenarios. One application is in {\em clustering} where the PAIR queries are known as ``same-set queries''~\cite{MT16,AKBD16, MS17a, MS17b}: the goal is to find a clustering of objects only based on questions to a user about pairs of objects. %without using any structural information about the cluster other than what the user wants -- and provides by answering aforementioned queries. 
This may be a clustering of pictures of animals, and the user may choose to cluster them according to the background color, the genus of the animals or their species, etc; the clustering algorithm would therefore only have access to  ``same-set queries'' provided by the user, and cannot assume any other structural information.
A closely related motivation comes from crowd-sourced entity resolution in databases, where the goal is to partition records
according to the real-world entities they represent.  In that setting, a natural
crowd question asks whether two records refer to the same entity, and many works~\cite{WLG13, WLKFF13, DKMR14, MS17c} 
study the question of how to select such crowd questions.
In both examples, query-complexity corresponds to the ``cost'' (in dollars paid to the user or the crowd) of the algorithm while the 
round-complexity corresponds to the ``execution time'' of the algorithm assuming a batch of questions are answered in a unit amount of time. It is therefore important to understand
the trade-offs between query and round complexity. %%%%deepc: should cite some similar considerations in other areas

%what was known, and what did you find out?
Black, Mazumdar, and Saha (abbreviated to BMS, henceforth) addressed the above question in their nice paper~\cite{BMS25} which gives the optimal answer for {\em deterministic} algorithms
with the assumption that the number of parts\footnote{$k$ could be an upper bound on $|\calP|$, but the query complexity is in terms of $k$ not $|\calP|$.} $k:= |\calP|$ is known to the algorithm;
we return to this assumption shortly.
First, for any $r \geq 1$, they describe a $r$-round deterministic algorithm to learn the partition which makes $8n^{1 + \frac{1}{2^r - 1}}k^{1 - \frac{1}{2^r - 1}}$ many queries. 
In particular, this implies\footnote{all logs are base 2} a $\ceil{\log\log n}$-round deterministic algorithm to learn a hidden partition of at most $k$ parts making $O(nk)$-queries. 
%At this point, we mention that their algorithm needs to know $k$, (an upper bound on) the number of parts in $\calP$; we return to this point shortly.
Second, they show that any $r$-round deterministic algorithm, even with the knowledge of the number of parts, $k$, 
must make $\Omega\left(\frac{1}{r}\cdot n^{1 + \frac{1}{2^r - 1}}k^{1 - \frac{1}{2^r - 1}}\right)$ queries. In particular, this rules out $\Ot(nk)$-query\footnote{As is usual, we use $\Ot(f(n))$ to denote $\leq f(n)\log^C f(n)$ for some constant $C$, for large enough $n$} algorithms in $< 0.9 \log\log n$ rounds. 
They remark that it is open to establish such a lower bound for arbitrary {\em randomized} algorithms. 

Our first result shows that there is no such lower bound, showing a stark contrast between deterministic and randomized algorithms.\smallskip

\begin{rslt}\label{res:1}
	We give a simple $3$-round randomized algorithm that given a parameter $k$, learns a hidden partition $\calP$ with at most $k$ parts making $O(nk\log n)$ many PAIR queries whp (see \Cref{thm:upper-bound-known-k}).
	
	Complementing the above, we show that any $2$-round randomized algorithm which succeeds with probability $\geq 0.51$ in learning a partition with $k$ parts, must make $\Omega(n^{4/3}k^{2/3})$ many queries (see \Cref{thm:lower-bound-known-k}). 
\end{rslt}

Note that BMS already implies a $2$-round deterministic algorithm with the query complexity of our lower bound. This shows that the ``power of randomization'' surprisingly kicks in at the third round, and not before that.
We also note that before our work,~\cite{BLMS24} showed that a non-adaptive (that is $1$ round) algorithm, deterministic or randomized, must make $\Omega(n^2)$ queries. 

%\deepc{should we describe the algorithm above. i am not sure if a separate subsection on ``our techniques'' is needed.}
%\david{since it's for SOSA, it may be good to do this here, to emphasize the algorithm is simple ?}

\paragraph{Unknown Number of Parts.}
As noted above, the BMS algorithm needs to know $k$, (an upper bound on) the number of parts of $|\calP|$, and so does our randomized algorithm alluded to in the above result. This raises the following question: what can one say about the round-complexity when the algorithm does not know anything about $|\calP|$? Do things drastically change? Apart from being an interesting scientific question, not knowing $|\calP|$ is in fact a natural assumption 
in many applications; for instance, in the clustering application the algorithm, or even the user, may not a priori know how many clusters there are.
% user(eg, the clustering one) and indeed many papers~\cite{MS, DKMR14} design algorithms assuming the same. 

Note that the trivial, $O(n|\calP|)$-query algorithm described in the first paragraph of this paper doesn't need to know $|\calP|$; so the query complexity is unchanged. However,
for algorithms proceeding in a small number of rounds, not knowing $|\calP|$ raises a tension: if our query complexity is supposed to be $\Ot(n|\calP|)$, then in any round, we cannot afford to make more than $\Ot(nk')$ queries, where $k'$ is the current {\em lower bound} on $|\calP|$ that our algorithm can deduce from the answers seen so far. In particular, the algorithm cannot make more than $\Ot(n)$-queries in the first round.

We show that for randomized algorithms one extra round suffices, and is indeed necessary, if we want near-optimal query complexity.
%\deepc{I wonder if stating the $n^{4/3}k^{2/3}$ in the result below actually, perhaps paradoxically, makes the impact of the statement weaker. This is due to the lack of matching upper bound.}
\medskip

%Indeed, it is not too hard to show (~\Cref{thm:lb-det-known-k}) that any deterministic algorithm for learning a partition $\calP$ which makes at most $\Ot(n|\calP|)$ PAIR queries for every $|\calP|$, 
%must proceed in $\Omega(\log n/\log\log n)$-rounds, and this is tight (see ??). In this unknown $k$ case, randomization shows its colors much more strongly. \deepc{this is wrong. error.}\smallskip

\begin{rslt}\label{res:2}
	We give a simple $4$-round randomized algorithm that, for any partition $\calP$, 
	learns it making $O(n|\calP|\log^2 n)$-many PAIR queries whp (see \Cref{thm:upper-bound-unknown-k}). The algorithm does not know anything about $|\calP|$.
	
	Complementing the above, we show that any $3$-round randomized algorithm that learns any partition $\calP$ with probability $\geq 9/10$, must make $\Omega(n^{4/3}|\calP|^{2/3})$-queries
	in some round with constant probability on some partition $|\calP|$ (see~\Cref{thm:lower-bound-unknown-k}). 
	This rules out a $3$-round randomized algorithm which learns any partition $\calP$ making $\Ot(n|\calP|)$-many PAIR queries whp. 
\end{rslt}
%\deepc{As I went over the technical section, I somehow felt that stressing ``can't estimate k in one round'' not that important to be made at this point. At some level, the point is a bit weak: sure we can't estimate k as in get a good approximation, but, as the technical portion mentions, we can estimate what we need to :-). So removing this point, but will keep it in~\Cref{sec:our-techniques} where it leads the story better.}
%We make a couple of remarks. First, one possible approach that the reader may be thinking of to 
%solve the unknown $k$ case may be the following: use the first round (which, as argued above, cannot make more than $\Ot(n)$ queries) to estimate $k$, and then apply the $3$-round algorithm for known $k$. Unfortunately, this is not possible --- as we prove in~\Cref{lem:lb-1-round-est-k}, any non-adaptive, possibly randomized, algorithm making 
%$\Ot(n)$ queries cannot distinguish between two instances, one which has $k = n^{0.51}$ and another with $k = n^{0.75}$. So, the algorithm needs to do something slightly different; we 
%delay this discussion to~\Cref{sec:our-techniques}. 

At this point, we also should remark about the round-complexity of {\em deterministic} algorithms for reconstructing partitions when $|\calP|$ is unknown and which have near optimal query complexity of $\Ot(n|\calP|)$.
Although deterministic algorithms are not the main focus of this paper, we discover an {\em exponential gap} in the round complexity of near optimal query complexity algorithm 
between the settings of known number of parts and unknown number of parts. This is in contrast to the randomized case where our results above imply a gap of one extra round.
While the results of BMS~\cite{BMS25} imply $\Theta(\log\log n)$ rounds are necessary and sufficient when $|\calP|$ is known, we show that when $|\calP|$ is not known, the round complexity is $\Theta(\log n/\log\log n)$. 
In particular, there is a simple algorithm (\Cref{thm:upper-bound-det-unknown-k}) making $\Ot(n|\calP|)$ queries running in $\ceil{\frac{\log n}{\log\log n}}$ rounds, and any algorithm having query complexity $n|\calP|\log^c n$ for any constant $c$ must proceed (\Cref{thm:lower-bound-det-unknown-k}) in $\Omega(\log n/\log\log n)$ rounds. This, and some more nuanced results, can be found in~\Cref{sec:det-alg-unknown-k}.

\paragraph{Subset Queries.}
BMS~\cite{BMS25} also consider the partition learning problem with the stronger ``subset queries'': given a subset $S\subseteq U$, a query $Q(S)$ returns the {\em number}
of parts the elements of $S$ lie in. More precisely, $Q(S) = \sum_{P\in \calP} \min(1, |P\cap S|)$. When $S$ is a pair of elements, $Q(S)$ is $1$ iff the PAIR query answers YES and is $2$ iff the PAIR query answers NO. 
The paper studies the power of such queries for ``small subsets'', in particular, when the subset $S$ needs to have $|S| \leq s$ for some parameter $s$.
Since $O(s^2)$ PAIR queries simulate a subset query of size $s$, the query complexity of the partition learning problem\footnote{It is also easy to show an $\Omega(n)$ lower bound for large $s$; so the lower bound should be thought of as $\max(n, nk/s^2)$.} is $\Omega(\frac{nk}{s^2})$. BMS give a {\em randomized} algorithm which learns the partition in $\log\log n$ rounds making $\Ot(\frac{nk}{s^2})$ $s$-bounded subset queries, while their above lower bound proves a matching round-complexity for {\em deterministic} algorithms. Our final result, which is an easy corollary of our results above and a non-adaptive randomized algorithm in~\cite{BMS25}, is that we can shrink the number of rounds to $3$ and $4$ in the known $|\calP|$ and unknown $|\calP|$ case, respectively. \smallskip

\begin{rslt}\label{res:3}
	When $|\calP|$ is known, 
	there is a $3$-round randomized algorithm that makes with high probability $\Ot(nk/s^2)$-many $s$-bounded subset queries and learns the partition $\calP$.
	When $|\calP|$ is unknown, there is a $4$-round randomized algorithm with the same guarantee.
\end{rslt}

\subsection{Our Techniques.}\label{sec:our-techniques}

\paragraph{Algorithms.} 
For our algorithms, the starting point is the ability of random sampling to build \emph{nets} -- a small subset of elements which hits any large enough set in a set system. Here, it means that a small random sample of elements will hit all large enough parts of the partition: formally, sampling $\Ot(T)$ points hits with high probability every part of size at least $n/T$. Therefore, if $k = |\calP|$ (or an upper bound on $|\calP|$) is known, then in one round one can sample $\approx \sqrt{nk}$ points and be guaranteed to contain an element from every part of size $\geq \sqrt{n/k}$. 
So, in round one, we perform all pairwise queries among these sampled elements to learn the local partitioning of this random subset. In the second round, using at most one representative from each part of this local partitioning, we can learn the complete parts in which these representatives lie with $\leq nk$ queries, thereby learning every part of size $\geq \sqrt{n/k}$. Since there are (at most) $k$ parts, the number of ``unclassified'' elements at this point is therefore $\leq k\cdot \sqrt{n/k} = \sqrt{nk}$. By performing a third round of all pair-wise queries between these unclassified points gives us the simple $3$-round $\Ot(nk)$ query algorithm to recover the partition. 

Note that the above algorithm needs to know $k$, and in particular if $k = \omega(\sqrt{n})$, it makes $\omega(n)$ queries in the first round. 
However, if we do not know $k$, then in the first round we cannot make more than $\Ot(n)$ queries since the number of parts could be $O(1)$.
At this point, one idea is to {\em estimate} $k$ in one round using $\Ot(n)$ queries, and then apply the above algorithm to get a $4$-round algorithm.
Unfortunately, with a budget of $\Ot(n)$ queries, one cannot 
estimate $k$ well in one round: any non-adaptive, possibly randomized, algorithm making 
$\Ot(n)$ queries cannot distinguish between two partitions, one with $k = n^{0.01}$ parts and another with $k = n^{0.5}$ parts (see~\Cref{sec:appendix}).
%\footnote{On the other hand, it is not too difficult to build a two-rounds algorithm that {\em effectively} estimates $k$; see~\Cref{sec:appendix}.}.

Nonetheless, we can obtain a $4$-round randomized algorithm by morally doing the above.
%
%
%The case for unknown $k$ is more intricate. The first algorithm idea would be to first estimate precisely $k$ using rounds with $\Ot(n)$ queries, and then use the $3$-round algorithm. However, estimating $k$ precisely enough can only be done in $2$ rounds: one can show that any non-adaptive, possibly randomized, algorithm making 
%$\Ot(n)$ queries cannot distinguish between two instances, one which has $k = n^{0.51}$ and another with $k = n^{0.75}$. On the other hand, it is not too difficult to build a two-rounds algorithm that estimates $k$. 
%
%
%As any deterministic algorithm requires $\Omega(\log n / \log \log n)$ rounds, this $5$ rounds already is a stark separation between randomized and deterministic. However, it is not tight: using a different strategy, we present a 4-rounds algorithm. 
The point is that we don't really need a good estimate of $k$; what we need a {\em lower bound} $k'$ on $k$ which may be quite far from the true value but still actionable in three subsequent rounds. Indeed, when using the $3$ round algorithm with a value $k'$, potentially  $\ll |\calP|$, the risk is that there remains many unclassified elements in the last round. All those items however are in small parts -- with size less than $\sqrt{n/k'}$, as we sketched above. Thus, our estimate $k'$ need only satisfy that only a few elements are in small parts, or more precisely, that at most $\sqrt{nk}$ elements, where $k$ is the true $|\calP|$, are in parts with size $\leq \sqrt{n/k'}$. We now show how to do this again by random sampling.
%This is indeed possible.

%Here is the main idea to obtain such a $k'$ in one round makin $\Ot(n)$-many PAIR queries.
The main observation is this. For any integer $s$, let $U_s$ be the elements contained in parts of size $[s,2s)$. If $|U_s| = \Omega(n/s)$, then note the following: (i) 	$k\geq \Omega(n/s^2)$, 
(ii) a random subset $R\subseteq U$ of size $O(s)$ hits  $U_s$, and (iii) for each $x\in R$, one can be reasonably sure $x\in U_s$ by making PAIR queries with $x$ and $O(n/s)$ random other elements and seeing $\Theta(1)$ successes.
The first round of our algorithm, for every $s$ which is a power of $2$, implements (ii) \& (iii) using $O(n)$-many PAIR queries, and notes $s^*$ to be the 
smallest $s$ for which we get a hit with $U_{s^*}$. Due to (i) above, we can use $k' = \Omega\left(\frac{n}{{s^*}^2}\right)$ as a lower bound. 
Furthermore, we can assert that the number of elements $P$ in parts of size $\leq s^* = \sqrt{n/k'}$ is at most $\Ot(\sqrt{nk})$, where $k$ is the true number of parts. To see this, note that for any $s < s^*$ we have $|U_s| = O(n/s)$ and by definition, we have $|U_s| \leq 2ks$. Multiplying and taking square-roots, we get $|U_s| \leq O(\sqrt{nk})$. Since $P$ is the union of at most $O(\log n)$ many $U_s$'s, we have $|P|\leq \Ot(\sqrt{nk})$, which is what we desired.

\paragraph{Lower Bounds.} %We show that even with the knowledge of $k$, one needs $3$-rounds. 
%\paragraph{$3$ rounds are necessary.} The number of queries is optimal, as shown in \cite{}. 
We show that even with knowledge of $k$, %$3$ rounds are necessary to attain $\Ot(nk)$ queries, with randomization -- while any deterministic 3-rounds algorithm must use $\Omega(n^{8/7}k^{6/7})$ queries. 
%In particular, we show that 
any $2$-round randomized algorithm must make $\Omega(n^{4/3}k^{2/3})$ queries matching the deterministic upper and lower bounds of BMS~\cite{BMS25}, hence formalizing that randomization only helps with three rounds. 

The main idea is inspired by the $r$-round deterministic lower bound in~\cite{BMS25}. The deterministic lower bound follows via an ``adversary argument'' where an adversary forms the partition on the fly responding to the queries made, trying to hid a ``key feature'' till requisite number of rounds go by.
For randomized lower bounds, we need to commit to a distribution of partitions; however, we can argue that only 2-rounds of queries  cannot reveal enough information to detect the ``key feature''.

To give more details, we construct a distribution over partitions, with either $2k+1$ or $2k+2$ parts, as follows. We first split the universe into $k$ roughly equal parts $U_1, \ldots, U_k$, and each $U_i$ identifies a suitably sized subset $R_i$, and forms parts $U_i\setminus R_i$ and $R_i$. It is instructive to think of every element $x\in U$ choosing $u_x \in [i]$ at random, and an independent $r_x \in \{0,1\}$ at random, and $x\in U_i$ if $u_x = i$ and $x\in R_i$ if $u_x = i, r_x = 1$.
Let $R = \cup_{i=1}^k R_i$. One then samples a pair $(a,b) \in R\times R$ and in the DOUBLETON world, we pluck $\{a,b\}$ out and form its own part and in the SINGLETON world we pluck it out and form two parts: $\{a\}$ and $\{b\}$. Any successful algorithm {\em must} query the pair $(a,b)$. This is the ``key feature'' alluded in the previous paragraph and this is the same as in~\cite{BMS25}.
The size of $|R|$ is so chosen that (a) the first round queries $Q_1$ have very few pairs with endpoints in the same $R_i$, and (b) nevertheless there is a lot of uncertainty among the choice of $(a,b)$. In particular, condition (a) allows one to formalize the idea that the answer to the first round queries doesn't really decrease the uncertainty induced by part (b). We leave the details to~\Cref{sec:lower-bound-known-k}.

When $k$ is {\em unknown}, we show that $3$-rounds do not suffice even with randomness. The lower-bound builds on that for the known-$k$ case, with a slight twist to force the first round to use at most $\Ot(n^{4/3})$ queries. The whole argument is then to show that such a small number of queries is basically useless to the algorithm.
To force a small number of queries in the first round, the hard distribution is built as follows: with probability $1/2$, it is the trivial partition with a single part. With the remaining probability, it is the hard instance for known $k$ described above, with $k = n^{0.67}$. In fact, we prove that any $3$-round algorithm must make $\Omega(n^{4/3}|\calP|^{2/3})$ many queries.

%
%In slightly more details, the instance is built as follows: each item $x$ samples a value $u_x \in \{1, ..., k\}$ and a $r_x \in \{0, 1\}$, with a tiny probability of $r_x = 1$. Last, two items $a$ and $b$ are selected uniformly at random. The clusters are the following: with probability $1/2$, $a$ and $b$ are together in a part $\{a, b\}$ (the DOUBLETON world) and probability $1/2$ are in singletons $\{a\}, \{b\}$ (SINGLETONS). The remaining clusters are $R_i$, namely the set of items with $u_x = i, r_x = 1$ and $U_i \setminus R_i$, those with $u_x = i$ and $r_i = 0$.
To prove the $3$-round lower bound, we show that 
%The difficulty of the lower-bound comes again from the dependency between the queries in different rounds. To handle this, we show that : 
the first round reveals information only about a tiny number of parts; and that after the second round, $a$ and $b$ are indistinguishable in a large set of items (just as in the case for known-$k$).
For the first round: we say that an item $x$ is ``revealed" in the first round either if there are more than $k/3$ queries of the form $(x,y)$ involving $x$, or if there is even one query $(x,y)$ wth $u_x = u_y = i$, that is, the endpoints are in the same $U_i$. A $i \in \{1, ..., k\}$ is revealed if there is one item $x$ revealed with $u_x = i$. We show that if $k = n^{0.67}$, then only $o(k)$ many $i$'s are revealed (as long as $|Q_1| = o(n^{4/3})$. For all revealed $i$'s, we disclose the $U_i$ to the algorithm.
 
Now, in the second round, the remaining randomness is the following: all values of $r$ are still fresh -- as items are revealed solely based on their value of $u$; and, as each item was compared against $k/3$ other items, there is still a lot of randomness in the choice of the $u$ values. 
This allows us to show that, even conditioned on the answer to the first round of queries, the set $R'$ of points $x$ such that $r_x = 1$ and all for all queries $(x,y)$, $u_x \neq u_y$ is still a $(1-o(1))$ fraction of the items with $r_x = 1$  -- both because of the remaining randomness on $u$'s, and because the algorithm mostly hits points with $r_x = 1$ with tiny probability. 
Further, $a$ and $b$ are in this set, and the argument about the third round mimics that of the second round in the known $k$ case.
%\textcolor{blue}{aditi: our deterministic lower bound is now more ``interesting" than before so it might be worth give a short overview here?}
%
% Hence, to distinguish between SINGLETONS and DOUBLETON in a last round, the algorithm must query all pairs in $R'$: this exceeds largely the query budget $O(nk)$. 
%The conclusion is therefore that any algorithm with optimal query complexity has to use at least $4$ rounds, which shows that our algorithm is tight.

\subsection{Related Works}\label{sec:related-work}
{\em Partition learning and round complexity.}
The works most relevant to ours is the line of works on learning partitions from pairwise oracle access.  Reyzin and Srivastava~\cite{RS07} studied the equivalent problem of learning connected components of a hidden graph from membership queries, and Liu and Mukherjee~\cite{LM22} proved tight adaptive query bounds for learning graph partitions.  
As mentioned earlier, Black, Mazumdar, and Saha~\cite{BMS25} initiate the systematic study of round complexity for this problem in the known $|\calP|$ setting; to our knowledge, our work
is the first to consider the round/query tradeoff question in the unknown $|\calP|$ setting. Interestingly, some papers, such as~\cite{MS17a,MS17b} do consider the learning problems in the unknown $|\calP|$ case, but they do not study the tradeoff question. 
The partition learning has also been studied in other query models. We already mentioned the subset-query algorithms~\cite{BLMS24} which, given a subset $S$, tells how many parts $S$ hits.
This is the weak subset query;\cite{BLMS24} also study the strong subset query model where one gets the sub-partition of $S$ as well. Interestingly, this doesn't give too much more power.
The weak subset query is also called a {\em rank}-query by~\cite{CL24} (since it is the rank of $S$ in the simple partition matroid induced by $\calP$); that paper studied the query complexity 
of recovering $\calP$ with no restriction on the size of the subsets, and gave an $O(n)$-query deterministic algorithm, and this query complexity is information-theoretically optimal. 
Finally, several works study partition or clustering recovery when same-cluster queries may be noisy or faulty~\cite{MS17a,MS17b,HMMP19,BCBLP20,DPMT22,DMT24}.  These results are complementary to ours: they address robustness to erroneous answers, while our oracle is noiseless and we focus on the tradeoff between query complexity and adaptive rounds.
%\smallskip

\noindent
{\em Broader query-access models.}
Partition learning is also part of a broader literature on reconstructing hidden combinatorial objects from indirect queries.  In graph and hypergraph reconstruction, additive queries ask for the number or total weight of hidden edges contained in a queried subset of vertices; this model was studied for graphs by Grebinski and Kucherov~\cite{GK00} and Mazzawi~\cite{M10}, and for hypergraphs by Bshouty and Mazzawi~\cite{BM10}. More recently, this question was studied~\cite{CL26} in the CUT query model as well, motivated by relation to submodular function minimization. 
Another well-studied model is the edge-detecting oracle, where a query set reveals whether it contains at least one hidden edge.  Alon et al.~\cite{ABKRS04} studied learning hidden matchings, while Angluin and Chen studied learning hidden graphs and hypergraphs~\cite{AC06,AC08}; subsequent work sharpened query and round tradeoffs for graph learning~\cite{AB19} and low-degree hypergraph learning, including hypermatchings~\cite{BHW22}, and specializations to Erdos-Renyi hypergraphs~\cite{ART25}.
These models are not identical to PAIR queries, but they share the same basic theme: understanding how query complexity and adaptivity interact when reconstructing an unknown discrete structure. \smallskip
%
%
%\noindentr
%{\em Motivation and applications.}
%As mentioned earlier, PAIR queries naturally arise in clustering and database applications.  
%In active clustering, such queries are usually called same-cluster or same-set queries: the
%algorithm may ask whether two objects belong to the same target cluster
%\cite{AKBD16,MT16,MS17a,MS17b,MP17}.  A closely related motivation comes from
%crowd-sourced entity resolution in databases, where the goal is to partition records
%according to the real-world entities they represent.  In that setting, a natural
%crowd question asks whether two records refer to the same entity.  Whang, Lofgren,
%and Garcia-Molina~\cite{WLG13} study how to select such crowd questions for entity
%resolution; Wang, Li, Kraska, Franklin, and Feng~\cite{WLKFF13} study crowdsourced
%joins while exploiting the transitivity of the matching relation; and Mazumdar and
%Saha~\cite{MS17c} give a theoretical analysis of heuristics for crowdsourcedr entity
%resolution.  These works motivate the two resources studied here: the number of
%queries corresponds to cost, while the number of adaptive rounds corresponds to
%latency.\smallskip

\subsection{Preliminaries}

The formal problem we consider is the following. 
An oracle commits to a partition $\calP$ of $\{1, ..., n\}$. Our algorithm is allowed to query the algorithm in adaptive rounds: at round $t$, the algorithm can select a set of PAIR queries $\{(x_i, y_i)\}$.
For query $(x_i, y_i)$, the oracle answers $Q(x_i, y_i) = $YES if $x_i, y_i$ are in the same part of the partition, NO otherwise.  

Note that the queries at round $t$ may depend on the queries and answers of previous rounds; however, the partition is fixed at the beginning, before any query are done. This partition may be chosen based on the distribution of queries, but not on the specific random choices -- i.e., the oracle knows the code of the algorithm, but not the random seed.

% In the problem with randomization we consider, we allow the algorithm to be randomized but the partition itself is fixed deterministically. More specifically, we assume that the oracle answering queries has access to the query distribution of the algorithm, and commits to a partition before any query are made by the algorithm. 

Because of the clustering motivation of our problem, we will interchangeably use ``parts" and ``clusters": the goal of the algorithm is to recover an unknown partition, or equivalently an unknown clustering.

\section{Algorithms for Pair Queries}

\subsection{Known Number of Parts}
We begin with proving the following.

\begin{theorem}\label{thm:upper-bound-known-k}
	Given an upper bound $k$ on the size of a partition $\calP$, there is a $3$-round randomized algorithm that learns $\calP$ and makes $O(nk\log n)$ PAIR queries whp.
\end{theorem}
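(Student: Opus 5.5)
The plan follows the three-round scheme sketched in \Cref{sec:our-techniques}: sample, classify the parts of large size, then brute-force the rest. Set $m := \ceil{c\sqrt{nk}\,\ln n}$ for a suitable constant $c$, and let $S\subseteq U$ be a uniformly random subset of $m$ elements (or $m$ independent uniform samples). If $m\ge n$ we simply query all $\binom{n}{2}\le n\cdot m = O(nk\log n)$ pairs and stop, so assume $m<n$.

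\textbf{Round 1.} Query all pairs inside $S$. This costs $\binom{m}{2}=O(nk\log^2 n)$ queries. That is a $\log n$ factor too many, so we rebalance the thresholds. Take $m:=\ceil{c\sqrt{nk\ln n}}$ and call a part \emph{large} if it has size at least $\tau:=\sqrt{n\ln n/k}$. A fixed large part is missed by all $m$ samples with probability at most
\[
(1-\tau/n)^m\le \exp(-m\tau/n)=\exp(-c\ln n)=n^{-c}.
\]
A union bound over the at most $k\le n$ parts shows that whp $S$ hits every large part. Round 1 then costs $O(m^2)=O(nk\log n)$ queries, and its answers give the restriction of $\calP$ to $S$.

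\textbf{Round 2.} Choose one representative from each part of $\calP|_S$. There are at most $k$ representatives, since $\calP$ has at most $k$ parts. Query each representative against every element of $U$, at a cost of at most $nk$ queries. This reveals, for every part that meets $S$, the entire part. Whp this includes every large part.

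\textbf{Round 3.} Let $L$ be the set of elements not yet classified. Whp each such element lies in a part of size less than $\tau$, and there are at most $k$ such parts, so $|L|<k\tau=\sqrt{nk\ln n}$. Query all pairs inside $L$, at a cost of at most $|L|^2\le nk\ln n$ queries. This recovers the parts among $L$, and hence all of $\calP$.

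The total is $O(nk\log n)$ queries whp. To make the bound hold whp rather than only in expectation, the algorithm can abort round 3 whenever $|L|$ exceeds $\sqrt{nk\ln n}$; this happens only on the failure event. Correctness on success is deterministic, because every element either lies in a part that meets $S$ or lies in $L$. The only real subtlety is the threshold choice: $\tau$ must balance round 1 ($m^2$) against round 3 ($(k\tau)^2$), while also keeping $m\tau/n=\Omega(\log n)$ for the net property. The choice $\tau=\sqrt{n\ln n/k}$ makes both rounds $O(nk\log n)$.
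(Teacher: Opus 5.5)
Your proposal is correct and is essentially the paper's proof of \Cref{thm:upper-bound-known-k}. It uses the same sample size $\Theta(\sqrt{nk\log n})$ and the same threshold $\sqrt{n\log n/k}$. Round 2 queries at most $k$ representatives against everything, and round 3 queries all pairs among fewer than $k\tau$ leftover points. The only difference is that you bound the miss probability directly by $(1-\tau/n)^m\le n^{-c}$, whereas the paper splits $R$ into $10\log n$ groups and applies \Cref{fact:random} to each group.

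Two small clean-ups. Delete the abandoned first choice $m=c\sqrt{nk}\ln n$, which would give $O(nk\log^2 n)$. Also drop the round-3 abort: the query bound already holds whp without it, and aborting only trades guaranteed correctness for a worst-case query bound.
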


The base of our algorithm is the following simple observation:
\begin{fact}\label{fact:random}
	Let $S \subset [n]$ be a (multi-)subset of $T>0$ random samples of  $U$, chosen independently and uniformly (with replacement).
	Let $P$ be a cluster of $\calP$ with size at least $n/T$. Then, $\Pr[S\cap P \neq \emptyset] > \frac{1}{2}$.
\end{fact}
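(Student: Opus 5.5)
The plan is to bound the probability of the complementary event directly: compute the probability that none of the $T$ samples lands in $P$, and then use $1-x\le e^{-x}$. The statement is about \Cref{fact:random}, not \Cref{thm:upper-bound-known-k}.

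Each of the $T$ samples is an independent uniform element of $U$. A single sample lies in $P$ with probability $|P|/n \ge 1/T$. By independence,
\[
\Pr[S\cap P=\emptyset] = \left(1-\frac{|P|}{n}\right)^{T} \le \left(1-\frac{1}{T}\right)^{T}.
\]

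Applying $1-x\le e^{-x}$ gives
\[
\left(1-\frac{1}{T}\right)^{T} \le e^{-1} < \frac12,
\]
so $\Pr[S\cap P\neq\emptyset] > 1-e^{-1} > \frac12$.

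Two details need care, though neither is a real obstacle. If $T$ is not an integer, I interpret the number of samples as $\lceil T\rceil$. Since $\lceil T\rceil \ge T$, the exponent only increases, and the bound still holds. The strict inequality is fine because $e^{-1}\approx 0.368 < 1/2$; this also covers the edge case $|P|=n$, where the miss probability is $0$.

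For the later algorithm, I will note that repeating the sampling $O(\log n)$ times (equivalently, taking $O(T\log n)$ samples) drives the failure probability down to $1/\mathrm{poly}(n)$. A union bound over the at most $n$ parts then gives a high-probability guarantee, which is how the fact will feed into \Cref{thm:upper-bound-known-k}.
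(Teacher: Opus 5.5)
Your proof is correct and is essentially the paper's argument: both use independence of the $T$ samples, each landing in $P$ with probability at least $1/T$, to bound the miss probability by $(1-1/T)^T \le e^{-1} < \tfrac12$. Your extra remarks, on non-integer $T$ and on boosting to high probability, are harmless and not needed for the fact itself.
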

\begin{proof}
	Consider the $T$ independent elements of $S$ (with possible repetition) in the order they are picked to be $s_1, \ldots, s_T$.
	For any $i$, the probability $\Pr[s_i \in P] \geq \frac{|P|}{n} \geq \frac{1}{T}$. Therefore, \[\Pr[S\cap P = \emptyset] = \Pr[\bigwedge_{i=1}^T \{s_i \notin P\}] \leq \left(1 - \frac{1}{T}\right)^T \leq \frac{1}{e} ~~\Rightarrow~~ \Pr[S\cap P \neq \emptyset] > \frac{1}{2} \qedhere\]
\end{proof}
%
%We then build on the algorithm and \cref{thm:bms} to generalize the result to weak subset queries, and show an algorithm with 3 rounds and $\Ot(\frac{nk}{s^2})$ weak subset queries.
%
%\subsection{Pairwise queries}
In the first round, the algorithm identifies one representative for each cluster with size at least $\sqrt{n\log n/k}$; call such clusters large. This is done by sampling $O(\sqrt{nk\log n})$ points and querying all pairs. The second round identifies all these large clusters, by querying the representatives and all other points. The last round queries all unclassified points by querying all unclassified pairs.

\begin{algorithm}
\caption{Learning $k$ Partitions}\label{alg:known}
\begin{algorithmic}
\State Let $R$ be a random subset of $[n]$ of size $10 \sqrt{nk \log n}$.
\State \textbf{Round 1}: For all pairs $x,y$ in $R$, query $Q(x,y)$.
\State Select a maximal set of points $S$ in $R$ that are in different clusters. 
\State \textbf{Round 2}: For all $x \in R$ and $y \in [n]$, query $Q(x, y)$.
\State For $y \in R$, denote $C_y$ the set of points in $y$'s cluster, as computed in round 2. 
\State Let $P' = [n] \setminus \cup_{y \in R} C_y$ be the not-classified points.
\State \textbf{Round 3}: Query $Q(x,y)$ for all pairs of points $x, y \in P'$.
% \State The queries of round 3 identify the clusters of all points in $P'$.
\end{algorithmic}
\end{algorithm}

First, we show in the next lemma that the algorithm is correct: 
\begin{lemma}\label{lem:known-correct}
    At the end of round $3$, the algorithm correctly knows all clusters with probability $1$.
\end{lemma}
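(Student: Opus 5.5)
The correctness claim is deterministic: it does not depend on how $R$ is sampled, only on what information the three rounds reveal. The plan is to show that every element of $[n]$ ends up with a correctly determined cluster. To do this, split $[n]$ into the classified points $\bigcup_{y\in R} C_y$ and the unclassified points $P'$.

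First, consider the classified points. In Round 2 the algorithm queries $Q(y,z)$ for every $y\in R$ and every $z\in[n]$. Hence $C_y=\{z : Q(y,z)=\text{YES}\}\cup\{y\}$ is exactly the cluster of $\calP$ containing $y$. So each $C_y$ is a genuine part of $\calP$, and every point in $\bigcup_{y\in R} C_y$ is assigned to its true part. Distinct representatives in the same cluster give identical sets $C_y$, so no conflict arises. The same holds if Round 2 uses only the representatives $S$, by maximality of $S$.

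Second, consider the unclassified points $P'$. Any part $P\in\calP$ that meets some $C_y$ equals that $C_y$. Therefore every part of $\calP$ lies either entirely inside $\bigcup_y C_y$ or entirely inside $P'$. In particular, the restriction of $\calP$ to $P'$ consists of whole parts of $\calP$. Round 3 queries every pair in $P'$, which reveals the full equivalence relation ``same part'' on $P'$, and hence exactly those parts.

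Combining the two cases, every part of $\calP$ is recovered exactly, regardless of the random choice of $R$, so the algorithm succeeds with probability $1$. There is no real obstacle in this argument; the only point requiring care is the observation that parts do not straddle the boundary between classified and unclassified points. The randomness of $R$ matters only for the query bound, namely for bounding $|P'|$, which is a separate statement.
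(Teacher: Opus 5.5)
Your proposal is correct and matches the paper's proof: every cluster meeting $R$ is fully recovered in Round 2, all remaining clusters lie entirely inside $P'$, and Round 3's all-pairs queries on $P'$ recover those, whatever $R$ is. Your remark that the argument still works when Round 2 queries only the representatives $S$ (by maximality of $S$) is a useful addition, since the paper's query-count analysis implicitly assumes that variant.
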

\begin{proof}
    At the end of round $2$, the algorithm correctly identifies all clusters that intersect $R$. All remaining clusters consist of points in $P'$: they are therefore identified in round $3$.
\end{proof}

Now, we can bound the number of queries made by the algorithm. First, we show that the first two rounds classify most of the points:
\begin{lemma}\label{lem:rem_points}
Let $P'$ be as in the algorithm be the set of points not classified after round $2$. With high probability, $|P'| \leq \sqrt{nk \log n}$.
\end{lemma}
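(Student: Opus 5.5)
Call a cluster \emph{large} if its size is at least $\tau := \sqrt{n\log n/k}$, and \emph{small} otherwise. The plan has two steps. First, with high probability every large cluster intersects $R$. Second, on that event $P'$ contains only points of small clusters, and there are few of those.

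For the first step, fix a large cluster $P$ with $|P|\ge \tau$. View $R$ as $T = 10\sqrt{nk\log n}$ independent uniform samples; sampling without replacement only increases the chance of hitting $P$. Applying \Cref{fact:random} once is not enough, because it gives only constant success probability. Instead we use the computation inside its proof directly:
\[
\Pr[R\cap P=\emptyset] \le \left(1-\tfrac{|P|}{n}\right)^{T} \le \exp\!\left(-\tfrac{\tau T}{n}\right) = \exp\!\left(-\tfrac{\sqrt{n\log n/k}\cdot 10\sqrt{nk\log n}}{n}\right) = e^{-10\log n} \le n^{-10}.
\]
Equivalently, we can split $R$ into $\Theta(\log n)$ blocks, each of size at least $n/\tau$, and apply \Cref{fact:random} to each block independently. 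There are at most $k\le n$ clusters, so a union bound shows that with probability at least $1-n^{-9}$ every large cluster meets $R$.

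For the second step, suppose a cluster meets $R$ at some point $y\in R$. Round 2 queries $y$ against all of $[n]$, so $C_y$ is exactly that cluster, and all of its points are removed from $P'$. Hence on the good event $P'$ is contained in the union of the small clusters. There are at most $k$ clusters, each of size less than $\tau$, so
\[
|P'| < k\tau = k\sqrt{n\log n/k} = \sqrt{nk\log n}.
\]
This gives the lemma with high probability.

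The only step that needs care is getting high probability rather than constant probability from \Cref{fact:random}. This is where the factor $\log n$ in the sample size is used, and the direct exponential bound above handles it. Everything else is deterministic given the hitting event.
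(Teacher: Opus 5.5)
Your proof is correct and follows the paper's argument: show that every cluster of size at least $\sqrt{n\log n/k}$ meets $R$ with probability $1-n^{-10}$, then observe that round 2 recovers each such cluster. This leaves $P'$ inside at most $k$ clusters of size below $\sqrt{n\log n/k}$. The paper obtains the hitting bound by splitting $R$ into $10\log n$ blocks and applying \Cref{fact:random} to each block, which is the alternative you mention; your direct bound $(1-|P|/n)^T \le n^{-10}$ is the same computation and handles sampling without replacement a little more cleanly.
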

\begin{proof}
    Let $C \in \calP$ be a cluster with size at least $\sqrt{\frac{n\log n}{k}}$. 
    Partition $R$ into $10 \log n$ many sets of size $\sqrt{\frac{nk}{\log n}}$:  \Cref{fact:random} ensures that $C$ intersects each part with constant probability. Hence, $C$ intersects with at least one of them with high probability $1-1/n^{10}$: let $y$ be a point in this intersection. 

    At round $2$, the algorithms queries pairs $x, y$ for all $x \in [n]$: hence, the set $C_y$ is equal to $C$.

    Hence, with high probability, in $P'$ there are only points from clusters with size less than $\sqrt{n\log n/k}$. There are at most $k$ such clusters, so $|P'| \leq \sqrt{nk \log n}$. 
\end{proof}

\begin{lemma}
    With high probability, the number of queries made by~\Cref{alg:known} is $O(nk \log n)$.
\end{lemma}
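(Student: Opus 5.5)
The plan is to bound the number of queries made in each of the three rounds separately and then add the three bounds. Only round $3$ involves randomness, and there I will invoke \Cref{lem:rem_points}. Rounds $1$ and $2$ are bounded deterministically.

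\textbf{Round 1.} The algorithm queries all pairs inside $R$. This costs at most $|R|^2 = 100\, nk\log n = O(nk\log n)$ queries. If $10\sqrt{nk\log n} > n$, then $R$ is simply all of $[n]$, and the cost is at most $n^2 \le 100\, nk\log n$, so the bound still holds.

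\textbf{Round 2.} Here I read the intended round-$2$ queries as being made from the representatives $S$, i.e.\ $Q(x,y)$ for $x\in S$ and $y\in[n]$. This is what the preceding description says: one representative per cluster found in $R$. It also suffices for the correctness argument of \Cref{lem:known-correct} and \Cref{lem:rem_points}, because every point of $R$ lies in the cluster of some representative in $S$, so $\bigcup_{y\in S} C_y = \bigcup_{y\in R} C_y$.

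Since the points of $S$ lie in pairwise distinct clusters and $\calP$ has at most $k$ parts, $|S|\le k$. Round $2$ therefore costs at most $nk$ queries.

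\textbf{Round 3.} The algorithm queries all pairs inside $P'$, which costs at most $|P'|^2$ queries. By \Cref{lem:rem_points}, with high probability $|P'|\le\sqrt{nk\log n}$, so round $3$ costs at most $nk\log n$ queries.

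Summing the three rounds gives $O(nk\log n)$ queries, with high probability. The failure probability is the one from \Cref{lem:rem_points}: a union bound over the at most $k\le n$ large clusters, each missed with probability at most $n^{-10}$.

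The only real subtlety is round $2$. If one literally queried every $x\in R$ against all of $[n]$, the cost would be $n|R| = \Theta(n^{3/2}\sqrt{k\log n})$, which exceeds $nk\log n$ whenever $k<n/\log n$. So the step to get right is restricting round $2$ to the representatives $S$ and using $|S|\le k$. Everything else follows directly from the earlier lemma.
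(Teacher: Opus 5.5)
Your proposal is correct and matches the paper's proof: round 1 costs $|R|^2 = O(nk\log n)$, round 2 costs at most $nk$ because there are at most $k$ representatives, and round 3 costs $|P'|^2 \le nk\log n$ with high probability by \Cref{lem:rem_points}. Reading round 2 as querying only from the representatives $S$ is exactly what the paper's proof assumes. You are also right that the literal pseudocode of \Cref{alg:known}, which queries all of $R$ against $[n]$, contains a slip and would cost $\Theta(n^{3/2}\sqrt{k\log n})$ queries.
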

\begin{proof}
    Round 1 uses $O\left((\sqrt{nk \log n})^2\right)$ queries. Round 2 identifies at most $k$ representatives, hence makes at most $nk$ queries. 
    \Cref{lem:rem_points} ensures that, with high probability, $|P'| \leq \sqrt{nk \log n}$: the number of queries made in round $3$ is therefore $O(nk\log n)$. Hence, in total there are $O(nk \log n)$ queries whp.
\end{proof}

\subsection{Unknown Number of Parts}
In this section, we consider the case when the number of parts is unknown. As mentioned in the introduction, BMS~\cite{BMS25} only consider the known $k$ case.
In~\Cref{sec:det-alg-unknown-k}, we consider the situation of deterministic algorithms and how the picture changes drastically when $|\calP|$ is unknown to the algorithm.
In this section, however, we 
%As stated in the introduction, when $k$ is unknown the round-complexity of query-optimal deterministic algorithms becomes $\Theta(\log n/\log\log n)$ rather than $\Theta(\log\log n)$ as proved by~\cite{BMS25}. We discuss this in~\Cref{sec:app-det-unknown}. \deepc{this is erroneous. needs to be revisited.}
%In this section, we 
focus on randomized algorithms and show that the round complexity increases by $1$. In particular, we
prove the following theorem.

\begin{theorem}\label{thm:upper-bound-unknown-k}
	There is a $4$-round randomized algorithm which learns $\calP$
	and with high probability makes $O(n|\calP|\log^2 n)$ many PAIR queries. This algorithm has no knowledge about $|\calP|$.
\end{theorem}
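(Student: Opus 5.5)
The plan is to use one extra round, round 0, to compute a lower bound $k'$ on $k:=|\calP|$ with $O(n\log^2 n)$ queries, and then run \Cref{alg:known} with $k'$ in rounds 2--4. As in the techniques overview, we only need two properties of $k'$. First, $k' = O(k\log n)$, so rounds 2 and 3 cost $O(nk\log^2 n)$. Second, the set of elements lying in parts of size $\le \sqrt{n\log n/k'}$ has size $\Ot(\sqrt{nk})$, so the final all-pairs round costs $\Ot(nk)$.

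For each power of two $s\le n$, let $U_s$ be the union of parts of size in $[s,2s)$. The first round proceeds as follows.
\begin{itemize}
\item For each $s$, sample a set $R_s$ of $\Theta(s\log n)$ uniform elements.
\item For each $x\in R_s$, query $x$ against $m_s=\Theta((n/s)\log n)$ uniform random elements.
\item Estimate $|C(x)|$ by $\hat c(x) = n\cdot(\#\text{YES})/m_s$.
\end{itemize}
The cost per scale is $O(s\log n\cdot (n/s)\log n)=O(n\log^2 n)$. Summing over the $\log n$ scales gives $O(n\log^3 n)$, which is one $\log$ too many, so I will instead sample $R_s$ of size $\Theta(s)$. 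Hitting $U_s$ only needs constant probability boosted by a $\log n$ factor in the sample, and the estimate only needs $m_s=\Theta((n/s)\log n)$. This gives $O(n\log n)$ per scale and $O(n\log^2 n)\le O(n|\calP|\log^2 n)$ in total.

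By Chernoff, whp every $x$ in every $R_s$ with $|C(x)|\ge s/2$ has $\hat c(x)$ within a factor $2$ of $|C(x)|$, and every $x$ with $|C(x)|<s/8$ has $\hat c(x)<s/4$. Define $s^*$ as the smallest $s$ for which some $x\in R_s$ has $\hat c(x)\in[s/2,4s)$, meaning $x$ is certified to be in a part of size $\Theta(s)$. Set $k' = n/(c\, {s^*}^2)$ for a suitable constant $c$.

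To show $k'$ is a lower bound, argue as follows. At scale $s^*$, a sampled $x$ lying in a part of size $\Theta(s^*)$ occurs with non-negligible probability only if $|U_{\Theta(s^*)}|\gtrsim n/(s^*\log n)$. Taking a union bound over scales, whp we never certify a scale $s$ with $|U_{\Theta(s)}|\le n/(s\log n)$ scaled suitably, because a size-$O(s)$ sample misses a set of density $1/(s\log n)$ except with probability $O(1/\log n)$. This failure probability is not high-probability, so I will instead sample $R_s$ of size $\Theta(s)$ and require a hit only for sets of size $\ge n\log n/s$. Then a hit at $s^*$ together with the non-hit guarantee implies $|U_{s}|\lesssim n\log n/s$ at all smaller scales $s$, whp, by a Chernoff bound on the number of hits in $U_s$. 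Conversely, if the scale $s^*$ was certified, the part of $x$ has size $\Theta(s^*)$, which alone does not bound $k$. The reason $k'\lesssim k\log n$ holds is that at the first scale with $|U_s|\ge n\log n/s$ we get a hit whp, so $s^*\le$ that scale. Hence $|U_{s^*}|$ is of the order $n/s^*$ up to logs, containing $\gtrsim n/{s^*}^2$ parts up to logs, so $k\gtrsim n/({s^*}^2\log n)$.

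For small parts, every $s<s^*$ had no hit, so whp $|U_s|\le n\log n/s$. Also $|U_s|\le 2ks$ trivially. Multiplying gives $|U_s|\le \sqrt{2nk\log n}$, and summing over $O(\log n)$ scales bounds the elements in parts of size $<s^*$ by $O(\sqrt{nk}\log^{1.5}n)$. Rounds 2--4 then run \Cref{alg:known} with $k'$, with its threshold chosen as $s^*$ rather than $\sqrt{n\log n/k'}$, i.e.\ sampling $\Theta((n/s^*)\log n)$ points. By \Cref{lem:rem_points}'s argument, all parts of size $\ge s^*$ are found whp, and the leftover set $P'$ has size $O(\sqrt{nk}\log^{1.5} n)$. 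The cost breakdown is:
\begin{itemize}
\item Round 2 costs $O((n/s^*)^2\log^2 n)=O(nk'\log^2 n)$.
\item Round 3 uses at most $k$ representatives and costs $nk$.
\item Round 4 costs $|P'|^2$.
\end{itemize}
Correctness holds with probability $1$ as in \Cref{lem:known-correct}.

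The main obstacle is the bookkeeping of logarithms: making the hit threshold at each scale whp while keeping total first-round cost $O(n\log^2 n)$, and keeping $|P'|^2 = O(nk\log^2 n)$ rather than $\log^3$. I would first try to calibrate the threshold $|U_s|\ge n\log n/s$ together with $m_s=\Theta((n/s)\log n)$. If the leftover bound comes out as $\log^3$, I would sum the $|U_s|$ bounds more carefully, for instance by using $|U_s|\le 2ks$ geometrically for small $s$, to shave one logarithm.
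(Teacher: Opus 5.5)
Your proof has a genuine gap at the step meant to show that $k'$ does not overshoot $k$, namely that certifying scale $s^*$ forces $k\gtrsim n/{s^*}^2$. With $|R_s|=\Theta(s)$ and the rule ``$s$ is certified if \emph{some} $x\in R_s$ has $\hat c(x)\in[s/2,4s)$'', this fails with constant probability. Let $\calP$ consist of one giant part and one part of size $s_0\approx\sqrt{n/L}$ (a power of two), so $k=2$. The sample $R_{s_0}$ hits the small part with probability $\Theta(s_0^2/n)=\Theta(1/L)$. That point's estimate lies in $[s_0/2,4s_0)$, so $s^*\le s_0$ and $k'=\Omega(L)$. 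Your round 2 then costs $\Theta((n/s^*)^2\log^2 n)=\Omega(nL\log^2 n)$. Whatever constant you put in $O(nk\log^2 n)$, taking $L$ a large enough constant multiple of it makes the bound fail with probability $\Omega(1/L)$, independent of $n$. So the bound does not hold whp.

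You noticed that a single hit is not a high-probability event. However, your remedy (guaranteeing a hit only when $|U_s|\ge n\log n/s$) strengthens only the other direction, the one used to bound $|U_s|$ for $s<s^*$. The sentence ``$s^*$ is at most the first scale $\bar s$ with $|U_{\bar s}|\ge n\log n/\bar s$, hence $|U_{s^*}|$ is of order $n/s^*$'' is a non sequitur. The inequality $s^*\le\bar s$ gives $k'\ge n/(c\bar s^2)$, a lower bound on $k'$, not the upper bound you need.

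The missing idea is to certify a scale only on a \emph{count} of $\Theta(\log n)$ witnesses, which is what the paper does. It samples $160\cdot 2^i\log n$ points and spends only $n/2^i$ queries on each, using the constant-probability ``exactly one YES'' test of \Cref{lem:unknown-facts}. It calls $i$ useful only if at least $5\log n$ points succeed. Chernoff bounds then give both directions of \Cref{lem:useful-cor} whp, and the second direction yields $k^*\le k$ (\Cref{cor:kstar-is-lb}).

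Your estimator can be repaired the same way within your budget. Keep $|R_s|=Cs$ and $m_s=\Theta((n/s)\log n)$, but certify $s$ only if at least $(C/2)\log n$ points of $R_s$ have $\hat c(x)\in[s/2,4s)$. Whp only points with $|C(x)|\in[s/8,8s)$ can pass, so certification implies these parts cover $\Omega(n\log n/s)$ elements. Hence $k=\Omega(n\log n/{s^*}^2)$, which gives $k'\le k$ and makes your round 2 cost $O(nk\log n)$. Note that the $k'=O(k\log n)$ you aimed for would only have given $O(nk\log^3 n)$ for your modified round 2. Your ``non-hit'' bound for $s<s^*$ survives this change. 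Your geometric summation of $\min(2ks,O(n\log n/s))$ over $s<s^*$ does give $|P'|=O(\sqrt{nk\log n})$, so the final round is fine.
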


Recall from the introduction that in the first round the algorithm can make only $\Ot(n)$ queries since it doesn't know $n$.
Our algorithm computes a {\em lower bound} $k^*$ in the first round, and then simply applies the $3$-round algorithm (\Cref{alg:known}) for known $k$ in rounds 2,3, and 4.
Crucially, our lower bound $k^*$ may not be accurate, that is, it is possible that $k^* \ll k$. In such cases, however, we argue that the 
number of pairs queried in the last round of~\Cref{alg:known} (which is Round 4 of this algorithm) remains $\Ot(\sqrt{nk})$, for the true value of $k$, even though the two previous rounds used only $\Ot(nk^*) \ll nk$ queries.

The lower bound is based on the following idea. Suppose for some $s$, there are more than $n/(s \log n)$ points in clusters of size (roughly) $s$. In this case, a 
random subset $R$ of size $s\log n$ is likely to hit such a cluster. 
Fix some cluster $C$, and an item $x\in R \cap C$. Next, note that another sample $Q$
of  $n/s$ independently drawn points contains in expectation $|Q| \cdot \frac{|C|}{n} = \frac{|C|}{s}$ many ``neighbors'' (YES answer with a PAIR query) of $x$. 
Thus, if $C$ has size significantly less than $s$, $x$ has no neighbors in $Q$; if it has size roughly $s$, it contains a single neighbor with constant probability; and if it has size much bigger than $s$, it contains more than one neighbor.
Thus, by making PAIR queries between all points $s \log n$ points of $R$ and random samples $Q$ of size $n/s$ -- which is $n \log n$ queries-- it is possible to estimate how many points of $R$ are in clusters of size $s$; and, since $R$ is a random set, this allows us to estimate how many points in total are in clusters of size $s$.
% Thus, if $C$ has size (roughly) $s$, and say $x\in R \cap C$. Next, note that another sample $Q$
% of $n/s$ many independently drawn points contains in expectation $|Q| \cdot \frac{|C|}{n}$ many ``neighbors'' (YES answer with a PAIR query) of $x$ with constant probability, contains exactly one point of $C$. That is, $x$ has exactly one ``neighbor'' (YES answer with a PAIR query) 
% in $Q$. 
% So, it is possible with $s\log n$ sample to hit a cluster of size $s$, and a PAIR query for each sample to $n/s$ other points allow to identify whether 
Leveraging this idea, it is possible to identify $s^*$, the smallest $s$ such that there are more than $n/s$ points in clusters of size roughly $s$.
Since there are at most $k$ such clusters, this yields the bound $s^* k \geq \frac{n}{s^*}$ implying $k \geq \frac{n}{{s^*}^2}$. This is the ``lower bound'' $k^*$ that we use; we analyze in \Cref{cor:kstar-is-lb} that this is indeed a lower bound with high probability.

As alluded above, it is possible that $k^* \ll k$ in which case we lose our guarantee of~\Cref{alg:known}. In particular, the worry is that in the last round we may have too many not-classified points remaining since our choice of $k^*$ was small. We show this is however not a problem because the points left not-classified must be in clusters of some size $s$ such that fewer than $n/s$ points are in $s$-sized clusters (otherwise, our $k^*$ would've been higher). So if there are ``too many'' not-classified points remaining at the end of the penultimate round, then it must be because in reality there must be ``too many'' clusters as well. Thus, doing all pair queries is within budget. 

We now proceed to give details of the above idea; the formal algorithm is described in~\Cref{alg:unknown}. The high-level steps are the following.
%Hence, we can use $k^* := \frac{n}{{s^*}^2}$ as a lower bound for $k$; and using the same ideas as in the known-$k$ case, it is possible in two additional rounds to identify all clusters of size more than $s^*$, using $\Ot(nk^*) = \Ot(nk)$ queries. The number of points remaining may be upper-bounded by $k s^*$: however, this bound may be too weak -- it is not clear whether $(ks^*)^2 = \Ot(nk)$. Instead, we use the way $s^*$ is defined to show that, if $i$ is not successful, there cannot be many points in cluster of size $n/2^i$. This idea leads to our conclusion that the number of unclustered points is at most $\Ot(\sqrt{nk})$, and we can make all pairwise queries in the last round.

%The formal algorithm is described in \cref{alg:unknown}. The fact that it correctly classify all points is straightforward, as for the known-$k$ case. What is more difficult is to show that it does $\Ot(nk)$ queries. For this, we start by analyzing the guarantees of the first round, which proceeds as follows:
\begin{enumerate}
	\item Instead of working with every $s$, we work with powers of $2$.
	\item So, for each $i = 1, ..., \log n$, we think of $s = 2^i$ and sample $160 \cdot s \log n$ random points in $U$; call this set $R_i$.
	\item For each $x\in R_i$, we sample a set $Q_x$ of $n/s$ random points in $U$ and perform PAIR queries.
	\item We call $x$ {\em successful} if {\em exactly} one of the PAIR queries returns YES.
	% \item We call $i$ successful if any of the $x\in R_i$ are successful.
	\item We call $i$ {\em useful}  if at least $5\log n$ of the $x\in R_i$ are successful.
\end{enumerate}

\begin{algorithm}
	\caption{Learning $k$ Partitions with unknown $k$}\label{alg:unknown}
	\begin{algorithmic}
		\State \textbf{Round 1}. \Comment{We are being very loose with constant factors}
		\State $k^* \leftarrow 1$.
		\For{$i = 1, ..., \log n$}:
			\State Sample a set $R_i$ of $160 \cdot 2^i \log n$ random points from $U$ \Comment{$2^i$ is ``$s$'' in above discussion}
			\State Initialize $\Succ_i \leftarrow \emptyset$
			\For{$x\in R_i$}:
				\State Sample a set $Q_x$ of $n / 2^i$ random points from $U$ 
				\State Perform pair query $Q(x,y)$ for each $y\in Q_x$
				\If{exactly one query returns YES}: \Comment{this $x$ is {\em successful}}
					\State $\Succ_i \leftarrow \Succ_i \cup \{x\}$
				\EndIf
			\EndFor
			\If{$|\Succ_i| \geq 5\log n$}: \Comment{this is a {\em useful} $i$}
				\State Set $ k^* \leftarrow \frac{n}{4^{i}}\cdot \frac{1}{500 \log n}$.
				\State Break and exit the for-loop
			\EndIf
			%\State $i$ is successful is there is at least $5 \log n$ successful $x$ in $R_i$.
		% \State $i$ is successful is there is at least one successful $x \in R_i$.
		\EndFor
		%\State Set $i^* \leftarrow \min\left(\min_{i\in L} i, \log_2 \sqrt{n} \right)$.
		%\State Set $ k^* \leftarrow \frac{n}{2^{i^\star}}\cdot \frac{1}{2^{i^\star} \log n}$.
		\State {\bf Rounds 2, 3, 4}:
		\State Use~\Cref{alg:known} with the number of parts $\hat{k} \leftarrow 2000 k^*\log n$.
	\end{algorithmic}
\end{algorithm}

%\david{I don't like to have definition inside an algorithm, that's why I left this pseudo code -- but it is really redundant and the algorithm is simple, so I'm not sure how to do here}

\begin{lemma}\label{lem:unknown-facts}
	For any $x$, let $s(x)$ be the size of the cluster containing $x$. 
	For $i < \log (n/2)$ and $x \in R_i$, the probability that $x$ is successful, that is, in $\Succ_i$, is:
	\begin{itemize}
		\item at most $1/n^2$ if $s(x) > 3\cdot 2^{i} \log n$,
		\item at least $\frac{1}{16}$ if $s(x) \in [2^{i-1}, 2^i)$.
	\end{itemize}
	
	% In addition, if $u_i \geq n/2^i$, then there are $\frac{\log n}{e^2}$ successful $x \in R_i$ with probability at least $1-1/n^{10}$
\end{lemma}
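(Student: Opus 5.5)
Fix $x \in R_i$ and let $C$ be its cluster, so $s(x)=|C|$. Condition on $x$. The set $Q_x$ consists of $m = n/2^i$ independent uniform samples from $U$, so the number $Y$ of YES answers is binomial $\mathrm{Bin}(m,p)$. Here $p$ is the probability that a uniform point lies in $C$. Self-hits count as YES, or we exclude them, which only changes $p$ from $s(x)/n$ to $(s(x)-1)/n$; I will carry $p \approx s(x)/n$ and note that the off-by-one is harmless. Then $x$ is successful iff $Y=1$, and both bullets are statements about $\Pr[Y=1] = mp(1-p)^{m-1}$. The mean is $\mu = mp = s(x)/2^i$.

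\textbf{First bullet.} If $s(x) > 3\cdot 2^i\log n$, then $\mu > 3\log n$. I bound
\[
\Pr[Y=1] = \mu(1-p)^{m-1} \le \mu\, e^{-p(m-1)} \le \mu\, e^{-\mu+p}.
\]
The factor $e^{p}\le e$ is a constant. The function $\mu e^{-\mu}$ is decreasing for $\mu\ge 1$, so the bound is largest when $\mu$ is just above $3\log n$. This gives roughly $3e\log n\cdot e^{-3\log n}$. Since logs are base $2$, $e^{-3\log n}=n^{-3\log_2 e}\approx n^{-4.3}$, which is $\le 1/n^2$ for large $n$. If $\mu$ is very large (up to $n/2^i$), then $p$ is close to $1$ and $(1-p)^{m-1}$ is tiny; the same inequality still applies, so this case needs no separate treatment.

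\textbf{Second bullet.} If $s(x)\in[2^{i-1},2^i)$, then $\mu\in[1/2,1)$ and $p<2^i/n\le 1/2$. The hypothesis $i<\log(n/2)$ is used here: it gives $2^i < n/2$, so $p<1/2$ and $m\ge 2$. I use $1-p\ge e^{-2p}$ for $p\le 1/2$, which gives
\[
(1-p)^{m-1}\ge e^{-2pm} = e^{-2\mu}\ge e^{-2}.
\]
Hence $\Pr[Y=1]\ge \tfrac12 e^{-2}\approx 0.068 \ge 1/16$.

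\textbf{Off-by-one and the main check.} The main thing to watch is the off-by-one when $x$ may sample itself, together with the lower end $s(x)=2^{i-1}$ when $i$ is small. If self-queries are excluded, $p=(s(x)-1)/n$; for $i=1$ and $s(x)=1$ this gives $p=0$, which would break the lower bound. I will therefore adopt the convention that $Q(x,x)$ returns YES, as a PAIR query of $x$ with itself does. With this convention $p=s(x)/n$ exactly and the computations above go through verbatim.

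Integrality of $m=n/2^i$ only affects constants; rounding $m$ changes $\mu$ by a factor in $[1/2,2]$ in the worst case. The slack in both bounds absorbs this, provided I recheck the lower bound with $\mu\in[1/4,2)$: $\tfrac14 e^{-4}$ is too small. So I instead take $m=\lceil n/2^i\rceil$, which keeps $\mu\in[1/2,1+o(1))$, and the lower bound $\tfrac12 e^{-2(1+o(1))}\ge 1/16$ survives.
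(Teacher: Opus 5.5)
Your proof is correct and follows the paper's route. You write the success probability as the binomial point mass $mp(1-p)^{m-1}$, with the same implicit convention that $Q(x,x)$ answers YES. You then use $(1-p)^{m-1}\le e^{-p(m-1)}$ to make it tiny when $\mu>3\log n$, and lower-bound $(1-p)^{m-1}$ by a constant when $\mu\in[1/2,1)$.

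Your second bullet ($\mu\ge 1/2$ together with $1-p\ge e^{-2p}$) is more careful than the paper's, which claims $\mu\ge 1$ and plugs in the endpoint $2^{i-1}$ where $2^i$ is needed. The one thing to tidy is the rounding remark: $\mu\le 1+o(1)$ is not uniform in $i$. With $m=\lceil n/2^i\rceil$, however, you have $p(m-1)<s(x)/2^i<1$, which gives the $\tfrac12 e^{-2}$ bound directly.
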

\begin{proof}
	Fix such an $x$. For a random $y \in U$, the probability that the PAIR query between $x$ and $y$ returns YES is $\frac{s(x)}{n}$. Thus, as all the random samples are taken independently, %the expected number of YES between $x$ and $Q_x$ is $\frac{s(x)}{2^i}$, 
	the number of YES is Binomial variable with $\frac{s(x)}{n}$ success probability and $n_i := \frac{n}{2^i}$ random trials.
	The probability that there is exactly a single YES is therefore $n_i \cdot \frac{s(x)}{n} \cdot (1-\frac{s(x)}{n})^{n_i - 1}$ 
	
	In particular, if $n \geq s(x) > 3\cdot 2^{i} \log n$, this is at most 
	\[ n\left(1-\frac{s(x)}{n}\right)^{n_i - 1} \leq n\exp\left(- \frac{3\cdot  2^i \log n}{n} \cdot \frac{n}{2^i}\right) < n^{-2}.\]
	
	Similarly, if $s(x) \in [2^{i-1}, 2^i]$, then $1 \leq n_i \cdot \frac{s(x)}{n}$ and the probability is at least
	\begin{align*}
		\left(1-\frac{s(x)}{n}\right)^{n_i} &\geq \left(1-\frac{2^{i-1}}{n}\right)^{n/2^i}\\
		&\geq \frac{1}{16} 
	\end{align*}
	where we used that $x \rightarrow (1-x)^{2/x}$ is decreasing for $0 \leq x \leq 1/2$, and hence the term is greater than $1/16$ when $i \leq \log (n/2)$.
	%
	% For the last point of the statement, any $x \in R_i$ has probability $1/2^i$ to be in $U_i$: hence, since $|R_i| = 10 \cdot 2^i \log n$, the probability that none are in $U_i$ is $(1-1/2^i)^{|R_i|} \leq n^{-10}$.
\end{proof}

For $i = 1,2,\ldots, \log n$, define $U_i$ to be $x\in U$ with $s(x) \in [2^{i-1}, 2^i)$. Although we break the algorithm when we encounter any $i$ with $|\Succ_i| \geq 5\log n$, consider running 
the algorithm for all $i$ calling an $i$ {\em useful} if $|\Succ_i| \geq 5\log n$, and then setting $k^* := \frac{n}{4^i}\cdot \frac{1}{500 \log n}$ for the smallest useful $i$. If no $i$ is useful, then we set $k^* = 1$.

The above lemma implies the following as a corollary of~\Cref{lem:unknown-facts}.

\begin{lemma}\label{lem:useful-cor}
	With high probability, it holds that :
	\begin{itemize}
		\item if $|U_i| \geq n/2^i$, then $i$ is useful,
		\item if $i$ is useful, then $\left| \cup_{j : 2^j \leq 3\cdot 2^i\log n} U_j \right| \geq \frac{n}{160 \cdot 2^i}$.
	\end{itemize}
\end{lemma}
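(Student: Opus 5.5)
We prove each bullet for a fixed $i$ with failure probability $1/n^{c}$, then take a union bound over the $\log n$ values of $i$. Everything reduces to Chernoff bounds on the number of successful elements of $R_i$. The elements of $R_i$ are sampled independently and uniformly, and each $x\in R_i$ gets its own independent $Q_x$. Hence the indicators $\mathbf{1}[x \in \Succ_i]$, for $x\in R_i$, are i.i.d.\ Bernoulli with some parameter $p_i$, and $|\Succ_i|\sim \mathrm{Bin}(160\cdot 2^i\log n, p_i)$. The whole task is to bound $p_i$ from below (first bullet) and from above (contrapositive of the second bullet) using \Cref{lem:unknown-facts}.

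\textbf{First bullet.} Assume $|U_i|\ge n/2^i$. A uniform point lands in $U_i$ with probability at least $2^{-i}$. Conditioned on landing there, it is successful with probability at least $1/16$ by the second item of \Cref{lem:unknown-facts}. Thus $p_i \ge 2^{-i}/16$, and
\[
\Exp|\Succ_i| \ge 160\cdot 2^i\log n\cdot 2^{-i}/16 = 10\log n .
\]
A multiplicative Chernoff bound with deviation $1/2$ gives $\Pr[|\Succ_i|<5\log n]\le \exp(-10\log n/8)$, which is polynomially small; the constants can be loosened if necessary. This requires $i<\log(n/2)$ to invoke the lemma. For the top one or two values of $i$, $U_i$ consists of clusters of size $\Theta(n)$, and I would treat them directly: the success probability is still a constant, since $|Q_x| = n/2^i = O(1)$.

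\textbf{Second bullet.} I prove the contrapositive. Let $W=\bigcup_{j:2^j\le 3\cdot2^i\log n}U_j$ and suppose $|W|<\frac{n}{160\cdot 2^i}$. A point $x\notin W$ lies in some $U_j$ with $2^j>3\cdot 2^i\log n$. Since $s(x)\ge 2^{j-1}$, I need to check the threshold. The cleanest way is to redefine $W$ to include every $j$ with $2^{j-1}\le 3\cdot 2^i\log n$; this only loses a factor of $2$ in constants. Then every $x\notin W$ has $s(x)>3\cdot2^i\log n$, so it is successful with probability at most $1/n^2$ by the first item of \Cref{lem:unknown-facts}. Therefore
\[
p_i\le \frac{|W|}{n}+\frac{1}{n^2}<\frac{1}{160\cdot 2^i}+\frac{1}{n^2},
\qquad
\Exp|\Succ_i|\le \log n + o(1).
\]
The Chernoff bound $\Pr[X\ge t]\le 2^{-t}$ for $t\ge 6\Exp X$ gives $\Pr[|\Succ_i|\ge 5\log n]\le n^{-\Omega(1)}$. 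More precisely, use $\Pr[X\ge (1+\delta)\mu]\le (e/(1+\delta))^{(1+\delta)\mu}$ with $(1+\delta)\mu = 5\log n$, which yields $(e/5)^{5\log n}=n^{-\Omega(1)}$. So with high probability $i$ is not useful.

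\textbf{Main obstacle.} The main difficulty is bookkeeping rather than ideas. First, the threshold mismatch between ``$2^j\le 3\cdot2^i\log n$'' and ``$s(x)>3\cdot2^i\log n$'' must be aligned. Second, the edge cases $i\ge\log(n/2)$ fall outside \Cref{lem:unknown-facts}. Third, the constants must actually yield polynomially small failure probabilities; if they are too tight, I would weaken the stated constants slightly, as the paper admits it is ``very loose with constant factors''.
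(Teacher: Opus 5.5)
Your argument is correct, but it organizes the probability differently from the paper.

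The paper works in two stages. Chernoff first controls $|U_i\cap R_i|$ (resp.\ $|U_{\lesssim i}\cap R_i|$), and the successes of those points are then analyzed separately. For the second bullet, a union bound over $R_i$ shows that no point outside $U_{\lesssim i}$ is successful.

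You instead observe that the pairs $(x,Q_x)$ are i.i.d., so $|\Succ_i|$ is a single binomial, and you bound its per-draw parameter $p_i$ directly via \Cref{lem:unknown-facts}. This is cleaner and gives better concentration:
\begin{itemize}
\item In the first bullet you compare a mean of at least $10\log n$ against the threshold $5\log n$. The paper's second stage compares a mean of $6\log n$ against $5\log n$, which only yields failure probability about $\exp(-\log n/12)$.
\item In the second bullet you only need the success probability of large-cluster points to be $o(2^{-i})$, rather than the $o(1/|R_i|)$ a union bound requires.
\item You also handle the range $i\ge\log(n/2)$, which the paper's proof skips.
\end{itemize}
One small slip: your first Chernoff form, $\Pr[X\ge t]\le 2^{-t}$ for $t\ge 6\Exp X$, does not apply here, since $5\log n$ may lie below $6\Exp|\Succ_i|$. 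The form you then use is fine.

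You are right that $x\notin U_{\lesssim i}$ only guarantees $s(x)>1.5\cdot 2^i\log n$; the paper's assertion $s(x)\ge 3\cdot 2^i\log n$ is a slip. However, enlarging $W$ proves a weaker statement than the one displayed, and this is not free downstream. With the enlarged $W$, \Cref{cor:kstar-is-lb} would only give $k\ge n/(960\cdot 4^i\log n)$, which is below $k^*=n/(500\cdot 4^i\log n)$.

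The redefinition is unnecessary. Repeat the computation of \Cref{lem:unknown-facts} with $s(x)>1.5\cdot 2^i\log n$ (logs base $2$). This bounds the success probability by $e\cdot n\cdot \exp(-1.5\log n)=O(n^{-1.16})$. That adds at most $160\cdot 2^i\log n\cdot O(n^{-1.16})=o(1)$ to $\Exp|\Succ_i|$. So your Chernoff step goes through with $W$ exactly as stated in the lemma.
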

\begin{proof}
	First, fix $i$ such that $|U_i| \geq n / 2^i$. 
	Then, $|U_i \cap R_i|$ is a sum of $160 \cdot 2^i \log n$ independent Bernouilli trials, with success probability at least $|U_i|/n \geq 1/2^i$. Hence, Chernoff bounds ensure that with probability at least $1-1/n^3$, $|U_i \cap R_i| \geq 6 \cdot 16 \cdot \log n$; and \Cref{lem:unknown-facts} ensures that any point in this intersection will be successful with probability at least $1/16$. Since there are at least $6 \cdot 16 \cdot \log n$ such points, at least $5\log n$ of them are successful with high probability. Hence, $i$ is useful with high probability.

	For the converse, define $U_{\lesssim i} := \cup_{j : 2^j \leq 3\cdot 2^i\log n} U_j$ and suppose $|U_{\lesssim i}| \leq \frac{n}{160 \cdot 2^i} $.
	Then, $|U_{\lesssim i} \cap R_i|$ is a random variable with the same distribution as a sum of $160 \cdot 2^i \log n$ independent Bernouilli trials, each with success probability at most $\frac{1}{160 \cdot 2^i}$.
	Thus, Chernoff bounds ensure that with probability at least $1-1/n^2$, $|U_{\lesssim i} \cap R_i| < 3\log n$.
	In addition, a direct corollary of \Cref{lem:unknown-facts} is that, with probability at least $1-1/n^{2}$, no point in $R_i \setminus U_{\lesssim i}$ is successful (since they all have $s(x) \geq 3\cdot  2^i \log n$).  Hence, whp, there are at most $3\log n$ successful points in $R_i$, implying $i$ is not useful.
	%Therefore, if $|U_{\lesssim i}| \leq \frac{n}{160 \cdot 2^i \log n}$, then with high probability $i$ is not successful.
	Hence, by contraposition, whp, if $i$ is useful, then $|U_{\lesssim i}| \geq \frac{n}{160 \cdot 2^i}$.
\end{proof}

\begin{corollary}\label{cor:kstar-is-lb}
	With high probability, the $k^*$ returned by the algorithm satisfies $k^* \leq k$.
\end{corollary}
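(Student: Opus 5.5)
If no $i$ is useful, then $k^* = 1 \leq k$ trivially, so I will assume the smallest useful index $i$ exists. I will work on the high-probability event of \Cref{lem:useful-cor}, taking a union bound over the at most $\log n$ indices so that its second bullet holds for every $i$ at once. On this event, since $i$ is useful,
\[
\Bigl|\bigcup_{j : 2^j \leq 3\cdot 2^i\log n} U_j\Bigr| \geq \frac{n}{160\cdot 2^i}.
\]

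Next I will turn this point count into a lower bound on the number of clusters. Every point of $U_j$ lies in a cluster of size less than $2^j$. In the union above every such $j$ has $2^j \leq 3\cdot 2^i \log n$, so each of its points lies in a cluster of size less than $3\cdot 2^i\log n$. Each cluster is entirely contained in a single $U_j$. Therefore the number of distinct clusters meeting the union is at least
\[
\frac{n/(160\cdot 2^i)}{3\cdot 2^i\log n} = \frac{n}{480\cdot 4^i\log n} \geq \frac{n}{500\cdot 4^i \log n} = k^*.
\]
Since $k$ is at least the number of clusters, this gives $k \geq k^*$. The constant $500 \ge 480$ is exactly what makes the final inequality go through.

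I do not expect any step to be a real obstacle. The only points needing care are the union bound over $i$ and the convention $k^* = 1$ when no index is useful, which is trivially at most $k$.
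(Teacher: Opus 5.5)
Your proof is correct and follows the paper's argument: you apply the second bullet of \Cref{lem:useful-cor} to the useful index at which the algorithm stops, then divide the point count $n/(160\cdot 2^i)$ by the maximum cluster size $3\cdot 2^i\log n$ to get at least $n/(480\cdot 4^i\log n)\ge k^*$ clusters. The union bound over the $\log n$ indices and the default case $k^*=1$ are details the paper leaves implicit, and you handle both correctly.
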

\begin{proof}
	This follows from the second bullet point of~\Cref{lem:useful-cor}. If the algorithm returns $k^* = \frac{n}{4^i}\cdot \frac{1}{500\log n}$, 
	then, this iteration $i$ is useful and so the lemma implies, with high probability, $\left| \cup_{j : 2^j \leq 3\cdot 2^i\log n} U_j \right| \geq \frac{n}{160\cdot 2^i}$.
	Since each part in this union is at most size $3\cdot 2^i \log n$, this implies $k \geq \frac{n}{160\cdot 2^i}\cdot \frac{1}{3\cdot 2^i \log n} > \frac{n}{4^i}\cdot \frac{1}{500\log n} = k^*$.
\end{proof}

\begin{theorem}\label{thm:unknown-queries}
	\Cref{alg:unknown} performs $O(nk\log^2 n)$ many pair queries with high probability.
\end{theorem}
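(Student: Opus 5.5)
We bound the queries round by round and work on the high-probability event where \Cref{lem:useful-cor} and \Cref{cor:kstar-is-lb} hold for all $i$; a union bound over the $\log n$ values of $i$ handles this.

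\textbf{Round 1.} Iteration $i$ makes $|R_i|\cdot n/2^i = 160\cdot 2^i\log n\cdot n/2^i = O(n\log n)$ queries. Over at most $\log n$ iterations this is $O(n\log^2 n)$, which is $O(nk\log^2 n)$.

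\textbf{Rounds 2 and 3.} These are the first two rounds of \Cref{alg:known} run with $\hat k = 2000k^*\log n$. Round 2 makes $O(n\hat k\log n)$ queries on the sampled set. Round 3 makes at most $n$ queries per representative in $S$; the number of representatives is at most the true $k$, and also at most $|R|=O(\sqrt{n\hat k\log n})$. By \Cref{cor:kstar-is-lb}, $k^*\le k$, so $\hat k = O(k\log n)$. The cost of rounds 2 and 3 is therefore $O(nk\log^2 n)$.

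\textbf{Round 4 (the main step).} We must bound $|P'|$ by $O(\sqrt{nk}\log n)$, which is harder because $\hat k$ may be much smaller than $k$. Let $i^*$ be the smallest useful index. If none is useful, set $2^{i^*}=n$; then $k^*=1$ and the argument below still applies, with all $j$ treated as non-useful.

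First, by the argument of \Cref{lem:rem_points} run with $\hat k$, every cluster of size at least $\sqrt{n\log n/\hat k}$ is hit by $R$ whp and hence classified. By the choice of constants,
\[
\frac{n\log n}{\hat k} = \frac{n\cdot 500\cdot 4^{i^*}}{2000\, n} = \frac{4^{i^*}}{4},
\]
so $\sqrt{n\log n/\hat k} = 2^{i^*-1}$. Hence every point of $P'$ lies in a cluster of size less than $2^{i^*-1}$, i.e.\ $P'\subseteq \bigcup_{j<i^*} U_j$.

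Next, each such $j<i^*$ is not useful. By the contrapositive of the first bullet of \Cref{lem:useful-cor}, $|U_j| < n/2^j$. Since clusters in $U_j$ have size less than $2^j$ and there are at most $k$ of them, also $|U_j|\le k2^j$. Multiplying the two bounds gives $|U_j|^2 < nk$, so $|U_j|\le\sqrt{nk}$.

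Summing over the at most $\log n$ values of $j$ gives $|P'|\le \sqrt{nk}\log n$. Round 4 therefore makes at most $|P'|^2 = O(nk\log^2 n)$ queries.

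The delicate point is the exact matching between the threshold $\sqrt{n\log n/\hat k}$ and $2^{i^*-1}$. If the constants in \Cref{alg:known} make the threshold slightly larger, say up to $2^{i^*}$, I would include $j=i^*$ as one extra term. The bound $|U_{i^*}|\le \sqrt{nk}$ does not follow from non-usefulness, so I would instead apply the Round 1 hitting argument to $R_{i^*}$, or adjust the constant $2000$ so that the threshold falls below $2^{i^*-1}$. Summing the four rounds gives the claimed $O(nk\log^2 n)$ total.
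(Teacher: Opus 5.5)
Your outline is the paper's proof step for step. Round 2 is charged via $k^*\le k$. Round 4 is charged via $P'\subseteq\bigcup_{j<i^*}U_j$, combined with $|U_j|<n/2^j$ from non-usefulness and $|U_j|\le k2^j$. Summing per-level bounds instead of pigeonholing on the largest $U_{i'}$ makes no difference.

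The gap is exactly the step you flag, and your resolution of it is an arithmetic slip: you dropped the $\log n$ in $k^*=\frac{n}{500\cdot 4^{i^*}\log n}$. Correctly, $\hat k=2000k^*\log n=4n/4^{i^*}$. So $n\log n/\hat k=4^{i^*}\log n/4$, and the hitting threshold of \Cref{lem:rem_points} is $2^{i^*-1}\sqrt{\log n}$, not $2^{i^*-1}$. The paper's proof asserts the same equality $\sqrt{n\log n/\hat k}=2^{i-1}$, so you inherit the slip. As a result you only get $P'\subseteq\bigcup_{j<i^*+\frac12\log\log n}U_j$, and about $\frac12\log\log n$ levels $j\ge i^*$ are not covered by non-usefulness.

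Your fallbacks do not close this. The workable form of your first one runs the first-bullet argument at level $i^*-1$, not at $i^*$, which is useful. A point of $U_{i^*}$ succeeds at level $i^*-1$ with probability at least $1/16$. Hence $|U_{i^*}|<n/2^{i^*-1}$ and $|U_{i^*}|^2<2nk$, and the same works for $j=i^*+O(1)$. But for $j=i^*+t$ that success probability decays like $e^{-\Theta(2^t)}$, which is $e^{-\Theta(\sqrt{\log n})}$ at the top levels. So non-usefulness only yields $|U_j|^2\le nk\,e^{O(\sqrt{\log n})}$. Changing the constant $2000$ cannot absorb a $\sqrt{\log n}$ factor either.

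In fact the bound genuinely fails for the algorithm as written. Take one part of size $c=\sqrt n\log^2 n$ and one part holding everything else, so $k=2$. Then:
\begin{itemize}
\item Round 1 stops at $2^{i^*}=\Theta(c/\log\log n)$.
\item The Round-2 sample has size $20n\sqrt{\log n}/2^{i^*}$, so it misses the small part with probability $e^{-O(\sqrt{\log n}\log\log n)}$. This exceeds $n^{-\epsilon}$ for every fixed $\epsilon$.
\item In that event Round 4 makes $\binom c2=\Theta(n\log^4 n)=\omega(nk\log^2 n)$ queries.
\end{itemize}

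The clean repair is to run \Cref{alg:known} with $\hat k=2000k^*\log^2 n$. The threshold then becomes exactly $2^{i^*-1}$, and your Round 4 argument goes through verbatim. However, Round 2 then costs $O(n\hat k\log n)=O(nk\log^3 n)$, so this line of proof establishes $O(nk\log^3 n)$ rather than the stated $O(nk\log^2 n)$.
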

\begin{proof}
	In the first round, the algorithm performs $\Ot(n)$ queries. In the second and third round, we run the first two rounds of~\Cref{alg:known}.
	So, the second round makes $O(n\hat{k}\log n)$ many queries which, by~\Cref{cor:kstar-is-lb}, is $O(nk\log^2 n)$ with high probability.
	The third round (i.e., second round of~\Cref{alg:known}) never makes more than $O(nk)$ queries. What is remaining and is non-trivial is the number of queries made
	in the last round -- this is $|P'|^2$ where $P'$ is the set of not-classified points. To analyze this, we will use the first bullet point of~\Cref{lem:useful-cor}.
	
	First, as in the analysis of~\Cref{alg:known}, we notice that with high probability $P' \subseteq U$ which lie in clusters of size $< \sqrt{\frac{n\log n}{\hat{k}}} = 2^{i-1}$, where $i$ 
	is the {\em useful} iteration at which we break. If there is no such $i$, then let $i = \log n$. So, $|P'| \leq \sum_{i' < i} |U_{i'}|$, or more pertinently, 
	there is an $i' < i$ with $\frac{|P'|}{\log n} \leq |U_{i'}|$ since there are only $\log n$ indices. Recalling the definition of the set $U_i$'s, we know that $k \geq |U_{i'}|/2^{i'}$. Putting all this together we get the following upper bound on the number of not-classified points.
	\begin{equation}\label{eq:ub-on-p'}
		|P'| \leq |U_{i'}|\cdot \log n \leq k\cdot 2^{i'}\cdot \log n
	\end{equation}
	
	Now, the first bullet point of~\Cref{lem:useful-cor} implies that, even after union bounding over all possible $i$'s, 
	with high probability $|U_{i'}| < \frac{n}{2^{i'}}$. Again using $|P'| \leq \frac{|U_{i'}|}{\log n}$, we get another upper bound
	\begin{equation}\label{eq:ub-on-p'-2}
		|P'| \leq |U_{i'}|\cdot \log n < \frac{n}{2^{i'}}\cdot  \log n
	\end{equation}
	Multiplying the above two inequalities gives that, with high probability, the number of queries made in the final round, that is, $|P'|^2$ is at most $nk\cdot \log^2 n$. This
	completes the proof of the theorem.
\end{proof}

\section{Lower Bounds for Pair Queries}
As usual, we will use Yao's lemma to focus on deterministic algorithms which are promised a partition from a particular distribution. Therefore, our lower-bound instances consists of distribution against which any deterministic algorithm must do many PAIR queries.

\subsection{2-round lower bound for arbitrary known $k$}\label{sec:lower-bound-known-k}

Let $k$ be any integer dictating the upper bound on the number of parts, and define $n=kl$ for some large enough integer $l \geq 1$. Let $U$ be a universe on $n$ elements.
We define a distribution $\calD$ over partitions of $U$ with at most $2k+2$ parts\footnote{Note that it is natural that the lower bound holds for partitions with at least $3$ parts: for $2$ parts, the trivial deterministic algorithm described in introduction uses only $1$ rounds.} 
We show that any 2-round deterministic algorithm $\calA$ which makes\footnote{For reading convenience, we use the little-oh notation throughout. One should should think of this as $\leq n^{4/3}k^{2/3}/C$ for some large constant $C$ which will determine the mistake probability.} $o(n^{4/3}k^{2/3})$ 
%$\leq \frac{n^{4/3}k^{2/3}}{1000}$ 
queries will make a mistake with at least a constant probability. In a subsequent remark we see how to boost this probability.
The distribution is defined as follows.

\begin{enumerate}
    \item[(D1.)] For each $x\in U$ we pick $u_x$ uniformly at random from $[k]$, let $U_{i}=\{x\mid u_x=i\}$.
    \item[(D2.)] For each $x\in U$ let $r_x=1$ with probability $\frac{1}{l^{1/3}}$ and $R_{i}=\{x\mid u_x=i, r_x=1\}$ and $R=\{x\mid r_x=1\}$.
    \item[(D3.)] Sample $a,b$ uniformly at random from $R$.
    % (or think of this as follows: each vertex $x\in R$ has random variables $a_x,b_x$ associated with them, we pick $a,b$ and set the corresponding random variables to $1$; the other are 0.)
    %deepc: was this $a_x$ and $b_x$ really necessary?
    \item[(D4.)] 
    Now we are ready to return the partitions.
    
    With probability $0.5$, we return the partition:  $U_1\setminus R_1,\cdots, U_{k}\setminus R_k,R_1\setminus \{a,b\},\cdots, R_k\setminus \{a,b\},\{a,b\}$. We call this the DOUBLETON instance
    
    With probability $0.5$, we return the partition $U_1\setminus R_1,\cdots, U_{k}\setminus R_k,R_1\setminus \{a,b\},\cdots, R_k\setminus \{a,b\},\{a\},\{b\}$. We call this the SINGLETON instance.
    
%    The clustering is defined as follows. 
%    \begin{enumerate}
%        \item With probability $0.5$, it is
%        $U_1\setminus R_1,\cdots, U_{k}\setminus R_k,R_1\setminus \{a,b\},\cdots, R_k\setminus \{a,b\},\{a,b\}$, which we call DOUBLETON. With probability $0.5$, it is $U_1\setminus R_1,\cdots, U_{k}\setminus R_k,R_1\setminus \{a,b\},\cdots, R_k\setminus \{a,b\},\{a\},\{b\}$, which we call SINGLETON.
%    \end{enumerate}
\end{enumerate}
    We will show that, with constant probability, $\calA$ cannot distinguish between a DOUBLETON or SINGLETON instance, and therefore, will make a mistake.

%\begin{remark}
%	Need to say a bit about exactly $k$ and at most $k$. {\bf to do}
%\end{remark}

Let us fix a 2-round deterministic algorithm $\calA$ which makes at most $o(n^{4/3}k^{2/3})$
%$\frac{n^{4/3}k^{2/3}}{1000}$ 
queries in each round. 
%The following theorem implies our lower bound via Yao's lemma.
\begin{theorem}\label{thm:det}
	Given a partition drawn from $\calD$, $\calA$ cannot distinguish whether it is a DOUBLETON instance or a SINGLETON instance 
	with probability $> \frac{1}{2} + o(1)$.
\end{theorem}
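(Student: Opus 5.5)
The plan is to reduce distinguishing to a single event: whether $\calA$ ever queries the pair $(a,b)$. Couple the DOUBLETON and SINGLETON instances on the same draw of $(u_x,r_x)_x$ and $(a,b)$. Every query other than $(a,b)$ gets the same answer in both worlds. Indeed, $a$ and $b$ are removed from their $R_i$'s in both worlds, so any query $(a,y)$ with $y\neq b$ answers NO in both. Hence, unless $\calA$ queries $(a,b)$ in round 1 or round 2, its transcript is identical in both worlds and its advantage is zero. It therefore suffices to show that $\Pr[(a,b)\in Q_1\cup Q_2]=o(1)$, where $Q_t$ denotes the (random, adaptively chosen) round-$t$ query set.

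\emph{Round 1.} $Q_1$ is fixed in advance. Condition on $R$ with $|R|=\Theta(n/l^{1/3})=\Theta(nk^{-1/3}n^{-1/3}\cdot k^{1/3})$; in fact $|R|\approx k^{1/3}n^{2/3}$ by concentration. Given $R$, the pair $(a,b)$ is uniform, so
\[\Pr[(a,b)\in Q_1]\le \Exp[|Q_1\cap \tbinom{R}{2}|]/\Exp\tbinom{|R|}{2}\approx |Q_1|\, l^{-2/3}/(n^2l^{-2/3}) = o(1).\]
The key additional fact I need is (a) from the overview: the number of round-1 queries with both endpoints in the same $R_i$ is small. Its expectation is $|Q_1|\cdot\frac1k\cdot l^{-2/3}=o(n^{4/3}k^{2/3}\cdot k^{-1}\cdot k^{2/3}n^{-2/3})=o(1)$, taking $l=n/k$. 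So with probability $1-o(1)$, no round-1 query lands inside a single $R_i$. On that event, the round-1 answers reveal only YES/NO information: which pairs share a $U_i$ (YES only when both lie in $U_i\setminus R_i$, or they are in different groups / one in $R$).

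\emph{Round 2 (main obstacle).} Here I must show that even conditioned on the round-1 transcript $\tau$, the pair $(a,b)$ is nearly uniform over a set of $\Omega(|R|^2)$ pairs, so that $|Q_2|=o(|R|^2)$ queries hit it with probability $o(1)$. Note $|R|^2\approx k^{2/3}n^{4/3}$, which is exactly the budget. My approach is to define the set $R'\subseteq R$ of elements $x\in R$ that appear in no YES answer of round 1 and in no round-1 query to a same-$U_i$ partner. I then argue three things. First, since round-1 answers involving $r_x=1$ elements are all NO on the good event, the posterior on which elements have $r_x=1$, among elements not revealed to be in some $U_i\setminus R_i$, remains product-like. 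Second, $(a,b)$ given $\tau$ is uniform over pairs of the conditioned $R$, by symmetry of (D3). Third, the elements whose status is pinned down by round 1 (those with a YES answer, which forces $r_x=0$) number at most $2|Q_1|$; among them, only those mistakenly counted reduce the candidate set. Concretely, I would bound the total variation between the posterior of $(a,b)$ and the uniform distribution on pairs of $R'$, and show $|R'|\ge(1-o(1))|R|$ with high probability. Then $\Pr[(a,b)\in Q_2\mid\tau]\le |Q_2|/\binom{|R'|}{2}+o(1)=o(1)$. The delicate part is handling elements with many round-1 NO answers, whose posterior on $r_x$ could be skewed. I would try to show that such NO answers shift the posterior odds of $r_x=1$ only by a factor $1\pm o(1)$, since a NO is already the overwhelmingly likely answer given $u$-randomness. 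If that fails, I would explicitly reveal $u_x$ for high-degree vertices (degree over $k/3$) as in the unknown-$k$ sketch, and discard them from $R'$, noting that there are $O(|Q_1|/k)=o(|R|)$ of them.

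Combining the two rounds gives $\Pr[\text{distinguish}]\le \frac12+\frac12\Pr[(a,b)\in Q_1\cup Q_2]=\frac12+o(1)$.
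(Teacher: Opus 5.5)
\textbf{There is a genuine gap in your round-2 step.} Your reduction is the paper's: you couple the two worlds so that only the query $\{a,b\}$ separates them, dispose of round 1 by symmetry, and then show $(a,b)$ is still spread over $\Omega(|R|^2)$ pairs before round 2. The problem is that the round-2 step rests on a false claim and is never completed.

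\textbf{The expectation is not $o(1)$.} The quantity you compute is $|Q_1|\cdot\frac1k\cdot l^{-2/3}$. With $l=n/k$ this equals $o(n^{2/3}k^{1/3})=o(kl^{2/3})=o(|R|)$, not $o(1)$. For $|Q_1|=n^{4/3}k^{2/3}/C$ there are about $|R|/C\to\infty$ round-$1$ queries inside some $R_i$, and apart from those touching $a$ or $b$ they answer YES. Several of your later claims therefore fail:
\begin{itemize}
\item round-$1$ answers touching $R$ are not all NO;
\item a YES does not force $r_x=0$;
\item your bound of $2|Q_1|$ on the pinned-down elements is vacuous, since $|Q_1|$ may be as large as $|R|^2/C\gg|R|$.
\end{itemize}

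\textbf{Your $R'$ also excludes too much.} A query from $x\in R_i$ into $U_i\setminus R_i$ answers NO whether or not $x\in\{a,b\}$, so it leaks nothing about $(a,b)$. Excluding such $x$, however, can empty $R'$. If $Q_1$ is $d$-regular with $d=2|Q_1|/n$, each $x$ has about $2l^{1/3}/C$ same-$U_i$ partners, which is large when $l$ is. The paper's displayed definition of $R'$ has the same wording. But its bound on $|R\setminus R'|$ via $|Q_1\cap\bigcup_i(R_i\times R_i)|$ is the bound for same-$R_i$ partners, and that is the version that works.

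\textbf{What is missing is how to live with these $o(|R|)$ within-$R_i$ queries.}
\begin{itemize}
\item Put into $R'$ only the elements of $R$ with no round-$1$ query to a partner in their own $R_i$.
\item By Markov (the paper's $G_2$), $|R\setminus R'|=o(|R|)$ with probability $1-o(1)$.
\item Hence $(a,b)\in(R'\times R')\setminus Q_1$ with probability $1-o(1)$ (the paper's $G_3$).
\item On $G_3$, every round-$1$ answer coincides with the answer in the partition having no special pair. So $A(Q_1)$, and therefore $Q_2$, is a deterministic function of $(u_x,r_x)_x$ alone.
\end{itemize}
Now condition on $(u_x,r_x)_x$ rather than on the transcript. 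Given these, $(a,b)$ is still uniform on $R\times R$. This yields
\[
\Pr[(a,b)\in Q_2]\le\Pr[\neg G_3]+\Pr[\neg G_1]+O\bigl(|Q_2|/(k^2l^{4/3})\bigr)=o(1).
\]
This makes your delicate posterior analysis of $r_x$ under many NO answers unnecessary. It also shows that the fallback of revealing high-degree vertices, borrowed from the unknown-$k$ argument, is not needed here.
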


Using Yao's lemma, the above proves the following theorem alluded to in~\Cref{res:1}.
\begin{theorem}\label{thm:lower-bound-known-k}
	Any $2$-round randomized algorithm which learns a partition $\calP$ with at most $k$ parts with probability $> 0.51$
	must make $\Omega(n^{4/3}k^{2/3})$ many PAIR queries.
\end{theorem}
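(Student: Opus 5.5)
The plan is to derive \Cref{thm:lower-bound-known-k} from \Cref{thm:det} via Yao's minimax principle. I take \Cref{thm:det} as given. Suppose for contradiction that $\calA_{\mathrm{rand}}$ is a $2$-round randomized algorithm that learns every partition with at most $k$ parts with probability $> 0.51$, and that it makes at most $q = o(n^{4/3}k^{2/3})$ PAIR queries.

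The first step fixes the parameters. Every partition in the support of $\calD$ has at most $2k+2$ parts, so I run the hard distribution with $k' = \lfloor (k-2)/2 \rfloor$ in place of $k$, which keeps the number of parts at most $k$. Since $n^{4/3}k'^{2/3} = \Theta(n^{4/3}k^{2/3})$, this changes only constants. For very small $k$, for example $k\le 5$, the claimed bound is $\Omega(n^{4/3})$. For $k' \ge 1$ the construction still applies, and for $k<4$ I handle the case by a direct constant adjustment or by noting that the bound for $k=4$ already forces this. I also assume $k'$ divides $n$; otherwise I pad the universe with singletons, or simply note that $l = n/k'$ only needs to be large.

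The second step converts worst-case success into average-case success. If $\calA_{\mathrm{rand}}$ succeeds with probability $>0.51$ on every fixed partition, then it succeeds with probability $>0.51$ on average over $\calP\sim\calD$ and over its random coins. Averaging over the coins, some fixed seed yields a deterministic $2$-round algorithm $\calA$ that succeeds with probability $>0.51$ over $\calD$.

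If the query bound holds only whp rather than in the worst case, I first truncate. I run $\calA_{\mathrm{rand}}$, and if some round would exceed the budget $q$ I abort and output an arbitrary partition. This costs at most a small $o(1)$ in success probability, so the success probability stays above, say, $0.505$. The resulting deterministic $\calA$ makes at most $q$ queries in each round.

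The third step turns learning into distinguishing. Any algorithm that outputs $\calP$ correctly can tell whether $\{a,b\}$ is a part, so it distinguishes DOUBLETON from SINGLETON with the same success probability. Both worlds have probability $1/2$ under $\calD$. By \Cref{thm:det}, the probability of distinguishing is at most $1/2+o(1) < 0.505$ once the hidden constant in $q$ is small enough. This contradiction shows $q = \Omega(n^{4/3}k^{2/3})$.

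The main subtlety is bookkeeping rather than mathematics. The $o(1)$ term in \Cref{thm:det} has to be made quantitative: for $q \le n^{4/3}k^{2/3}/C$ with $C$ large, the advantage must be below $0.01$. The truncation step also has to be stated carefully so that it does not interfere with this margin.
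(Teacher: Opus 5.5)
Your proposal is correct and follows the paper's route. The paper obtains this theorem from \Cref{thm:det} by Yao's lemma in a single line. Your steps (shrinking the parameter to $k'=\lfloor (k-2)/2\rfloor$ so that the at most $2k'+2$ parts fit under $k$, fixing a good seed, and noting that learning implies distinguishing DOUBLETON from SINGLETON) are exactly the bookkeeping that line leaves implicit.

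The one thing to fix is your remark about $k<4$. A lower bound for $k=4$ does not transfer down to smaller $k$, because an algorithm promised at most $3$ parts need not handle $4$-part inputs. In fact the statement is false for $k\le 3$: the trivial sequential algorithm finishes in $k-1\le 2$ rounds using $O(n)$ queries. So you should simply assume $k\ge 4$, which is exactly the condition $k'\ge 1$.
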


\begin{remark} \label{rem:lb-2rnd}
	%The above theorem, along with Yao's lemma, shows that any $2$-round randomized algorithm which makes $o(n^{4/3}k^{2/3})$ queries cannot recover the partition with probability $0.51$. 
	One can ``boost'' the $0.51$ success (or rather lack thereof) to any constant $\delta > 0$
	by taking $t = \Theta(\log(1/\delta))$ copies while increasing the universe size by a factor $t$. In the remainder of the section, we
	focus on proving~\Cref{thm:det}.
\end{remark}

%In the remainder of the section, we prove the above theorem.
We let $Q_1$ denote the first round queries, and let $Q_2$ denote the second round queries.
The key observation is that if $(a,b) \notin Q_1 \cup Q_2$, then the algorithm cannot distinguish a DOUBLETON instance from a SINGLETON instance, since one can form a coupling
among these with $(a,b)$ being the only query at which their answers differ. So if we can upper-bound the probability of $\calA$ querying $(a,b)$ by $o(1)$, we obtain the above theorem.
Here, the probability is over the randomness in $u_x, r_x, a_x, b_x$ for all $x\in U$.

By symmetry, it is easy to see that $\Pr[(a,b) \in Q_1] = \frac{|Q_1|}{\binom{n}{2}}$ since $\{a,b\}$ is uniformly distributed over all pairs.
Since $|Q_1| = o(n^2)$, this probability is $o(1)$. What is non-trivial is to show that the answers to $Q_1$, which we denote $A(Q_1)$, don't reveal enough information so that $Q_2$ can ``catch'' $(a,b)$. 
More precisely, we need to show
\begin{equation}\label{eq:nts}
\Pr[(a,b) \in Q_2~|~A(Q_1)] = o(1) \tag{*}
\end{equation}
This is what we do next.

We begin with a simple structural fact showing that $|R|$ is large with high probability.

\begin{fact}\label{fact:size_R}
	Let $G_1$ be the event that $|R| = (1-o(1)) \cdot k l^{2/3}$. Then $\Pr[G_1] = 1 - o(1)$
	%Let $G_1$ be the event that $|R| \ge 9k l^{2/3} / 10$. $\Pr[G_1] \geq 0.99$
\end{fact}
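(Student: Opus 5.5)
The plan is to write $|R|$ as a sum of independent indicators and apply a Chernoff bound. By (D2), $|R| = \sum_{x\in U} \mathbf{1}[r_x = 1]$, where the indicators are independent Bernoulli variables with success probability $l^{-1/3}$. Note that $r_x$ is chosen independently of $u_x$, so the choice of the $U_i$'s does not matter here. Since $n = kl$, the expectation is
\[
\mu := \Exp[|R|] = \frac{n}{l^{1/3}} = k l^{2/3}.
\]

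Next I apply the multiplicative Chernoff bound with deviation parameter $\eps$:
\[
\Pr\bigl[\,\bigl||R| - \mu\bigr| > \eps \mu\,\bigr] \leq 2\exp\left(-\frac{\eps^2 \mu}{3}\right).
\]
I choose $\eps = \mu^{-1/3}$, which tends to $0$ as $\mu \to \infty$. With this choice the right-hand side becomes $2\exp(-\mu^{1/3}/3)$, which is $o(1)$. Hence $|R| = (1 \pm o(1))\, k l^{2/3}$ with probability $1-o(1)$, which is exactly the event $G_1$. Only the lower bound $|R| \geq (1-o(1))\,kl^{2/3}$ is actually needed later, but the two-sided statement costs nothing extra.

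The only point requiring care is what the asymptotic parameter is. The $o(1)$ terms are with respect to $\mu = kl^{2/3} = n/l^{1/3}$ growing. Since $l \leq n$, we have $\mu \geq n^{2/3} \to \infty$ as $n\to\infty$, regardless of how $k$ and $l$ trade off. So the bound holds uniformly, and the concrete form is $\Pr[G_1] \geq 1 - 2\exp(-n^{2/9}/3)$. Nothing else is an obstacle; this step is routine concentration.
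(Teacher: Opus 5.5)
Your proof is correct and uses the same argument as the paper: $|R|$ is a sum of $n$ independent Bernoulli$(l^{-1/3})$ indicators with mean $kl^{2/3}$, and a Chernoff bound gives the concentration. Your version is somewhat sharper. You make the $o(1)$ explicit by taking $\eps=\mu^{-1/3}$, you prove the two-sided bound, and you observe that $\mu \ge n^{2/3}$, so only $n\to\infty$ is needed; the paper instead fixes $\eps_1$, bounds only the lower tail, and takes $l$ large enough.
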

\begin{remark}
	The above is an example of an abuse of notation which we choose to do for the sake of readability (and not carrying around inconvenient constants).
The above statement should be read is as follows: for any constants $\eps_1, \eps_2 \in (0,1)$, 
	for large enough $n$, we have $\Pr[|R| \geq (1-\eps_1)k\ell^{2/3}] \geq 1 - \eps_2$.
\end{remark}

\begin{proof}
The expected size of $R$ is $\Exp[|R|] = \frac{n}{l^{1/3}} = kl^{2/3}$. The above fact now holds by Chernoff bounds.
More precisely, for any $\eps_1 \in (0,1)$, we have $\Pr[|R| < (1-\eps_1)\Exp[|R|]] \leq \exp(-\frac{\eps_1^2}{2}\cdot kl^{2/3})$, which can be made $\leq \eps_2$
for any $\eps_2 \in (0,1)$ by choosing $l$ to be large enough.
\end{proof}
Next, we state a few claims about the first round queries $Q_1$. First, note that $|Q_1| = o(n^{4/3}k^{2/3}) = o(k^2\ell^{4/3})$.
%$|Q_1| \leq \frac{n^{4/3}k^{2/3}}{1000} = \frac{k^2\ell^{4/3}}{1000}$.
We next show that with high probability, a good fraction of the queries have their endpoint lying in different $R_i$'s.

\begin{fact}
	Let $G_2$ be the event that $|Q_1\cap \cup_{i=1}^k(R_i\times R_i)| = o(k l^{2/3})$. Then, $\Pr[G_2] = 1 - o(1)$.
	%It holds that $\mathbb{E}[|Q_1\cap \cup_{i=1}^k(R_i\times R_i)|] \leq \frac{k l^{2/3}}{1000}$. Hence, with probability $99/100$, $|Q_1\cap \cup_{i=1}^k(R_i\times R_i)| \leq k l^{2/3}/10$.  We call this event $G_2$.
\end{fact}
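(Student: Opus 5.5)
We bound the expectation of $X := |Q_1\cap \bigcup_{i=1}^k (R_i\times R_i)|$ and then apply Markov's inequality. The key point is that $Q_1$ is the first-round query set of a deterministic algorithm, so it is a fixed set of pairs chosen before any answers are seen. It is therefore independent of the random variables $u_x, r_x$ (and $a,b$) that define the partition, and no conditioning on answers is involved.

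For a fixed pair $(x,y)\in Q_1$ with $x\neq y$, the event $(x,y)\in \bigcup_i R_i\times R_i$ requires $u_x=u_y$, $r_x=1$ and $r_y=1$. These three events are independent by (D1) and (D2), and their probabilities are $\frac1k$, $\frac{1}{l^{1/3}}$ and $\frac{1}{l^{1/3}}$. Hence
\[
\Pr\bigl[(x,y)\in \textstyle\bigcup_i R_i\times R_i\bigr] = \frac{1}{k\,l^{2/3}}.
\]
Summing over the pairs by linearity of expectation gives
\[
\Exp[X] = \frac{|Q_1|}{k\,l^{2/3}} = o\!\left(\frac{k^2 l^{4/3}}{k\,l^{2/3}}\right) = o(k\,l^{2/3}),
\]
where we used $|Q_1| = o(n^{4/3}k^{2/3}) = o(k^2 l^{4/3})$, which follows from $n=kl$.

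To convert this into the stated probability bound in the $o(\cdot)$ convention of the preceding remark, write $|Q_1|\le k^2 l^{4/3}/C$. Then $\Exp[X]\le k\,l^{2/3}/C$. For any target $\eps_1$, Markov's inequality gives $\Pr[X\ge \eps_1 k\,l^{2/3}]\le \frac{1}{C\eps_1}$. This is at most $\eps_2$ once $C$ is large, so $\Pr[G_2]=1-o(1)$.

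The only step needing care is the independence of $Q_1$ from the partition randomness, and this holds because $\calA$ is deterministic (via Yao's lemma) and round-one queries precede all answers. Nothing here requires a concentration bound beyond Markov, so I expect no real obstacle.
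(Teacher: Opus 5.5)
Your proposal is correct and follows the paper's proof: the per-pair probability $\frac{1}{k l^{2/3}}$ comes from independence of $u_x=u_y$, $r_x=1$, $r_y=1$, linearity of expectation gives $\Exp[X]=o(k l^{2/3})$, and Markov's inequality finishes. Your explicit treatment of the $o(\cdot)$ through $|Q_1|\le k^2 l^{4/3}/C$ is the paper's footnote, except that you fix $\eps_1$ and let $C$ grow, whereas the paper lets $\eps$ shrink slowly with $C\eps\to\infty$; either reading works.
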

\begin{proof}
	Let $(x,y)\in Q_1$ be an arbitrary query in $Q_1$. Let $u_x = i$. The probability that $(x,y) \in \cup_{i=1}^k (R_i\times R_i)$ is
	the probability $u_y = i, r_x = 1, r_y  =1$. Since these are independent, this probability is $\frac{1}{kl^{2/3}}$. 
	Therefore, $\Exp[|Q_1\cap \cup_{i=1}^k(R_i\times R_i)|] = o(k l^{2/3})$. The fact follows from Markov's inequality.
	\footnote{
	As before, to be precise, if $|Q_1| = n^{4/3}k^{2/3}/C$, then Markov's inequality states that 
	for any $\eps > 0$, we have $\Pr[|Q_1\cap \cup_{i=1}^k(R_i\times R_i)| \leq \frac{1}{\eps}\cdot \frac{kl^{2/3}}{C}] \geq 1 - \eps$.
	For any slowly growing function $C$, if we choose $1/\eps$ to be slower function such that $C\eps$ goes to infinity as well.
	Thus, quantitatively, the ``little-oh''s differ, but figuring this out exactly, in our opinion, is not worthwhile.}
	%The upper bound on the expectation follows from the upper bound on $|Q_1|$, and the probability statement follows from Markov's inequality. 
\end{proof}

Consider a pair $(x,y)\in Q_1$. If $u_x \neq u_y$, the answer to this query is NO. 
Crucially, if $x\in R$, then the answer to this query doesn't give any information of whether $x$ is $a$ or not. 
To this end, define
\[
R' :=\{x\in R\mid \text{for all}~(x,y)\in Q_1, u_x\neq u_y\}
\]

\begin{lemma}\label{lem:b3}
	Let $G_3$ be the event that the pair $(a,b)$ sampled in step (D3.) lies in $(R'\times R')\setminus Q_1$. 
	Then $\Pr[G_3] = 1-o(1)$.
	%$\Pr[G_3] \geq 0.75$. 
	% Then, $\Pr(G_2, G_3) = 1-o(1)$.
\end{lemma}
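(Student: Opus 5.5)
We bound the failure probability by a union bound. $G_3$ fails only if $a\notin R'$, $b\notin R'$, or $(a,b)\in Q_1$. The last event was already handled: by symmetry $\{a,b\}$ is a uniformly random pair, so $\Pr[(a,b)\in Q_1]=|Q_1|/\binom{n}{2}=o(1)$. By symmetry between $a$ and $b$, it remains to show $\Pr[a\notin R']=o(1)$.

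The key point is that $Q_1$ is a \emph{fixed} set of pairs. Since $\calA$ is deterministic, its first-round queries do not depend on the instance. So membership of $a$ in $R'$ is a property of a uniformly random element of $R$ with respect to a fixed query graph $Q_1$.

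To analyze it, we condition on $|R|$ and write
\[
\Pr[a\notin R'] = \Exp\left[\frac{|R\setminus R'|}{|R|}\right].
\]
Now $x\in R\setminus R'$ exactly when $r_x=1$ and some $(x,y)\in Q_1$ has $u_x=u_y$. Hence
\[
\Exp|R\setminus R'| \le \sum_{(x,y)\in Q_1}\Pr[u_x=u_y,\ r_x=1] + \Pr[u_x=u_y,\ r_y=1],
\]
counting each query once for each endpoint. Each term equals $\frac{1}{k}\cdot\frac{1}{l^{1/3}}$, using the independence of $u$ and $r$. So
\[
\Exp|R\setminus R'|\le \frac{2|Q_1|}{kl^{1/3}} = o\!\left(\frac{k^2l^{4/3}}{kl^{1/3}}\right)=o(kl).
\]
This is the main obstacle: the bound is $o(n)$, not $o(|R|)=o(kl^{2/3})$.

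This naive count is too weak, and the fix must come from a sharper analysis. Only a fraction $l^{-1/3}$ of queried $x$ lie in $R$, and we already used that. What is left is how many distinct $x$ can have a same-$U_i$ partner. A high-degree vertex $x$ with $d_x$ queries has $\Pr[\exists y: u_y=u_x]\le\min(1,d_x/k)$, so the correct bound is
\[
\Exp|R\setminus R'|\le l^{-1/3}\sum_x\min(1,d_x/k).
\]
Even so, with $\sum_x d_x = 2|Q_1|$ this only gives $o(kl)$. So I would instead argue differently.

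The plan is to change the quantity. $Q_1$ decides only whether $a\in R'$, and $a$ is uniform in $R$, but the conditional law of $a$ given $A(Q_1)$ is what matters for (*). So I would interpret the lemma as needing $|R'|=(1-o(1))|R|$. I would then try to prove $\Exp|R\setminus R'|=o(kl^{2/3})$ by noting that a vertex $x\in R$ leaves $R'$ only when it has a query partner $y$ with $u_y=u_x$. I would split this by whether the partner is also in $R$ (counted by $G_2$, giving $o(kl^{2/3})$) or not. For partners outside $R$, I would check whether the definition should really be restricted to partners in $R$, that is, whether $R'$ is meant as ``no $Q_1$ query inside the same $R_i$''.

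If the definition is as stated, I expect to need the refined bound $\sum_x\min(1,d_x/k)$ combined with $|Q_1|=o(k^2l^{4/3})$. This step is where I anticipate difficulty and might need to revise the statement's quantities.

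Finally, I conclude by Markov's inequality and $G_1$, then take the union bound over the three failure events.
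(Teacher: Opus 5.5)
Your proposal never reaches a proof of the statement as written. The obstacle you hit is not a weakness of your counting, though: with $R'$ defined as in the paper, the lemma is false. That definition discards $x$ whenever \emph{any} $Q_1$-partner $y$ has $u_y=u_x$.

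Here is a counterexample. Let $l\to\infty$ and take $Q_1$ to be $d$-regular with $d=kl^{1/3}/\omega$ for some $1\ll\omega\ll l^{1/3}$. Then $|Q_1|=nd/2=k^2l^{4/3}/(2\omega)=o(n^{4/3}k^{2/3})$, yet $d/k=l^{1/3}/\omega\to\infty$. For each $x$, the values $u_y$ of its $d$ partners are i.i.d.\ and independent of $r_x$. So $\Pr[x\in R'\mid r_x=1]=(1-1/k)^d\le e^{-d/k}=o(1)$, which gives $\Exp|R'|=o(\Exp|R|)$. Using $G_1$ and Markov, $\Pr[a\in R']=\Exp[|R'|/|R|]=o(1)$, so $\Pr[G_3]=o(1)$.

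Moreover, $\Pr[\exists y:(x,y)\in Q_1,\ u_y=u_x]\ge(1-1/e)\min(1,d_x/k)$. So your refined bound $l^{-1/3}\sum_x\min(1,d_x/k)$ is tight up to a constant, and no sharper estimate can rescue this version. The paper's proof breaks exactly where you looked. It asserts that every $x\in R\setminus R'$ has a partner with $(x,y)\in Q_1\cap(R_i\times R_i)$, which silently assumes $r_y=1$.

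The repair is the one you floated, and you should commit to it rather than leave it conditional. Let $R'$ consist of those $x\in R$ with no $Q_1$-partner in $R_{u_x}$.

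The subsequent Observation still holds. A query between $x\in R$ and $y\notin R$ is answered NO in both the DOUBLETON and SINGLETON worlds, whether or not $x\in\{a,b\}$. Hence, on $G_3$, $A(Q_1)$ coincides with the answers on the unplucked partition $\{U_i\setminus R_i,R_i\}_i$ and depends only on $(u,r)$.

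With this definition, each $x\in R\setminus R'$ is an endpoint of a pair in $Q_1\cap\bigcup_i(R_i\times R_i)$. So on $G_2$, $|R\setminus R'|\le 2\,|Q_1\cap\bigcup_i(R_i\times R_i)|=o(kl^{2/3})$, which is $o(|R|)$ on $G_1$. The pair $(a,b)$ is drawn uniformly from $R$ after $(u,r)$ are fixed. Hence on $G_1\cap G_2$ we get $\Pr[G_3\mid u,r]\ge(|R'|/|R|)^2-2|Q_1|/|R|^2=1-o(1)$, and averaging over $(u,r)$ gives $\Pr[G_3]=1-o(1)$.

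This is your ``$G_2$ branch'' plus the union bound you set up. It is the paper's argument with the definition of $R'$ corrected.
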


\begin{proof}
	Conditioning on $G_1$ and $G_2$, we can lower bound the size of $|R'|$: fix any item $x \in R \setminus R'$, with $u_x = i$. Then there is some $y$ with $u_y = i$, such that $(x,y) \in Q_1 \cap (R_i \times R_i)$. 
	Hence, $|R \setminus R'| \leq |Q_1 \cap \cup_{i=1}^k (R_i \times R_i)| = o(kl^{2/3})$.~\Cref{fact:size_R} implies $|R| = (1- o(1))kl^{2/3}$ implying,
	%Hence, $|R \setminus R'| \leq |Q_1 \cap \cup_{i=1}^k (R_i \times R_i)| \leq kl^{2/3}/10$.~\Cref{fact:size_R} implies $|R|\geq 9kl^{2/3}/10$ implying,
	%$|R\setminus R'| \leq |R|/9$, or $|R'| \geq 8|R|/9$. 
	$|R\setminus R'| = o(|R|)$ or $|R'| = (1 - o(1))|R'|$. %\leq |R|/9$, or $|R'| \geq 8|R|/9$. 
%
%	First, since $|R| \geq \frac{k l^{2/3}}{10} $ (\cref{fact:size_R}) and $|Q_1| \leq \frac{k^2 l^{4/3}}{1000}$, it holds that $|Q_1\cap (R\times R)| \leq |R|^2 / 1000$. 
%	Second, $G_2$ implies a bound on $|R'|$: indeed, fix any item $x \in R \setminus R'$, with $u_x = i$. Then there is some $y$ with $u_y = i$, such that $(x,y) \in Q_1 \cap R_i \times R_i$. 
%	Hence, $|R \setminus R'| \leq |Q_1 \cap \cup_{i=1}^k (R_i \times R_i)|$, and $G_2$ yields  $|R'|\geq 9|R|/10$. 
%	
	Hence,
	\[
	\mathbb{P}(G_3\mid G_2,G_1) = \frac{|(R'\times R')\setminus Q_1|}{|R|^2} \geq \frac{|R'|^2}{|R|^2}-\frac{|Q_1|}{|R|^2} = 1 - o(1)
	\]
	since $|Q_1| = o(k^2l^{4/3})$ and so is $o(|R|^2)$.
%	\begin{align*}
%		\mathbb{P}(G_3\mid G_2,G_1) &= \frac{|(R'\times R')\setminus Q_1|}{|R|^2}\\
%		&\geq \frac{|R'|^2}{|R|^2}-\frac{|Q_1\cap(R\times R)|}{|R|^2}\\
%		&\geq \frac{64}{81} - \frac{1}{1000} > 0.78
%	\end{align*}
%	where the last inequality follows conditioned on $G_1$ and the upper bound on $|Q_1|$. \deepc{constants need to be fixed...however, i think we should revert back to $o(1)$.}
	Since $\Pr(G_1), \Pr(G_2) = 1 - o(1)$, we get $\mathbb{P}(G_3) = 1 - o(1)$. \qedhere
\end{proof}

After the first round of queries of the protocol, if the events $\{G_1, G_2, G_3\}$ don't all occur, we let the protocol win. 
Otherwise, we proceed to the second round after telling the protocol the solutions to $Q_1$. 
The crucial observation is the following.
\begin{observation}
	Conditioned on $\{G_1, G_2, G_3\}$, $A(Q_1)$ is independent of $a$ and $b$.
\end{observation}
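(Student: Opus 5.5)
The plan is to show that, conditioned on $G_1,G_2,G_3$, the posterior distribution of $(a,b)$ given $A(Q_1)$ (together with the other randomness) is uniform over the pairs in $(R'\times R')\setminus Q_1$. I will make the statement precise as follows. Condition on the first-stage randomness $\sigma=(u_x,r_x)_{x\in U}$. Since $\calA$ is deterministic, $Q_1$ is a fixed set, so $R'$ and the events $G_1,G_2$ are functions of $\sigma$ alone. Given $\sigma$, the pair $(a,b)$ is uniform over $R\times R$, and $G_3$ restricts it to $(R'\times R')\setminus Q_1$. It therefore suffices to show that, for fixed $\sigma$ and the DOUBLETON/SINGLETON bit, $A(Q_1)$ is the same for every $(a,b)\in (R'\times R')\setminus Q_1$. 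Then the conditional law of $(a,b)$ given $(\sigma, A(Q_1), G_1,G_2,G_3)$ is still uniform on that set, and in particular $A(Q_1)$ carries no information about which pair was chosen.

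The main step is a query-by-query check. Fix $(x,y)\in Q_1$ and compare its answer under the modified partition with its answer under the base partition $\{U_i\setminus R_i\}\cup\{R_i\}$, which depends only on $\sigma$. Plucking out $a$ and $b$ can only change answers to queries with an endpoint in $\{a,b\}$. There are three cases.
\begin{itemize}
\item If $\{x,y\}=\{a,b\}$: this is excluded, since $G_3$ gives $(a,b)\notin Q_1$.
\item If $x=a$ and $y\neq b$ (the remaining cases are symmetric): since $a\in R'$, we have $u_y\neq u_a$, so the base answer is NO.
\item After plucking, $a$ lies in $\{a,b\}$ or in $\{a\}$, and $y$ is in neither, so the answer is again NO.
\end{itemize}
Hence every answer in $A(Q_1)$ equals its base answer, which is determined by $\sigma$. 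This holds in both worlds, so $A(Q_1)$ is also independent of the DOUBLETON/SINGLETON bit.

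The main subtlety is the conditioning itself. $G_3$ depends on $(a,b)$, so I need to make sure that conditioning on it does not bias the pair within $(R'\times R')\setminus Q_1$. It does not: for fixed $\sigma$, $G_3$ is exactly the event that the uniform pair lands in that set. The conditional law is therefore uniform on the set, and the answers computed above are constant on it. I would also note that ordered versus unordered pairs, and the degenerate case $a=b$, affect only constants. They can be ignored or folded into $G_3$, since $\Pr[a=b]=1/|R|=o(1)$ under $G_1$.
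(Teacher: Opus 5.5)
Your proof is correct and follows the paper's argument. On $G_3$, no query in $Q_1$ is the pair $(a,b)$, and every query touching $a$ or $b$ joins elements of different $U_i$ because $a,b\in R'$; so every answer in $A(Q_1)$ coincides with its answer in the base partition and is a function of $(u_x,r_x)_{x\in U}$ alone. Your explicit conditioning on $\sigma$ is a more careful rendering of the same point: you note that $G_1,G_2$ are $\sigma$-measurable and that the posterior of $(a,b)$ is uniform on $(R'\times R')\setminus Q_1$, and this is exactly the form in which the observation is used in the subsequent lemma.
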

\begin{proof}
	 Indeed, by definition of $R'$, if $x\in R'$, then any query $(x,y)\in Q_1$ have $u_x \neq u_y$. Unless $(x,y) = (a,b)$, the answer to this query 
	 is NO. Since $G_3$ enforces $(a, b) \in (R' \times R')\setminus Q_1$, every answer in $A(Q_1)$ is determined solely by $\{u_x,r_x\}$ which are independent
	 of $a,b$.
	 %it therefore means that the answers of the queries on all items $x$ that could have $a_x = 1$ or $b_x = 1$ are ``no" independently of $a_x$ and $b_x$. 
\end{proof}

Now we can establish the upper bound in Equation~\eqref{eq:nts} completing the proof of~\Cref{thm:det}, and thus establishing the lower bound of~\Cref{res:1}.
\begin{lemma}
	$\Pr[(a,b) \in Q_2~|~A(Q_1)] = o(1)$. %\deepc{i didn't calculate constant.}
\end{lemma}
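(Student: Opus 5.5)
We plan to condition on the good events and on the full first-round transcript, and then bound the probability that the fixed second-round query set $Q_2$ catches $(a,b)$. Conditioned on $\{G_1,G_2,G_3\}$ and on $A(Q_1)$, the set $Q_2$ is a deterministic function of $A(Q_1)$, since $\calA$ is deterministic. The Observation says that $A(Q_1)$ is determined by $\{u_x,r_x\}_{x\in U}$ alone, and is independent of the choice of $(a,b)$. So we will first condition further on the full vector $\{u_x,r_x\}$. This fixes $R$, $R'$, $Q_1$, $A(Q_1)$ and hence $Q_2$.

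Next we describe the conditional law of $(a,b)$. In (D3.) the pair is uniform on $R\times R$, and conditioning on $G_3$ restricts it to $(R'\times R')\setminus Q_1$. Since the Observation shows the transcript carries no information about $(a,b)$ beyond this event, the pair is uniform over $(R'\times R')\setminus Q_1$. The intermediate bound we aim for is
\[
\Pr[(a,b)\in Q_2 \mid \{u_x,r_x\}, G_1,G_2,G_3] \le \frac{|Q_2|}{|(R'\times R')\setminus Q_1|}.
\]
We then use the estimates from the proof of \Cref{lem:b3}, namely $|R'|=(1-o(1))|R|$ and $|Q_1|=o(|R|^2)$. Together with $G_1$ these give $|(R'\times R')\setminus Q_1| = (1-o(1))|R|^2 = \Omega(k^2 l^{4/3})$. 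Since $|Q_2| = o(n^{4/3}k^{2/3}) = o(k^2 l^{4/3})$ because $n=kl$, the ratio is $o(1)$. Averaging over $\{u_x,r_x\}$ and over $A(Q_1)$, and adding the $o(1)$ probability that some $G_j$ fails, yields \eqref{eq:nts}.

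The step needing the most care is the claim that $(a,b)$ remains uniform on $(R'\times R')\setminus Q_1$ even after seeing $A(Q_1)$. The concern is that the transcript might leak information correlated with $a,b$ through the $u$'s. We handle this by conditioning on all of $\{u_x,r_x\}$, which is strictly more information than $A(Q_1)$. Under $G_3$, every answer is then a function of these variables alone. The second-round queries are therefore chosen independently of $(a,b)$, whose conditional distribution is just the restriction of the uniform distribution on $R\times R$ to the event $G_3$.
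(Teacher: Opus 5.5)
Your proposal is correct and follows essentially the same route as the paper. Both arguments drop to $G_1\cap G_2\cap G_3$ at an additive $o(1)$ cost, use the Observation to see that $Q_2$ is fixed while $(a,b)$ is uniform on $(R'\times R')\setminus Q_1$, and union-bound over $Q_2$ against $|(R'\times R')\setminus Q_1|=(1-o(1))k^2l^{4/3}$. Conditioning on the full vector $\{u_x,r_x\}$ rather than only on $A(Q_1)$ is a slightly cleaner justification of the uniformity step, and your bound holds on average over the transcript, exactly as the paper uses it for \Cref{thm:det}.
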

\begin{proof}
To do so, we first note
\[
\Pr[(a,b) \in Q_2~|~A(Q_1)] \leq \Pr[(a,b)\in Q_2~|~A(Q_1), G_1, G_2, G_3] + \underbrace{\sum_{i=1}^3 \Pr[\neg G_i]}_{=o(1)}
\]
By the above observation, we have the first term in the RHS is $\Pr[(a,b)\in Q_2~|~G_1,G_2,G_3]$.
Conditioned on $G_3$, any $(x,y) \in Q_2$ which is not in $(R'\times R')\setminus Q_1$ has $0$ chance of being $(a,b)$. Therefore,
we can upper bound
\[
\Pr[(a,b)\in Q_2~|~G_1,G_2,G_3] \leq \sum_{(x,y)\in Q_2} \Pr[(a,b) = (x,y)~|~(x,y)\in (R'\times R')\setminus Q_1, G_1,G_2,G_3]
\]
Event $G_3$ ensures that $(a,b)$ is uniform in $(R'\times R') \setminus Q_1$ instead of uniform in $R \times R$, and so the above probability is 
$\frac{1}{|(R'\times R')\setminus Q_1|}$. 
Conditioned on $G_1, G_2$, we have $|(R'\times R') \setminus Q_1| \geq (|R'|)^2 - |Q_1| = (1 -o(1))k^2l^{4/3}$, implying each term in the above summation 
is $\frac{1+o(1)}{k^2l^{4/3}}$. If $|Q_2| = o(k^2l^{4/3})$, the above summation is $o(1)$.
%
%
%\geq \frac{64}{81} \cdot \frac{81 k^2 l^{4/3}}{100} - \frac{k^2 l^{4/3}}{1000} \geq \frac{6}{10} \cdot k^2 l^{4/3}$, we get that each term is $\leq \frac{10}{6}\cdot \frac{1}{k^2l^{4/3}}$.
%If $|Q_2| \leq \frac{n^{4/3}k^{2/3}}{1000}$, then the above is $\leq o(1)$. 
\end{proof}

\subsection{A $3$-round lower-bound for unknown $k$}\label{sec:lower-bound-unknown-k}

The main goal of this section is to show that there cannot exist a $3$-round randomized algorithm which, with high probability, learns any partition $\calP$ 
making $\Ot(n|\calP|)$ queries per round. To do so we create a distribution over partitions, and with $0.5$ probability the partition is the trivial partition with 
$1$ part. What this forces is that any randomized algorithm must make $\Ot(n^{4/3})$-queries in the first round; otherwise, with $0.5$ probability, the number of queries overshoots the budget.
This is all we need to prove our $3$-round lower bound; we will in fact tell the algorithm what $|\calP|$ is along with the first round's answers.
As we mention in~\Cref{res:2}, we show something stronger; if there is a $3$-round randomized algorithm, then it must make $\Omega(n^{4/3}|P|^{2/3})$ queries. However, we do not know of such an algorithm (and we leave that as an open problem).

\begin{theorem}\label{thm:unknown-lb}\label{thm:lower-bound-unknown-k}
	If there exists a randomized $3$-round protocol which succeeds with probability at least $\frac{9}{10}$ to find any partition $\calP$, 
	then it must make $\Omega(n^{4/3}|\calP|^{2/3})$ queries on some partition $\calP$. 
\end{theorem}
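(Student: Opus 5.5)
We prove the contrapositive via Yao's lemma. Suppose a randomized $3$-round protocol succeeds with probability $\geq 9/10$ on every partition and makes $o(n^{4/3}|\calP|^{2/3})$ queries on every $\calP$, with constant probability in each round. Take $k = n^{c}$ for a suitable constant $c$ (say $c = 0.67$, so that $n^{4/3}$ is much smaller than $n^{4/3}k^{2/3}$ but still large compared to the per-part quantities). The hard distribution is a mixture. With probability $1/2$ we output the trivial one-part partition. With probability $1/2$ we output a partition drawn from $\calD$ of \Cref{sec:lower-bound-known-k}, with parameter $k$, $\ell = n/k$, and $2k+1$ or $2k+2$ parts. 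On the trivial partition the budget is $o(n^{4/3})$. The first-round queries $Q_1$ are chosen before any answer is seen, so their distribution is identical in both branches. Hence, except with small probability, $|Q_1| = o(n^{4/3})$ also on the $\calD$ branch, while $|Q_2|,|Q_3| = o(n^{4/3}k^{2/3}) = o(k^2\ell^{4/3})$. Fixing the random seed by averaging gives a deterministic $3$-round algorithm with these budgets that succeeds on $\calD$ with probability $\geq 0.8 - o(1)$. As in the known-$k$ case, it suffices to show $\Pr[(a,b)\in Q_1\cup Q_2\cup Q_3] = o(1)$. To make the algorithm stronger, we also reveal $|\calP|$ after round one.

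\textbf{Round one and revealed indices.} Call $x$ \emph{revealed} if it lies in more than $k/3$ queries of $Q_1$, or if some $(x,y)\in Q_1$ has $u_x=u_y$. Call an index $i$ revealed if some revealed $x$ has $u_x=i$. The high-degree vertices number at most $2|Q_1|/(k/3) = o(n^{4/3}/k)$. The expected number of same-$u$ pairs in $Q_1$ is $|Q_1|/k = o(n^{4/3}/k)$. With $k = n^{0.67}$, both quantities are $o(k)$, so by Markov only $o(k)$ indices are revealed. After round one we hand the algorithm, for free, the entire sets $U_i$ and $R_i$ for every revealed $i$, together with all of $A(Q_1)$. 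Revealing is a function of the $u$'s and of $Q_1$ alone, and it is never triggered by $a$ or $b$ unless $(a,b)\in Q_1$, which happens with probability $|Q_1|/\binom n2 = o(1)$.

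\textbf{Rounds two and three.} Let $R^\circ$ be the set of $x\in R$ with unrevealed $u_x$. Conditioned on everything disclosed so far, the bits $r_x$ of the unrevealed points are still fresh $\mathrm{Bernoulli}(\ell^{-1/3})$ variables. Each unrevealed $x$ has at most $k/3$ first-round neighbours, all with $u\neq u_x$. So $u_x$ remains uniform over at least $2k/3 - o(k)$ unrevealed indices, and the $u$'s of distinct unrevealed points are close to independent given their NO answers. (I expect to handle this dependence by a direct counting or product argument on the constraint graph, or by a coupling to independent uniform choices over the allowed sets.)

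We then repeat the known-$k$ analysis with $Q_2$ playing the role of $Q_1$. For any $(x,y)\in Q_2$ with both endpoints unrevealed, $\Pr[r_x=r_y=1,\ u_x=u_y] = O(1/(k\ell^{2/3}))$. Hence the expected number of $Q_2$ pairs inside some $R_i$ is $o(k\ell^{2/3})$. Define $R''\subseteq R^\circ$ to be the points with no same-$u$ query in $Q_1\cup Q_2$. Then $|R''| = (1-o(1))k\ell^{2/3}$, and $(a,b)\in (R''\times R'')\setminus(Q_1\cup Q_2)$ with probability $1-o(1)$. Conditioned on these good events, all answers in $A(Q_1)$, $A(Q_2)$ and the disclosed information are independent of $(a,b)$, so $(a,b)$ is uniform on a set of size $(1-o(1))k^2\ell^{4/3}$. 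Consequently $\Pr[(a,b)\in Q_3] \le |Q_3|/((1-o(1))k^2\ell^{4/3}) = o(1)$. Finally, $|\calP|\le 2k+2$, so $n^{4/3}k^{2/3} = \Theta(n^{4/3}|\calP|^{2/3})$, which gives the claimed bound.

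\textbf{Main obstacle.} The hard step is the round-two conditioning. The answers to $Q_1$ among unrevealed points are all NO, but they still induce dependencies among the $u$'s. I must show that the conditional probability of $u_x=u_y$ for a $Q_2$ pair is still $O(1/k)$, even though $Q_2$ is adaptively chosen, and without any union bound over all possible $Q_2$. The degree cap $k/3$ is exactly what should keep each such probability at most about $3/(2k)$.
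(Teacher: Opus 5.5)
Your plan follows the paper's proof closely: the same mixture of the one-part partition with $\calD$ at $k=n^{0.67}$, the same revealed points and indices with free disclosure, and the same second- and third-round argument. Two steps still need repair.

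\textbf{The step you flag as the main obstacle is left open.} This is exactly where the paper supplies an argument, and neither a coupling nor a global near-independence statement is needed. Let $D_1$ be everything disclosed after round one, computed with the generic answers (YES iff $u_x=u_y$ and $r_x=r_y$); these coincide with the true answers on the good event. Given $D_1$, the set $Q_2$ is fixed, so linearity of expectation over its pairs needs no union bound. Let $W$ be the set of points whose index is unrevealed, and $I$ the set of revealed indices. Given $D_1$, the bits $r|_W$ are fresh and independent of $u|_W$. Moreover, $u|_W$ is a uniform proper colouring of $Q_1[W]$ with palette $[k]\setminus I$. Now condition additionally on all $u_z$ with $z\neq x,y$, for $(x,y)\notin Q_1$. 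This leaves $u_x$ and $u_y$ independent and uniform on lists $L_x,L_y$, each of size at least $2k/3-o(k)$ by the degree cap. Hence $\Pr[u_x=u_y]=|L_x\cap L_y|/(|L_x||L_y|)\le 1/\max(|L_x|,|L_y|)\le (3/2+o(1))/k$. This is precisely the paper's bound $\frac{k-k_x-k_y+k_{xy}}{(k-k_x)(k-k_y)}\le\frac{9}{4k}$.

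\textbf{Your definition of $R''$ is too restrictive.} You define $R''$ by ``no same-$u$ partner in $Q_1\cup Q_2$''. With that definition, $|R''|=(1-o(1))k\ell^{2/3}$ does not follow from your bound on $|Q_2\cap\bigcup_i(R_i\times R_i)|$. A partner $y$ with $u_y=u_x$ but $r_y=0$ is not counted there. In fact the claim is false. Suppose $Q_2$ gives every point $k\log n$ partners; this costs about $nk\log n=o(k^2\ell^{4/3})$ queries. Then (e.g. with $Q_1=\emptyset$) almost every $x\in R$ has a same-$u$ partner, so $R''$ is nearly empty.

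The fix is to forbid only $Q_2$-partners $y\in R_{u_x}$. A query between $x\in R_i$ and $y\in U_i\setminus R_i$ answers NO whether or not $x\in\{a,b\}$, so your independence argument survives. With this definition, $|R^\circ\setminus R''|\le 2|Q_2\cap\bigcup_i(R_i\times R_i)|$. The paper's $R'$, in both lower-bound sections, is written with the same over-strong same-$u$ condition, and its inequality $|R\setminus R'|\le|Q\cap\bigcup_i(R_i\times R_i)|$ needs the same correction.

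\textbf{A smaller point.} To get $|R^\circ|=(1-o(1))|R|$ you need more than $o(k)$ revealed indices. You also need, for instance, $\max_i|U_i|=O(\ell)$ with high probability, because the revealed set depends on $u$.
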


%\david{An idea to reduce the assumption on $k$. Perhaps we can keep the instance and say : in the first round, any yes answer just reduces the size of the instance by $1$, and there's $o(n)$ many yes. Then we "remove" high degree vertices, $o(n)$ of them, and consider an independent set in the remaining graph. The argument we're making about $Q_2$ should work in this remaining graph. However, we may not be able to use the 2 round case proof as the high degree vertices can still be used by the algorithm.}

This distribution $\calD'$ that we use for Yao's lemma is as follows:
with probability $1/2$, the instance is a trivial partition into a single part. With the remaining probability, it is the $2$-round distribution from the previous section with for $k = \omega(n^{2/3})$. 
%\david{I don't know the exact value of $k$ we should use: just that it has to be larger than $\sqrt{n}$ aditi: with $k>>n^{2/3}$, you can get away with making $o(n^{4/3})$ queries and the rest of the proof should be similar. } 

\paragraph{Proof.}
Fix any deterministic algorithm $\calA$ which makes $o(n^{4/3}|\calP|^{2/3})$-many queries on any partition $\calP$ drawn from the above distribution. 
Let $Q_1$ be the first round queries. Since with $0.5$ the partition may be the trivial partition (and $\calA$ doesn't know this before Round 1), we must have $|Q_1| = o(n^{4/3})$.
Note that $Q_1$ is the same even for the ``hard partitions'', and indeed, henceforth we assume that the instance is the hard $2$-round instance (if not, we let the algorithm win, which happens with $0.5$ probability).
%In the first round, the algorithm is allowed only $O(n^{4/3})$ queries, as otherwise it would exceed the allowed budget on the trivial instance. 
%We condition from now on on the instance being the hard $2$-rounds instance.

We say an item $x \in U$ is ``revealed" after the first round of queries $Q_1$ if either there is a query $(x,y) \in Q_1$ with $u_x = u_y$, or if $|\{y: (x, y)\in Q_1\}| \geq k/3$.
Any $i\in [k]$ is ``revealed" if $U_i$ contains one item that is revealed. 
We define $U' := \{x \in U: x \text{ is not revealed}\}$.

\begin{claim}
    If $|Q_1| = o(n^{4/3})$, then with probability $1-o(1)$ there are at most $o(k)$ many revealed $i$, and $|U'| = \Omega(n)$.
\end{claim}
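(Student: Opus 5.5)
We split the revealed items into two kinds and bound each separately. An item $x$ is \emph{heavy} if its degree in $Q_1$ satisfies $\deg_{Q_1}(x) \geq k/3$. It is \emph{collided} if some query $(x,y)\in Q_1$ has $u_x=u_y$.

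The heavy set $H$ is a deterministic function of $Q_1$, and $Q_1$ is fixed because $\calA$ is deterministic and round one precedes any answer. Since $\sum_x \deg_{Q_1}(x) = 2|Q_1|$, we get
\[
|H| \leq \frac{6|Q_1|}{k} = o\!\left(\frac{n^{4/3}}{k}\right).
\]
With $k=\omega(n^{2/3})$ this is $o(n^{2/3}) = o(k)$. Hence heavy items reveal at most $|H| = o(k)$ indices $i$, and they also account for only $o(n)$ items of $U$.

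For collided items we use the independence of the labels $u$ from $Q_1$. Each query $(x,y)\in Q_1$ has $u_x=u_y$ with probability exactly $1/k$, so the expected number of monochromatic queries is $|Q_1|/k = o(n^{4/3}/k) = o(n^{2/3}) = o(k)$. Each monochromatic query reveals at most one index $i$ (both endpoints share it) and at most two items. By Markov's inequality, with probability $1-o(1)$ there are $o(k)$ monochromatic queries. The little-oh is handled by the same slowly-growing-$C$ convention as in the footnote to the analogue of $G_2$. Consequently, with probability $1-o(1)$, collided items reveal $o(k)$ indices and number $o(k) = o(n)$ items.

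Combining the two bounds, the number of revealed $i$ is $o(k)$ with probability $1-o(1)$. For the size of $U'$, it suffices that the revealed items number $o(n)$, which we just showed. This already gives $|U'| \geq n - o(n) = \Omega(n)$, so we do not need the stronger bound of the whole revealed $U_i$'s being small.

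If the subsequent argument needs the union of the revealed $U_i$'s to be $o(n)$, we add one more step. Each $|U_i|$ concentrates at $\ell = n/k$, and the number of revealed indices is $o(k)$. Since $H$ is fixed, the indices it reveals are at most $|H|$ specific values $u_x$. A union bound with Chernoff over all $k$ indices (valid when $\ell$ is large) gives $\max_i |U_i| \leq 2\ell$ whp, so the revealed $U_i$'s total $o(k)\cdot 2\ell = o(n)$.

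The main obstacle is this last step, because the set of revealed indices depends on the labels. The uniform max-bound on $|U_i|$ sidesteps that dependence, provided $\ell$ is large enough for the Chernoff tails to beat the union bound over $k$ indices.
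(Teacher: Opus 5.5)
Your proof is correct and follows the paper's argument: you bound the high-degree items deterministically by degree counting ($|H|\le 6|Q_1|/k = o(k)$; your constant is the right one, the paper writes $3|Q_1|/k$), and you bound the collision-revealed indices using $\Exp[\#\{(x,y)\in Q_1: u_x=u_y\}] = |Q_1|/k = o(k)$ together with Markov's inequality. Your explicit deduction that only $o(k)=o(n)$ items are revealed, so that $|U'| \geq n-o(n)$, fills in a step the paper leaves implicit, and your final Chernoff step on $\max_i |U_i|$ is not needed for the claim as stated.
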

\begin{proof}
The number of vertices with $|\{y: (x, y)\in Q_1\}| \geq k/3$ must be less than $3|Q_1|/k$. Since $|Q_1|=o(n^{4/3})$ and $k=\omega(n^{2/3})$, this is $o(k)$. Thus, there are $o(k)$ parts revealed because they contain such a vertex. 
%    First, the number of vertices with  $|\{y: (x, y)\in Q_1\}| \geq k/3$ must be less than $3|Q_1|/k$: since $|Q_1| = O(n) = O(kl)$ and $k = n^{2/3}$, this is $O(l) = o(k)$. Hence, there are $o(k)$ parts revealed because they contain such a vertex.

For any $(x,y)\in Q_1$, $\Pr(u_x = u_y) = 1/k$. Hence, the expected number of revealed parts because they contain two items of a query is $|Q_1| / k = o(k)$ as well. %Since $n = kl$ and $k = n^{2/3}$, $l = o(k)$. 
Thus, Markov's inequality ensures that, with probability $1-o(1)$, the number of revealed part is $o(k)$.
\end{proof}

After the first round, the partition of points in all revealed $i$ is given to the algorithm. In addition, $k$ is given to the algorithm.
Let $Q_2$ be the second round of queries. We assume therefore that all those are within $U'$. Our claim is essentially that this instance is the same as the two round instance with known $k$. However, we need to adjust the proofs to account for the results of queries in $Q_1$ -- which the algorithm could possibly use to gain information.

Recall~\Cref{fact:size_R} that $|R| = (1 - o(1))kl^{2/3}$ with probability $1-o(1)$, and this even is called $G_1$.

\begin{observation}
    Conditioned on $G_1$, it holds that that $|Q_2\cap (R\times R)|\leq |Q_2|=o(|R|^2)$. 
\end{observation}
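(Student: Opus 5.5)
The observation is essentially a calculation: bound $|Q_2|$ by the per-round budget, and lower-bound $|R|^2$ using $G_1$. The first inequality $|Q_2\cap(R\times R)|\le|Q_2|$ is trivial. What remains is to compare the second-round budget with $|R|^2$.

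\textbf{Step 1: bound $|Q_2|$.} The algorithm $\calA$ makes $o(n^{4/3}|\calP|^{2/3})$ queries on the instance. In the hard world, $|\calP|\le 2k+2=O(k)$, so
\[
|Q_2| = o(n^{4/3}k^{2/3}).
\]
This is a deterministic bound, because the budget is assumed to hold on every partition in the support of $\calD'$. With $n=kl$ we get $n^{4/3}k^{2/3}=k^{4/3}l^{4/3}k^{2/3}=k^2l^{4/3}$, hence $|Q_2|=o(k^2l^{4/3})$.

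\textbf{Step 2: lower-bound $|R|^2$.} Conditioned on $G_1$ (\Cref{fact:size_R}), $|R|=(1-o(1))kl^{2/3}$, so $|R|^2=(1-o(1))k^2l^{4/3}$.

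\textbf{Step 3: combine.} Together these give $|Q_2|=o(|R|^2)$ on $G_1$.

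The only point needing care is the quantifier hidden in the $o(\cdot)$ notation, in the sense of the remark after \Cref{fact:size_R}. The claim should be read as: for any constant $\varepsilon>0$, if the budget constant $C$ is large enough (so $|Q_2|\le k^2l^{4/3}/C$), then on $G_1$ we have $|Q_2|\le \varepsilon|R|^2$. We also need $k=\omega(n^{2/3})$ to be compatible with $l$ being large enough for \Cref{fact:size_R}. Since $l=n/k$ and we may take, e.g., $k=n^{2/3}\log n$, we still have $l\to\infty$. So this is not a real obstacle, just bookkeeping.
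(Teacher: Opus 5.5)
Your proof is correct and is the same one-line calculation as the paper's: the per-round budget gives $|Q_2| = o(n^{4/3}k^{2/3}) = o(k^2l^{4/3})$ (using $|\calP|\le 2k+2$ and $n=kl$), while $G_1$ gives $|R|^2=(1-o(1))k^2l^{4/3}$. The paper's proof writes $o(k^2l^{2/3})$ here, which is a typo; your exponent $l^{4/3}$ is the right one.
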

\begin{proof}
	This is simply because $|Q_2| = o(n^{4/3}k^{2/3}) = o(k^2l^{2/3})$.
\end{proof}
%\begin{lemma}
%We have $\mathbb{E}[|Q_2\cap (R\times R)| | U', A(Q_1)]=o(k^2 l^{2/3})$. Additionally, with probability $1-o(1)$, conditioned on $U'$ and $A(Q_1)$, $|Q_2\cap (R\times R)|=o(k^2 l^{2/3})$. We denote this event by $G_1$
%\end{lemma}
%\begin{proof}
%Fix a query $(x, y) \in Q_2$, with $x, y \in U'$. The first round queries related to $x$ and $y$ are NO queries only based on the value of $u_x, u_y$: in particular, they are independent of $r_x$ and $r_y$. We can therefore reason as if $r_x$ and $r_y$ are sampled only in the second round. Hence, $\Pr(r_x = r_y = 1 | A(Q_1), x,y\in U') = 1/l^{2/3}$, as $r_x$ and $r_y$ are independently equal to $1$ with probability $1/l^{1/3}$. 

%By linearity of expectation, we can therefore conclude that:
% $\mathbb{E}[|Q_2\cap (R\times R)| | A(Q_1), U'] = |Q_2| \cdot 1/l^{2/3} = o(n^{4/3}k^{2/3}) \cdot 1/l^{2/3} = o(k^2 l^{2/3})$. 

%\end{proof}

\begin{lemma}\label{lem:unknown-b2}
    We have $\mathbb{E}[|Q_2\cap \cup_{i=1}^k(R_i\times R_i)| \mid A(Q_1)]=o(k l^{2/3})$. With probability $1-o(1)$, conditioned on $A(Q_1)$, $|Q_2\cap \cup_{i=1}^k(R_i\times R_i)|=o(k l^{2/3})$. We refer to this event as $G_2$.  
\end{lemma}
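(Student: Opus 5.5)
We bound the expectation by linearity over the queries $(x,y)\in Q_2$. We may assume both endpoints lie in $U'$, since the algorithm already knows the parts of all revealed $i$, so any query touching them is uninformative and can be discarded. For a fixed such pair, the goal is the bound
\[
\Pr[u_x=u_y,\ r_x=r_y=1 \mid A(Q_1)] \le \frac{O(1)}{k\,l^{2/3}} .
\]
Summing over $|Q_2| = o(n^{4/3}k^{2/3}) = o(k^2l^{4/3})$ queries then gives $\mathbb{E}[\,\cdot \mid A(Q_1)] = o(kl^{2/3})$, and Markov's inequality gives the high-probability statement $G_2$. (Here $|\calP| \le 2k+2$, so the budget is $o(n^{4/3}k^{2/3})$ up to constants.)

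The $r$-part of the bound comes first. Every answer in $A(Q_1)$ touching an unrevealed item is NO. Moreover, for items outside revealed parts, the event of being revealed depends only on the $u$-values and on the query graph $Q_1$, and $Q_1$ is fixed in advance because $\calA$ is deterministic and round 1 is nonadaptive. Hence, conditioned on $A(Q_1)$ and on the revealed set, the bits $r_x, r_y$ for $x,y\in U'$ are still independent $\mathrm{Bernoulli}(l^{-1/3})$. We must handle the small dependence through $a,b$: a query $(a,b)\in Q_1$ would answer YES. We either condition on $(a,b)\notin Q_1$, which has probability $1-o(1)$ as in the known-$k$ case, or note that this event only perturbs the $r$-distribution by a $1+o(1)$ factor. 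Either way the $r$-factor contributes $l^{-2/3}(1+o(1))$.

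The $u$-part, $\Pr[u_x=u_y\mid A(Q_1)] = O(1/k)$, is the main obstacle. Conditioned on $A(Q_1)$ and on $x,y$ being unrevealed, $u_x$ is uniform over the unrevealed indices (of which there are $(1-o(1))k$ by the Claim) minus the values forbidden by its first-round neighbours. Since $x$ is unrevealed it has fewer than $k/3$ neighbours in $Q_1$, so at least roughly $2k/3 - o(k)$ values remain available. The conditional law is not exactly uniform, however, because the constraint that neighbours have distinct labels couples the $u$'s along the query graph.

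To control this, I would condition on all $u_z$ for $z\notin\{x,y\}$. Then $u_x$ is uniform over a set $S_x$ of size at least $k/3$, and likewise $u_y$ over $S_y$. If $(x,y)\notin Q_1$ these are independent, so $\Pr[u_x=u_y] \le \frac{|S_x\cap S_y|}{|S_x||S_y|} \le \frac{3}{k}$. If $(x,y)\in Q_1$, the answer NO forces $u_x\ne u_y$, which only helps (the probability is then $0$). Averaging back over the other labels keeps the bound $O(1/k)$. Multiplying the two factors gives the desired $O(1/(kl^{2/3}))$ per query.
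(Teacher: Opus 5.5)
Your proposal is correct and is essentially the paper's proof. Both use linearity over $Q_2$ and condition on the neighbours' labels, so that $u_x,u_y$ are uniform over allowed sets of size at least $2k/3$; your ratio $\frac{|S_x\cap S_y|}{|S_x||S_y|}$ is exactly the paper's $\frac{k-k_x-k_y+k_{xy}}{(k-k_x)(k-k_y)}$. Both then use independence of the $r$-bits for the $l^{-2/3}$ factor and finish with Markov's inequality.

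You are more careful than the paper on three points: the dependence through $(a,b)\in Q_1$, the case $(x,y)\in Q_1$, and excluding revealed indices from the range of $u_x$. None of these changes the argument.
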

\begin{proof}
Fix a query $(x, y) \in Q_2$, and consider the randomness remaining after round 1 in the choice of $u_x$ and $u_y$. 
As $x,y\in U'$, they are chosen uniformly at random in $[k]$, conditioned on $\forall (x, z) \in Q_1, u_x \neq u_z$ and $\forall (y, z) \in Q_1, u_y \neq u_z$. Suppose that all such $u_z$ are given, and let $k_x := |\{u_z: (x,z) \in Q_1\}|, k_y := |\{u_z: (y,z) \in Q_1\}|$ and $k_{xy} := |\{u_z: (x,z) \in Q_1\}  \cap \{u_z: (y,z) \in Q_1\} |$. 
We therefore have:
\[\Pr(u_x = u_y | A(Q_1), k_x, k_y, k_{xy}) = \frac{k - k_x - k_y + k_{xy}}{(k-k_x)(k-k_y)}.\]
% As $x,y\in U'$, $u_x$ and $u_y$ are chosen uniformly at random in $[k]$, conditioned on $\forall (x, z) \in Q_1, u_x \neq u_z$ and $\forall (y, z) \in Q_1, u_y \neq u_z$. 
Now, conditioning on $x$ and $y$ being not revealed implies $k_x, k_y \leq k/3$. Thus, in this case the probability becomes:
% there are at most $k/3$ such $z$ for both $x$ and $y$. Suppose that all such $u_z$ are given, and let $k_x := |\{u_z: (x,z) \in Q_1\}|, k_y := |\{u_z: (y,z) \in Q_1\}|$ and $k_{xy} := |\{u_z: (x,z) \in Q_1\}  \cap \{u_z: (y,z) \in Q_1\} |$. 
\begin{align*}
    \Pr(u_x = u_y | A(Q_1), x,y \in U') \leq \frac{k}{(k-k/3)^2} = \frac{9}{4k}.
\end{align*}
%\aditi{here conditioning on $x,y\in U'$ already implies the bounds on $k_x,k_y,k_{xy}$, right?}\david{yes, this is what wanted to say, is this clearer now?}
% , even without the conditioning on $k_x, k_y, k_{xy}$, we have $\Pr(u_x = u_y | A(Q_1), x,y\in U') \leq \frac{9}{4k}$.

    In addition, as in the previous proof, $u_x, u_y$ are independent from $r_x, r_y$ and $\Pr(r_x = r_y = 1 | A(Q_1), x,y\in U') = l^{-2/3}$. Hence,
    $\Pr((x,y) \in \cup_{i=1}^k(R_i\times R_i)) \leq \frac{9}{4k} \cdot l^{-2/3}$, and linearity of expectation ensures that:
    $$\mathbb{E}[|Q_2\cap \cup_{i=1}^k(R_i\times R_i)| | A(Q_1), U'] \leq |Q_2| \cdot \frac{9}{4k} \cdot l^{-2/3} = o(k^2 l^{4/3}) \cdot \frac{9}{4k} \cdot l^{-2/3} = o(k l^{2/3}).$$

    Markov inequality concludes.
\end{proof}

\begin{lemma}\label{lem:unknown-b3}
    Let $R'=\{x\in R \cap U'\mid \text{for all }(x,y)\in Q_2, u_x\neq u_y\}$, and $G_3$ be the event that $(a,b)\in (R'\times R')\setminus Q_2$.
Then, $\Pr(G_2, G_3 | A(Q_1)) = 1-o(1).$
\end{lemma}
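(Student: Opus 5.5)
The plan is to mirror the proof of \Cref{lem:b3}, now working conditionally on $A(Q_1)$ and with $Q_2$ in the role that $Q_1$ played before. Throughout we condition on $A(Q_1)$ together with the high-probability events from the first round: the Claim that only $o(k)$ indices $i$ are revealed and $|U'| = \Omega(n)$, and $G_1$. Their failure probabilities are added at the end.

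The first step is to show that $|R \cap U'| = (1-o(1))|R|$. The key observation is that whether $x$ is revealed depends only on the $u$-values, because every answer in $Q_1$ touching a non-revealed $x$ is a NO determined by the $u$'s. The only exception is the single pair $(a,b)$, which we may discard since $\Pr[(a,b)\in Q_1]=o(1)$. Hence, conditioned on $A(Q_1)$, the bits $r_x$ for $x \in U'$ are still i.i.d.\ with mean $l^{-1/3}$. By Chernoff, $|R\cap U'| = (1-o(1))|U'|l^{-1/3}$. To compare with $|R|$, I need $|U \setminus U'| = o(n)$. This holds because the vertices of degree $\geq k/3$ number $o(k) \le o(n)$, and the endpoints of same-$u$ queries in $Q_1$ number at most $2\cdot(\text{expected } |Q_1|/k) = o(k)$, again $o(n)$. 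So the Claim can be upgraded to $|U'| = (1-o(1))n$, and then $|R\cap U'| = (1-o(1))kl^{2/3}$.

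The second step uses $G_2$ from \Cref{lem:unknown-b2}, which holds with probability $1-o(1)$ given $A(Q_1)$. As in \Cref{lem:b3}, each $x \in (R\cap U')\setminus R'$ is witnessed by a query of $Q_2 \cap \bigcup_i (R_i\times R_i)$, so $|(R\cap U')\setminus R'| = o(kl^{2/3})$. Together with the first step this gives $|R'| = (1-o(1))|R|$.

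The third step handles the location of $(a,b)$. Conditioned on $A(Q_1)$ and the first-round good events, $(a,b)$ is uniform on $R\times R$, up to excluding $Q_1$ pairs and requiring that it not be revealed. The catch is that $Q_2$ is chosen as a function of $A(Q_1)$ only, while $R'$ depends on the $u,r$ values, which are independent of the choice of $a,b$ within the remaining set. So, as in \Cref{lem:b3},
\[
\Pr[G_3 \mid A(Q_1), G_1, G_2] \geq \frac{|R'|^2 - |Q_2|}{|R|^2} = 1 - o(1),
\]
using $|Q_2| = o(n^{4/3}k^{2/3}) = o(k^2l^{4/3}) = o(|R|^2)$. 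A union bound over the failures of $G_1$, $G_2$, the Claim, and $(a,b)\in Q_1$ finishes the proof.

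The main obstacle is making the first step rigorous. One has to argue carefully that conditioning on $A(Q_1)$ leaves the $r$-bits of non-revealed vertices fresh, and that $(a,b)$ remains uniform among pairs in $R$ outside the revealed parts. The subtle point is that $a,b$ are defined via $R$, so their location correlates with which answers are YES. I would handle this by first defining the event that $(a,b)\notin Q_1$ and that $a,b$ are non-revealed, which has probability $1-o(1)$, and on that event arguing via the same coupling as the Observation in \Cref{sec:lower-bound-known-k}.
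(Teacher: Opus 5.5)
Your outline matches the paper's: redo \Cref{lem:b3} with $Q_2$ in place of $Q_1$, bound $|R\setminus R'|$ via $G_2$, and use near-uniformity of $(a,b)$. Your first step also repairs something the paper omits. Its inequality $|R\setminus R'|\le|Q_2\cap\bigcup_i(R_i\times R_i)|$ ignores the revealed points of $R$.

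That step is easier than you make it. The Claim's counting already gives $|U\setminus U'|=o(k)$ w.h.p., hence $|R\setminus U'|=o(kl^{2/3})$. This needs no appeal to fresh $r$-bits, which is a delicate claim here: the whole partition of each revealed $U_i$, including its non-revealed points, is handed to the algorithm.

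\textbf{The gap is your second step.} With $R'$ defined by ``$u_x\neq u_y$ for all $(x,y)\in Q_2$'', a point $x\in(R\cap U')\setminus R'$ is only guaranteed a query $(x,y)\in Q_2$ with $u_y=u_x$. Nothing forces $r_y=1$, so the witness lies in $\bigcup_i(R_i\times U_i)$, which $G_2$ does not control. No sharper estimate fixes this. Suppose round $2$ gives every point of $U'$ about $kl^{1/3}/\omega$ random partners, with $\omega\to\infty$ slowly. Then $|Q_2|=o(k^2l^{4/3})=o(n^{4/3}k^{2/3})$ and $G_2$ holds. Yet each point of $R$ expects $\Omega(l^{1/3}/\omega)\to\infty$ partners with the same $u$-value. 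So $R'$ is typically a vanishing fraction of $R$ and $\Pr[G_3]=o(1)$: with this definition of $R'$ the lemma is false. The paper's proofs of \Cref{lem:b3} and of this lemma make the same slip (``there is some $y$ with $u_y=i$ such that $(x,y)\in Q_1\cap(R_i\times R_i)$'').

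\textbf{The fix} is to define $R'$ by what indistinguishability actually needs. Let $R'$ be the set of $x\in R\cap U'$ having no query $(x,y)\in Q_2$ with $y\in R_{u_x}$. For such $x$, every query $(x,y)\neq(a,b)$ is answered NO whether or not $x\in\{a,b\}$:
\begin{itemize}
\item if $u_y\neq u_x$, no part other than $\{a,b\}$ contains both points, and $(x,y)\neq(a,b)$;
\item if $y\in U_{u_x}\setminus R_{u_x}$, then $y$'s part is $U_{u_x}\setminus R_{u_x}$, while $x$'s part is $R_{u_x}\setminus\{a,b\}$, $\{a,b\}$, or $\{x\}$.
\end{itemize}
With this definition, every $x\in(R\cap U')\setminus R'$ is an endpoint of a query in $Q_2\cap\bigcup_i(R_i\times R_i)$. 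Hence $|(R\cap U')\setminus R'|\le 2|Q_2\cap\bigcup_i(R_i\times R_i)|=o(kl^{2/3})$ on $G_2$. Your first and third steps, as well as \Cref{lem:ab-uniform}, then go through unchanged.
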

\begin{proof}
    This proof follows the same line as that of \Cref{lem:b3}: first, we note that
    $|R \setminus R'| \leq |Q_2 \cap \cup_{i=1}^k(R_i\times R_i)| $. Conditioned on $A(Q_1)$ and $G_2$, it therefore holds that $|R \setminus R'| = o(k l^{2/3})$. Since  $|R| = (1+o(1))k l^{2/3}$, this implies that, conditioned on $G_2$, $R' =(1 + o(1))k l^{2/3}$, and hence $|R' \times R' \setminus Q_2| = (1+o(1))k^2 l^{4/3} = (1+o(1))|R\times R|$.

    The pair $(a, b)$ is initially chosen uniformly in $R \times R$: as, conditioned on $G_2$, $R' \times R' \setminus Q_2$ takes a $1-o(1)$ proportion of this set, $(a,b)$ lies in it with probability $1-o(1)$. 
    We conclude the proof using that $G_2$ occurs with probability $1-o(1)$ (\Cref{lem:unknown-b2}).
\end{proof}

\begin{lemma}\label{lem:ab-uniform}
    For any $x,y$, $\mathbb{P}(x=a,y=b\mid x, y \in R', A(Q_1), A(Q_2), G_1,G_2,G_3) \geq \frac{1+o(1)}{k^2l^{4/3}}$
\end{lemma}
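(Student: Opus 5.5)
We plan to show that, conditioned on everything the algorithm has seen after two rounds together with the good events, the pair $(a,b)$ is (essentially) uniform over $(R'\times R')\setminus Q_2$, whose size is at most $|R|^2=(1+o(1))k^2l^{4/3}$ under $G_1$. Pointwise probabilities are then $\frac{1}{|(R'\times R')\setminus Q_2|}\geq \frac{1+o(1)}{k^2l^{4/3}}$ for every $(x,y)$ in that set. The argument mirrors the Observation in \Cref{sec:lower-bound-known-k}.

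\textbf{Step 1 (answers do not depend on $a,b$).} Fix $x\in R'$. By definition of $R'$, $x\in U'$, so $x$ is not revealed. Hence every query $(x,z)\in Q_1$ has $u_x\neq u_z$, and its answer is NO regardless of whether $x\in\{a,b\}$. Likewise, every $(x,z)\in Q_2$ has $u_x\neq u_z$ by definition of $R'$, so it also answers NO, except for $(a,b)$ itself. That exception is excluded on $G_3$, since $(a,b)\notin Q_2$, and also for $Q_1$, since $a,b\in U'$.

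Now condition on $G_1,G_2,G_3$ and on the full vector $\{u_z,r_z\}_{z\in U}$. Then $A(Q_1)$ and $A(Q_2)$ are deterministic functions of $\{u_z,r_z\}$ alone. This includes the revealed blocks handed to the algorithm, because revealedness depends only on $u$ and $Q_1$. Since $Q_2$ is a function of $A(Q_1)$, the set $R'$ is also determined by $\{u_z,r_z\}$. The argument also uses that the DOUBLETON/SINGLETON coin is independent of everything.

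\textbf{Step 2 (uniformity).} By (D3.), $(a,b)$ is drawn uniformly from $R\times R$ independently of everything except $R$. Conditioning on $G_3$ therefore makes it uniform on $(R'\times R')\setminus Q_2$. By Step 1, further conditioning on $A(Q_1),A(Q_2)$ does not change this distribution. For a fixed realization of $\{u,r\}$ and $x,y\in R'$ with $(x,y)\notin Q_2$, this gives
\[
\Pr(a=x,b=y\mid\cdots)=\frac{1}{|(R'\times R')\setminus Q_2|}\geq \frac{1}{|R|^2}.
\]
Under $G_1$ this is $\frac{1+o(1)}{k^2l^{4/3}}$. Averaging over the realizations consistent with the conditioning preserves the bound. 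The case $(x,y)\in Q_2$ is not needed downstream, since only unqueried pairs matter for the third round; if it must be covered, the statement would be read for $(x,y)\notin Q_2$.

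\textbf{Main obstacle.} The subtle step is Step 1's claim that conditioning on $A(Q_1)$, and on the disclosure of revealed parts and of $k$, introduces no dependence between $(a,b)$ and the answers. The concern is that the definition of $U'$ uses $u$-values of neighbours. I would handle this by noting that all these quantities are measurable with respect to $\{u_z,r_z\}$, while $(a,b)$ enters only through the pair itself. I would also check that the $o(1)$ in $G_1$ is the two-sided concentration $|R|=(1\pm o(1))kl^{2/3}$ from \Cref{fact:size_R}; the Chernoff bound there gives both tails.
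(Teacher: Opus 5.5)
Your Steps 1--2 follow the paper's skeleton. On the good events the answers are functions of $\{u_z,r_z\}$, so $(a,b)$ is uniform on $S:=(R'\times R')\setminus Q_2$, and $\Pr(a=x,b=y\mid\cdots)=1/|S|$ for $(x,y)\in S$. The gap is in what you do with $1/|S|$.

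The $\geq$ in the statement is evidently a typo. As you noticed, the literal claim fails for $(x,y)\in Q_2$, where the probability is $0$. Moreover, the lemma is used only through the union bound $\Pr[(a,b)\in Q_3]\le\sum_{(x,y)\in Q_3}\Pr(a=x,b=y\mid\cdots)$. That bound needs $\Pr(a=x,b=y\mid\cdots)\le\frac{1+o(1)}{k^2l^{4/3}}$ for every pair, i.e.\ a \emph{lower} bound $|S|\ge(1-o(1))k^2l^{4/3}$. The paper takes this from \Cref{lem:unknown-b3}: on $G_2$, $|R\setminus R'|\le|Q_2\cap\bigcup_i(R_i\times R_i)|=o(kl^{2/3})$, so $|R'|=(1-o(1))|R|$, and $|Q_2|=o(|R|^2)$. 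Your bound $|S|\le|R|^2$ gives only the direction that is useless downstream, and restricting to $(x,y)\notin Q_2$ does not repair that. Your argument never uses $G_2$, and that is exactly the missing ingredient. Relatedly, $G_1$ as made precise in the remark after \Cref{fact:size_R} is the lower tail of $|R|$, which is what the correct direction needs.

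Two claims in your Step 1 are also incorrect as stated, though the paper's proof glosses over the same points.
\begin{enumerate}
\item $a,b\in U'$ does not exclude $(a,b)\in Q_1$. Non-revealedness only forces $u_a\neq u_b$ for that query, and then its answer is YES in DOUBLETON and NO in SINGLETON. You must add $(a,b)\notin Q_1$ to the good events, which costs only $O(|Q_1|/n^2)=o(1)$.
\item The \emph{set} of revealed indices is a function of $u$ and $Q_1$, but the disclosed partition of a revealed block $U_{u_a}$ displays $a$ as a separate part. Under the paper's literal definition, $a\in U'$ (i.e.\ $a$ itself is not revealed) does not prevent $u_a$ from being a revealed index. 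Otherwise the algorithm learns $a$ and can query it against everyone in round two. You therefore also need $u_a,u_b$ to be unrevealed. This is again a $1-o(1)$ event, since the $o(k)$ revealed blocks contain only an $o(1)$ fraction of $R$.
\end{enumerate}
With these two events added and the lower bound on $|S|$ from \Cref{lem:unknown-b3}, your argument yields the two-sided estimate $\frac{1+o(1)}{k^2l^{4/3}}$ that the paper's proof actually establishes.
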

\begin{proof}
    Just as in the known-$k$ case, the crucial observation is that, once conditioned on $\{G_1, G_2, G_3\}$, the values of $a$ and $b$ are independent of the answers to the first two rounds queries -- and therefore, $G_3$ ensures that the pair $(a, b)$ is uniform in $(R'\times R')\setminus Q_2$. As shown in the proof of \Cref{lem:unknown-b3}, $|(R'\times R')\setminus Q_2| = (1+o(1))k^2 l^{4/3}$ with probability $1-o(1)$. Since $G_2$ and $G_3$ occur with probability $1-o(1)$, this concludes the lemma.
\end{proof}

Hence, after the first two rounds of queries, the pair $(a,b)$ is almost uniform in a set of size $k^2l^{4/3}$. Just as in the known-$k$ case, if the third round uses significantly less queries, it therefore cannot hit the pair $(a,b)$ and the algorithm cannot distinguish between SINGLETONS and DOUBLETON with all but negligible probability. We repress the repetition of this calculation.
This concludes the proof of \Cref{thm:unknown-lb}.

%\newpage
\section{Deterministic algorithms with unknown number of parts } \label{sec:det-alg-unknown-k}

Black, Mazumdar, and Saha~\cite{BMS25} studied the rounds-vs-query trade-off for learning a partition $\calP$ when $|\calP|$ is known, and give a complete answer for deterministic algorithms.
As mentioned in the introduction, the similar trade-off question for the unknown $|\calP|$ case, to the best of our knowledge, was not studied even for deterministic algorithms.
In this section, we show an {\em exponential} gap in the number of rounds needed by deterministic algorithms of (near) optimal query-complexity, between the known and unknown $|\calP|$ cass.
%
%
%currently we do not have a full understanding of round-versus-query complexity trade-offs even for deterministic algorithms (unlike the case of known $|\calP|$, as illustrated in~\cite{BMS25}) -- while the two previous sections draw a clean picture of the landscape for randomized algorithms. Below, we make a few observations and leave some questions open. Throughout this section, $\calP$ is some hidden partition and algorithms do not have 
%any idea about $|\calP|$ (other than it is an integer between $1$ and $n$).

\subsection{Algorithms.} 

We begin with a simple algorithm akin to the deterministic algorithm in BMS~\cite{BMS25}.
\begin{theorem}\label{thm:upper-bound-det-unknown-k}
	For any $r \geq 1$, there is a deterministic algorithm $\calA$ which makes $O(n^{1 + \frac{1}{r}}|\calP|)$-many PAIR queries in $r$-rounds to learn any partition $\calP$. 	
	The algorithm doesn't need to know $|\calP|$.
\end{theorem}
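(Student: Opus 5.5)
The plan is a geometric ``guess-and-grow'' strategy. We maintain a set of identified representatives, i.e.\ one element from each part discovered so far. In round $t$ we query a \emph{budgeted prefix} of the still-unclassified elements against one another. The budget grows by a factor $n^{1/r}$ per round, relative to the number of parts already certified to exist. The point is that the number of parts discovered so far is always a certified lower bound on $|\calP|$, so it can safely scale the next round's budget. This mirrors the BMS deterministic scheme, with the unknown $k$ replaced by the running lower bound.

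Concretely, fix an arbitrary order on the universe $U$. Let $k_t$ be the number of distinct parts seen by the end of round $t$, with $k_0=1$. Let $S_t$ be the set of unclassified elements, meaning those not yet known to belong to a discovered part. In round $t$, take the first $m_t := n^{t/r} k_{t-1}$ elements of $S_{t-1}$, or all of them if fewer remain. Query every pair among them, and also query each of them against one representative of every part discovered so far. In the final round $t=r$ we have $m_r \ge n$, so every remaining element is compared with every other remaining element and with all representatives. Hence all parts are learned, which gives correctness.

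The cost of round $t$ is at most $m_t^2 + m_t k_{t-1}$. The second term is at most $n|\calP|$. For the first term, observe that if the prefix of size $m_t$ did not exhaust $S_{t-1}$, then after round $t$ we know the partition restricted to that prefix. We then try to argue that the $m_t$ queried elements reveal enough parts to charge the cost. This is exactly the step I expect to fail: a prefix may lie entirely in one huge part, so all-pairs queries within it are wasteful. I would therefore replace the all-pairs step by a \emph{sequential-free peeling}. Compare each element of the prefix against a representative set $T_t$, where $T_t$ consists of one element from each part among the first $n^{t/r}$ unclassified elements. But $T_t$ is itself unknown before the round.

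The alternative I would commit to is the following. In round $t$, query all pairs among the first $\ell_t := n^{1/r}$-fold more \emph{distinct-part candidates}. Then query every unclassified element against all of those candidates. The key invariant is that after round $t$, either everything is classified, or at least $n^{t/r}$ parts are known, or at most $n^{1-t/r}|\calP|$ elements remain unclassified. I would prove this by induction, which gives cost $O(n^{1+1/r}|\calP|)$ per round. Establishing this invariant is the main obstacle. The rest is bookkeeping with $k_{t-1}\le |\calP|$.
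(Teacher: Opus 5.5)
\textbf{There is a genuine gap.} Your proposal stops where the proof has to happen. The committed algorithm is underspecified: which elements are ``distinct-part candidates'' cannot be known before the round's answers arrive. You also explicitly leave the key invariant unproven.

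Worse, the invariant fails for the natural reading of your scheme, and the failure is structural. Suppose round $t$ picks a candidate set $C_t$ of at most $n^{1/r}k_{t-1}$ unclassified elements. It queries all pairs in $C_t$ and every unclassified element against $C_t$.
\begin{itemize}
\item Since the algorithm is deterministic, the adversary can declare each $C_t$ to be a part $A_t$ of its own, and let the remaining set $B$ be a single part.
\item Then $k_t=O(t)$, so $\sum_{t<r}|C_t|=O(r^2n^{1/r})=o(n)$ for fixed $r\ge 2$.
\item No pair inside $B$ is ever queried in rounds $1,\dots,r-1$.
\item All answers so far are identical if $B$ is instead all singletons, or all singletons except one merged pair $\{x,y\}$.
\item So the non-adaptive last round must query every pair of $B$. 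That is $\Theta(n^2)$ queries on an instance with $|\calP|=r$, far above $O(n^{1+1/r}r)$.
\end{itemize}
The certified count of discovered parts is the wrong quantity to drive the schedule. It can stay $O(t)$ while almost the whole universe has never been compared internally.

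\textbf{The missing idea} is to need no lower bound on $|\calP|$ at all, and to make every element participate in every round through \emph{local} comparisons. This is the paper's construction:
\begin{itemize}
\item Fix an order and start with singleton blocks.
\item In round $t$, merge $q=\lceil n^{1/r}\rceil$ consecutive blocks into a parent block.
\item Query all pairs among one representative of each known local part of the children. This yields $\calP$ restricted to each parent block.
\item After $r$ rounds, the single remaining block is $U$.
\end{itemize}
The algorithm is oblivious to $k=|\calP|$. Only the analysis uses it, via the fact that a child block of size $q^{t-1}$ has at most $\min(q^{t-1},k)$ local parts. This bounds the cost of round $t$ by $O\bigl(\frac{n}{q^t}\,q^2\min(q^{t-1},k)^2\bigr)=O\bigl(nq\,\min(q^{t-1},k)^2/q^{t-1}\bigr)$. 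Split the sum over $t$ at $q^{t-1}=k$. This gives two geometric series, each $O(k)$, for a total of $O(nqk)=O(n^{1+1/r}|\calP|)$.
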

\begin{proof}
	Let $k:=|\mathcal P|$ and let $q:=\lceil n^{1/r}\rceil$; we assume $q\geq 2$.
	Note that the algorithm does not know $k$.  Fix an arbitrary ordering of the
	elements of the universe.  The algorithm maintains a partition of the
	universe into consecutive blocks, together with the restriction of
	$\calP$ inside each block. Initially each block is a singleton.
	
	In
	round $t$, we group $q$ consecutive current blocks into a ``parent block''; each of these consecutive blocks are ``children blocks'' of this parent block.  In
	each parent block, we take one representative from every currently known
	local part of its children blocks, and query all pairs of these representatives.  Two local
	parts are merged if and only if the corresponding representative pair answers YES.
	This recovers the exact restriction of $\mathcal P$ inside the parent block.
	Note that after $t$ rounds, each parent block has size $q^t$, and so, since
	$q^r\ge n$, after $r$ rounds there is only one parent block, and therefore, the algorithm learns $\calP$ at the end of the $r$th round.
	
	It remains to bound the number of queries.  Consider round $t$ in which
	the children block size is $q^{t-1}$.  Each children block contains at most
	$\min\{q^{t-1},k\}$ local parts; here $k:=|\calP|$ which the algorithm doesn't know, but this is only for analysis.  
	Hence a parent block contains at most $q\min\{q^{t-1},k\}$ local parts, and so the number of PAIR queries inside one
	parent block is
	$O\left(q^2\min\{q^{t-1},k\}^2\right)$.
	There are $O(n/q^t)$ parent blocks, so the total number of queries in
	this $t$th round is
	\[
	O\left(
	\frac{n}{q^t}\cdot q^2\min\{q^{t-1},k\}^2
	\right)
	=
	O\left(
	nq\cdot \frac{\min\{q^{t-1},k\}^2}{q^{t-1}}
	\right).
	\]
	Summing over $1\leq t\leq r$, we get the total number of queries made is $O(nq)$ times
	\[
	\sum_{t=1}^r  \frac{\min\{q^{t-1},k\}^2}{q^{t-1}} 
	\leq 
	\underbrace{\sum_{t: q^{t-1}\le k} q^{t-1}}_{\leq 2k}
	+
	\underbrace{\sum_{t: q^{t-1} > k} \frac{k^2}{q^{t-1}}}_{< k\sum_{j=0}^\infty \frac{1}{q^j} \leq 2k}
	=
	O(k).
	\]
	where the summations follow because $q \geq 2$.
	Therefore the total number of queries is $O(nqk)$ proving the theorem's assertion.
\end{proof}

\begin{corollary}
	There is an $\ceil{\log n/\log \log n}$-round deterministic algorithm which learns any hidden partition $\calP$ making $\Ot(n|\calP|)$ many queries.
\end{corollary}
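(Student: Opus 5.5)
This corollary follows from \Cref{thm:upper-bound-det-unknown-k} with a suitable choice of $r$. I would take $r := \ceil{\log n/\log\log n}$, so that $n^{1/r} \le n^{\log\log n/\log n} = 2^{\log\log n} = \log n$. The theorem then gives a deterministic $r$-round algorithm. This algorithm does not need to know $|\calP|$, since $q$ depends only on $n$ and $r$. Its query count is $O(n^{1+1/r}|\calP|) = O(n|\calP|\log n)$.

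There is one bookkeeping point. The proof of the theorem uses $q := \ceil{n^{1/r}}$ and bounds the total by $O(nqk)$. So I would check that $q \le n^{1/r}+1 \le \log n + 1 = O(\log n)$. I would also check $q \ge 2$, which holds whenever $n^{1/r} > 1$, that is, for $n \ge 2$. Finally, I would confirm $q^r \ge n$, which is immediate from $q \ge n^{1/r}$. This ensures that $r$ rounds suffice to merge everything into a single parent block.

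There is no real obstacle here. The only care needed is the rounding: taking the ceiling on $r$ only decreases $n^{1/r}$, so the bound $q = O(\log n)$ survives. The final count $O(n|\calP|\log n)$ is $\Ot(n|\calP|)$, which completes the argument.
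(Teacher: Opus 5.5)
Your proposal is correct and follows the paper's argument: set $r=\ceil{\log n/\log\log n}$ in \Cref{thm:upper-bound-det-unknown-k}, so that $n^{1/r}\le \log n$. The $O(n|\calP|\log n)$ bound you get is slightly sharper than the $O(n|\calP|\log^2 n/\log\log n)$ the paper states; the theorem's total bound $O(n^{1+1/r}|\calP|)$ already sums over all rounds, so the extra factor of $r$ is unnecessary, and either way the count is $\Ot(n|\calP|)$.
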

The above corollary follows by setting $r = \ceil{\frac{\log n}{\log\log n}}$ so that $n^{1/r} = O(\log n)$; the total query complexity is $O(n|\calP|\log^2 n/\log\log n)$.
The theorem also implies that in $3$-rounds, we have a $O(n^{4/3}|\calP|)$-query deterministic algorithm; compare this with the $\Omega(n^{4/3}|\calP|^{2/3})$-lower bound for {\em randomized} algorithms we proved in~\Cref{thm:lower-bound-unknown-k}. There is a ``$|\calP|^{1/3}$-gap'', and we leave plugging this as an open question.

It is instructive to remind the reader that when $|\calP|$ is known to the algorithm, then BMS~\cite{BMS25} also consider more nuanced query complexities of the form $O(n^{1+c}|\calP|^{1-c})$, where the exponents of $n$ and $|\calP|$ add up to $2$. These are not the optimal $n|\calP|$, but it gives a nice trade-off. BMS~\cite{BMS25} showed that for arbitrary small constant $c > 0$, they can achieve {\em constant}-round algorithms with this query complexity, where the constant depends on $c$ but is independent of $n$. 

In the case when $|\calP|$ is unknown, we show that our above algorithm can be modified slightly to give an improvement in the round-complexity; in particular, we can bring down the round-complexity from nearly logarithmic to $O_c(\log\log n)$. On the other hand, our lower bound (\Cref{thm:lower-bound-det-unknown-k-nuanced}) will show $o(\log\log n)$ rounds is not possible for a nuanced query complexity of $n^{1+c}|\calP|^{1-c}$ for {\em any} $c < 1$; of course, when $c=1$, there is a trivial non-adaptive $O(n^2)$-query algorithm which makes all pairwise queries. 

\begin{theorem}\label{thm:upper-bound-det-unknown-k-nuanced}
For any $c > 0$, there is an $O(\frac{1}{c}\cdot \log\log n)$-round algorithm to learn $|\calP|$ that makes $O(n^{1+c}|\calP|^{1-c})$ many queries.	
\end{theorem}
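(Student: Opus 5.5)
Modify the block-merging algorithm of \Cref{thm:upper-bound-det-unknown-k} so that the grouping factor grows from round to round instead of staying fixed at $q=n^{1/r}$. As before, we keep a partition of the universe into consecutive blocks with the exact restriction of $\calP$ known inside each block. In round $t$ we group $q_t$ current blocks into a parent block and query all pairs of representatives of local parts. A parent block at round $t$ has size $B_t=\prod_{s\le t} q_s$. The cost of round $t$ is, as computed in \Cref{thm:upper-bound-det-unknown-k},
\[
O\left(\frac{n}{B_t}\cdot q_t^2\min\{B_{t-1},k\}^2\right) \;=\; O\left(n\, q_t\cdot \frac{\min\{B_{t-1},k\}^2}{B_{t-1}}\right).
\]
When $B_{t-1}\le k$ this is $O(nq_tB_{t-1})=O(nB_t)$, and when $B_{t-1}>k$ it is $O(nq_tk^2/B_{t-1})$.

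We want each round to cost $O(n^{1+c}k^{1-c})$ for the unknown $k$. The natural schedule is to take $q_t = B_{t-1}^{c'}$ for a suitable constant $c'$ related to $c$ (with $q_1 = n^{c}$, say, to start), so that $B_t = B_{t-1}^{1+c'}$. Then $B_t$ reaches $n$ after $O(\frac{1}{c'}\log\log n)$ rounds, giving the claimed round count. The query bound splits into two regimes.

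\emph{Regime $B_{t-1}>k$.} The cost is $n\,q_t k^2/B_{t-1} = n k^2 B_{t-1}^{c'-1}\le nk^{1+c'}$, since the exponent is negative and $B_{t-1}>k$. This must be $\le n^{1+c}k^{1-c}$, i.e.\ $k^{c+c'}\le n^{c}$, which fails for large $k$. So instead we use the bound $q_t\min\{\cdot\}^2/B_{t-1}\le q_t k$ and need $q_t\le (n/k)^c$, which is not available without knowing $k$.

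This tension is the main obstacle: a purely geometric schedule of exponents does not balance both regimes. The fix I would try is to choose $q_t$ relative to $n/B_{t-1}$ rather than to $B_{t-1}$: set $q_t=\max\{2,(n/B_{t-1})^{c}\}$, or alternatively $q_t = B_{t-1}^{c}$ capped at $(n/B_{t-1})^{c}$, so growth is doubly exponential early and slows near $n$. Then:
\begin{itemize}
\item If $B_{t-1}\le k$, the cost is $nq_tB_{t-1}\le n\,B_{t-1}^{1+c}\cdot(\dots)$, and we check it is at most $n(n/B_{t-1})^{c}B_{t-1} = n^{1+c}B_{t-1}^{1-c}\le n^{1+c}k^{1-c}$.
\item If $B_{t-1}>k$, the cost is $n(n/B_{t-1})^c k^2/B_{t-1}\le n^{1+c}k^{1-c}\,(k/B_{t-1})^{1+c}\le n^{1+c}k^{1-c}$.
\end{itemize}
So every round costs $O(n^{1+c}k^{1-c})$, and it remains to bound the number of rounds. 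Take $q_t=\min\{B_{t-1}^{c},(n/B_{t-1})^{c}\}$ (rounding, and at least $2$). In the first phase, while $B\le\sqrt n$, we have $\log B_t=(1+c)\log B_{t-1}$, which takes $O(\frac1c\log\log n)$ rounds. In the second phase, $\log(n/B_t)=(1-c)\log(n/B_{t-1})$, which also takes $O(\frac1c\log\log n)$ rounds until $n/B=O(1)$, after which one final all-pairs round finishes. The first-phase bound $q_t\le B_{t-1}^c$ is harmless, since it only lowers the cost.

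Finally, multiplying the $O(\frac1c\log\log n)$ rounds by the per-round bound gives a total of $O(\frac1c\log\log n\cdot n^{1+c}k^{1-c})$. To remove the $\log\log n$ factor, either run the argument with $c/2$ and absorb it via $n^{c/2}\ge\log\log n$ (noting that the rounds where $B_{t-1}$ is far from $k$ form geometric sums), or check that the per-round costs decay geometrically away from $B_{t-1}\approx k$.
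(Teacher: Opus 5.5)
Your schedule is the paper's. Taking $q_t=(n/B_{t-1})^c$ is exactly the paper's recursion $b_t=n^c b_{t-1}^{1-c}$, i.e.\ $b_t=n^{1-(1-c)^t}$. Your two per-round bounds, $n^{1+c}B_{t-1}^{1-c}$ when $B_{t-1}\le k$ and $n^{1+c}k^2/B_{t-1}^{1+c}$ when $B_{t-1}>k$, are the ones the paper derives. The extra cap $q_t\le B_{t-1}^c$ is unnecessary, since the pure $(n/B_{t-1})^c$ schedule already gives $\log(n/B_t)=(1-c)\log(n/B_{t-1})$ from the first round, but it is harmless.

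The step that fails is your first way of removing the $\frac1c\log\log n$ factor. Running with $c/2$ gains only
\[
\frac{n^{1+c}k^{1-c}}{n^{1+c/2}k^{1-c/2}}=\left(\frac{n}{k}\right)^{c/2},
\]
not $n^{c/2}$. When $k=\Theta(n)$ this gain is $O(1)$, so it cannot absorb $\log\log n$.

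You need the geometric-sum route, which is what the paper uses. It is quick to finish because every $q_t\ge 2$, so $B_t\ge 2B_{t-1}$.
\begin{itemize}
\item The rounds with $B_{t-1}\le k$ cost $nq_tB_{t-1}=nB_t$, and this sequence at least doubles. Their total is therefore $O(nB_{t^*})$ for the last such round $t^*$. Moreover $B_{t^*}=O(n^cB_{t^*-1}^{1-c})=O(n^ck^{1-c})$.
\item The rounds with $B_{t-1}>k$ cost $O(n^{1+c}k^2/B_{t-1}^{1+c})$, which at least halves each round. Their total is $O$ of the first term, which is less than $n^{1+c}k^{1-c}$.
\end{itemize}
Both regimes sum to $O(n^{1+c}k^{1-c})$ with no loss in the number of rounds. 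With this step in place your proof is complete.
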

%Later we show that if the number of rounds is $o(\log\log n)$, then one cannot get any $n^{1+c}|\calP|^{1-c}$-query algorithms for any constant $c < 1$. In that sense, the above upper-bound is tight.
\begin{proof}
	As in the proof of~\Cref{thm:upper-bound-det-unknown-k}, we maintain 
	a partition of the
	universe into consecutive blocks, together with the exact restriction of
	$\calP$ inside each block. Earlier, the block sizes grew by a factor of $q = n^{1/r}$. This time, 
	the block sizes are governed as follows:
	\[
	b_0=1,\qquad b_{t}=n^c b_{t-1}^{1-c},
	\]
	rounding as necessary. Note that this implies
	\[
	b_t=n^{1-(1-c)^t},
	\]
	and so when $t = \frac{1}{c}\cdot \log\log n$, this becomes $n$.
	
	In round $t$, we merge $\ceil{b_{t}/b_{t-1}}$ child blocks into one parent
	block; as before, the algorithm takes one representative from every
	known local part of the children block in each parent block, and queries all pairs of these
	representatives.  It then merges two local parts iff their representatives
	answer YES. 
	This recovers the exact restriction of $\calP$ in $O(\frac{1}{c}\cdot \log\log n)$-rounds.
	the parent block, so the invariant follows by induction. What is interesting is the number of queries.
	
	It remains to bound the number of queries.  In round $t$, there are $O(n/b_t)$ many parent blocks.
	For each such parent block, we have $\ceil{\frac{b_t}{b_{t-1}}}$ children, and each such child contributes $\leq \min(b_{t-1}, k)$ local parts.
	Again, comparing with the previous proof, $\frac{b_t}{b_{t-1}}$ was $q$ and $b_{t-1} = q^{t-1}$.
	This time, the total number of queries made in round $t$ is
	\[
	O\left(\frac{n}{b_t} \cdot \left(\frac{b_t}{b_{t-1}} \min(b_{t-1}, k)\right)^2\right) = O\left(nb_t \cdot \frac{\min(b_{t-1}, k)^2}{b_{t-1}^2}\right)
	\]
	By plugging in our choice of $b_t = n^c b^{1-c}_{t-1}$, we get that the RHS is $n^{1+c}b_{t-1}^{1-c}$ if $b_{t-1} \leq k$, and $n^{1+c} \frac{k^2}{b^{1+c}_{t-1}}$ otherwise.
	Summing up over all $t$, we get that the number of queries is $O(n^{1+c})$ multiplied by
	\[
	\sum_{t: b_{t-1} \leq k} b_{t-1}^{1-c} ~~+~~ \sum_{t: b_{t-1} > k} \frac{k^2}{b_{t-1}^{1+c}}
	\]
	We claim that this is $O(k^{1-c})$ which would prove the lemma. Indeed, the $b_t$'s growth is faster than a geometric progression since $\frac{b_{t+1}}{b_t} = \left(n/b_t\right)^c > 1$.
	Thus, for the first term, the largest $b_t$ matters and for the second summation, the smallest one matters; and both give $k^{1-c}$.
\end{proof}
\subsection{Lower Bounds.} 
We begin with a lower bound which strongly complements~\Cref{thm:upper-bound-det-unknown-k-nuanced}; this acts as a warm-up to the $\Omega(\log n/\log\log n)$-lower bound for deterministic algorithms
with near optimal query complexity. We show that unless one makes $\log\log n$-rounds of queries, one cannot get even an $n^{1.999}|\calP|^{0.001}$-query algorithms; while the aforementioned theorem
states that with $O(\log\log n)$ rounds, we get $n^{1.001}|\calP|^{0.999}$-query algorithms.
%
% getting the aforementioned cube-root $|\calP|$ saving. Next, we prove a lower bound showing that when $|\calP|$ is unknown, this query complexity is impossible for {\em deterministic} algorithms. This differentiates the the known $|\calP|$ and unknown $|\calP|$ case. Indeed, we prove something much stronger. We prove that for any algorithm with query complexity a function of $n$ and $|\calP|$ where their exponents sum to $2$, must either have a {\em quadratic} dependency on $n$, or must run in {\em super-constant} rounds. 

\begin{theorem}\label{thm:lower-bound-det-unknown-k-nuanced}
	Fix any $0\le c<1$.  There is no deterministic
	$o(\log\log n)$-round algorithm which, for every partition $\calP$ of an
	$n$-element universe, learns $\calP$ using at most
	$O(n^{1+c}|\calP|^{1-c})$-many
	PAIR queries.
\end{theorem}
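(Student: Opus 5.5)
We argue by an adversary that builds the partition on the fly, in the spirit of the deterministic lower bound of BMS~\cite{BMS25}. The key point is that the algorithm's budget in each round is tied to the \emph{currently certified} number of parts, which the adversary keeps small. Fix a deterministic algorithm $\calA$ with budget $B(\calP) = C n^{1+c}|\calP|^{1-c}$. The adversary maintains an invariant after round $t$: a set $V_t \subseteq U$ of ``live'' elements, all of whose queries so far have been answered YES among themselves, so that $V_t$ could still be a single part. The elements of $U\setminus V_t$ have been placed into parts consistently with all answers given. The number of parts committed so far, $k_t$, stays small.

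\textbf{Round $t$.} The adversary answers all queries touching the committed region consistently. At the end of the game it will declare the final partition to be the committed parts together with $V_r$ as one part. Hence $|\calP| = k_r + 1$, and the budget in every round is at most $Cn^{1+c}(k_r+1)^{1-c}$. Inside $V_{t-1}$, the round-$t$ queries form a graph $H_t$ with at most $q_t$ edges. The adversary picks a large independent set $I_t$ of $H_t$ inside $V_{t-1}$ and answers NO on the relevant edges (via Tur\'an, $|I_t| \gtrsim |V_{t-1}|^2/q_t$). It keeps $V_t := I_t$ as the new live set, answers YES on queries within $V_t$, and answers queries between $V_t$ and the discarded part as NO. The discarded part $V_{t-1}\setminus V_t$ is then grouped into parts. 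This must be done cheaply, using few parts, while respecting the YES answers already given. Here is where the structure matters: the discarded elements must not be declared in the same part as live elements, and among themselves they can be merged freely except along the NO edges of $H_t$. It is cleaner to reverse the roles. Choose $V_t$ as a \emph{clique-free} set in the sense that all pairs inside are unqueried (so later they can be YES or NO). Keep the final live set ambiguous, so that at the end two partitions remain consistent, differing on whether two unqueried live elements are together. Correctness then forces at least one further round.

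\textbf{Counting.} The intended recursion mirrors BMS: with $m_t = |V_t|$ and round budget $q \approx n^{1+c}k^{1-c}$ where $k$ is the committed count, Tur\'an gives $m_t \ge m_{t-1}^2/(2q+m_{t-1})$. Meanwhile the discarded elements are colored into $O(q/m_{t-1})$ parts (greedy coloring of the query graph restricted to discarded vertices, average degree $q/m_{t-1}$). Setting the committed count to $k\approx n^{\epsilon}$ for a tiny $\epsilon$ depending on $c$, the budget is $n^{1+c+\epsilon(1-c)} = n^{2-\delta}$ with $\delta = (1-c)(1-\epsilon)>0$. The recursion $m_t \approx m_{t-1}^2/n^{2-\delta}$ then behaves like $\log(n/m_t) \approx 2\log(n/m_{t-1}) - \delta\log n$ in exponent terms. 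Starting from $m_0=n$, the sequence stays above $2$ for $\Omega(\log(1/\ldots))$ steps. Precisely, the deficit exponent doubles each round from $\delta$-scale, so it takes $\Omega(\log\log n)$ rounds before $m_t$ drops to $O(1)$, using the $n^{1-(1-\delta)^t}$-type calculation as in the proof of \Cref{thm:upper-bound-det-unknown-k-nuanced}, run in reverse. As long as $|V_r|\ge 2$ with an unqueried pair, the algorithm errs.

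\textbf{Main obstacle.} The hard step is keeping the committed part count $k_t$ small, about $n^{\epsilon}$, while discarding almost all of $V_{t-1}$ each round. Greedy coloring yields $q/m_{t-1}$ colors, which can be huge when $m_{t-1}$ is small. I would first try discarding elements into \emph{one} new part per round where possible. This requires that discarded elements have no NO edges among themselves, which the adversary controls because it chooses the answers. So it answers YES on every query inside the discarded set, answers NO only between live and discarded elements, and leaves live–live pairs unqueried by the independent-set choice. Then each round adds a single part, $k_r \le r+1$, and the budget is $\approx n^{1+c}r^{1-c}$. This is the approach I would pursue, checking that transitivity is respected: the discarded set is one part, live elements are pairwise unknown, and each live element is NO against the discarded set.
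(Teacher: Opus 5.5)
There is a genuine gap, and it is in your counting step, not in the consistency bookkeeping. Your final adversary is consistent: the live set $V_t$ is an independent set of all queries so far, and each round adds one new discarded part $V_{t-1}\setminus V_t$, with YES answers inside it and NO to everything else. Since the realized partition has $O(r)$ parts, it does cap the total number of queries by $Cn^{1+c}(r+2)^{1-c}$.

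But this cap is at least $n^{1+c}$, and the Tur\'an recursion $m_t \ge m_{t-1}^2/(2q+m_{t-1})$ cannot survive $\log\log n$ rounds against a per-round budget that large. Your exponent calculation also has a sign error. From $m_t\approx m_{t-1}^2/n^{2-\delta}$ one gets $\log(n/m_t) = 2\log(n/m_{t-1}) + (1-\delta)\log n$, not $-\delta\log n$. Hence $\log(n/m_t) = (2^t-1)(1-\delta)\log n$. Since $1-\delta \ge c$, the live set is exhausted after about $\log_2(1+1/c)$ rounds, a constant independent of $n$.

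More generally, writing $q=nQ$ gives $m_t\approx n/Q^{2^t-1}$, which lasts $\Omega(\log\log n)$ rounds only if $Q=n^{o(1)}$. So with $k_r\le r+1$ your argument does handle $c=0$, where it is essentially the BMS independent-set argument. For every $c>0$, however, it yields only $O(\log(1/c))$ rounds. This is to be expected. An adversary that only ever presents $O(r)$-part partitions is essentially attacking the known-$k$ problem, and for known $k$ BMS~\cite{BMS25} learn in $O(\log(1/c))$ rounds with $O(n^{1+c}k^{1-c})$ queries.

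The missing idea is to exploit that $|\calP|$ is \emph{unknown}. The algorithm must stay within budget for every partition consistent with the answers so far. So its cumulative spending through round $i$ is limited by the \emph{smallest} consistent part count after round $i-1$, not by the final one.

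The paper's adversary answers NO to every query. The all-singletons partition then stays consistent, so the algorithm must eventually query every pair; otherwise it cannot rule out merging an unqueried pair. Meanwhile, a proper coloring of the query graph $G_{i-1}$ gives a consistent partition with $\chi(G_{i-1})$ parts. This forces $|Q_1\cup\cdots\cup Q_i| \le Cn^{1+c}\chi(G_{i-1})^{1-c}$.

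Since an $m$-edge graph has $\chi=O(\sqrt m)$, one gets $\chi(G_i)=O(n^{a_i})$ with $a_{i+1}=\frac{1+c+(1-c)a_i}{2}$, i.e.\ $1-a_i=(\frac{1-c}{2})^i$. After $r=o(\log\log n)$ rounds one still has $n^{1-a_r}\to\infty$, so $\chi(G_r)=o(n)$ and some pair remains unqueried. The doubly-exponential convergence that produces $\log\log n$ lives in the exponent of the certified part count, not in the size of a surviving independent set.
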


Our proof below uses the adversary method as also used by BMS~\cite{BMS25} for the known $|\calP|$ case. 
However, we deviate a bit from their method. While they use the existence of large independent sets in sparse graphs as their main combinatorial tool, we use
that sparse graphs have low chromatic number. The latter is relevant for our lower bound since we use low chromatic number to indicate the inability of the algorithm to deduce ``large $|\calP|$''. Details follow.

\begin{proof}
	For the sake of contradiction, suppose there is an $r$-round algorithm with $r = o(\log\log n)$.
	
	The main idea is this. 
	We, as the adversary, will keep answering NO to {\em every}  PAIR query the algorithm asks. 
	The trivial partition $\calP_0$ where every part is a singleton is consistent with this answer. 
	Furthermore, at the end of the $r$th round, if there is some pair $(a,b)$ which the algorithm has {\em not} queried
	sometime, then it cannot be correct: it cannot distinguish from $\calP_0$ and the partition which clubs $\{a,b\}$ together in one part, and the remaining parts are singletons.
	In particular, if $Q_1, Q_2, \ldots, Q_r$ are the queries made by the algorithm (where $Q_i$ could depend on the answers to $Q_1, \ldots, Q_{i-1}$), 
	the union of this must be all the edges. Or, $\sum_{i=1}^r |Q_t| = \Omega(n^2)$.
	
	At the beginning of round $i$, however, the algorithm's budget is $O(n^{1+c}k_i^{1-c})$ where $k_i$ is the largest lower bound the algorithm can infer from all the NO answers it has received so far. To this end, let $F_{i} := Q_1 \cup \cdots \cup Q_{i}$ with $F_0 = \emptyset$, and let $G_i = (U,F_i)$. 
	Then note that 
	\[
	k_i = \chi(G_{i-1})
	\]
	where $\chi(G)$ is the chromatic number of a graph $G$.
	This is almost by definition: any valid coloring gives a partition which is consistent with the NO answers so far, and every consistent partition also induces a coloring. 
	
	Now, if number of queries are small, or equivalently if $G_{i-1}$ is not dense, then we can upper bound its chromatic number. 
	More precisely, we know that every $m$-edge graph\footnote{Indeed, $\chi(G) \le 1 + \sqrt{2m}$; see Fact 1 of~\cite{AC08}.} $G$ has $\chi(G) = O(\sqrt{m})$. Now we can set up a recurrence which will lead us to the required lower bound; we do so next.
	
	\begin{claim}
		Let $a_i = 	1-\left(\frac{1-c}{2}\right)^i$. Then, $\chi(G_i) = O(n^{a_i})$.
	\end{claim}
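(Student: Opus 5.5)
Induction on $i$, with the recurrence $a_i = c + (1-c)\frac{1+a_{i-1}}{2}$. To see this is the right recurrence, note $1-a_i = \left(\frac{1-c}{2}\right)^i$. Expanding the right-hand side gives $c + \frac{1-c}{2}(2 - (1-a_{i-1})) = 1 - \frac{1-c}{2}(1-a_{i-1})$, which matches, with $a_0 = 0$.

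For the base case $i=0$, the graph $G_0=(U,\emptyset)$ has no edges, so $\chi(G_0)=1=n^{a_0}$.

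For the inductive step, assume $\chi(G_{i-1}) = O(n^{a_{i-1}})$. The adversary answers NO to every query, so the lower bound on $|\calP|$ the algorithm can infer before round $i$ is $k_i=\chi(G_{i-1})$. The budget available in round $i$ is then governed by the value of $|\calP|$ on the partition the adversary eventually commits to. That partition is a proper coloring of the final graph, so its number of parts is at least $k_i$, and the adversary may commit to a partition with as few parts as possible. For the budget constraint to be meaningful, I will argue as follows: if the algorithm makes more than $C n^{1+c}k_i^{1-c}$ queries in round $i$, the adversary could have answered consistently with a partition having $\chi(G_{i-1})$ parts (a proper coloring of $G_{i-1}$) and the algorithm exceeds its budget on that partition. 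Hence
\[|Q_i| = O(n^{1+c}\chi(G_{i-1})^{1-c}) = O(n^{1+c+(1-c)a_{i-1}}).\]
Strictly speaking, the bound on queries is cumulative, but summing over at most $r$ rounds only changes constants if we bound each $k_j\le k_i$. Indeed, $\chi(G_j)$ is monotone in $j$, so $|F_i| \le \sum_{j\le i}|Q_j| = O(r\, n^{1+c+(1-c)a_{i-1}})$. The extra factor $r$ can be absorbed, or handled by noting the bound holds up to a $(\log\log n)$ factor that does not affect the exponent argument later.

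Applying the fact $\chi(G)=O(\sqrt{m})$ gives
\[\chi(G_i)=O\big(n^{(1+c+(1-c)a_{i-1})/2}\big).\]
Since $\frac{1+c+(1-c)a_{i-1}}{2}$ equals $a_i$ (the same computation as above), this closes the induction.

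The main subtlety is justifying the budget bound $|Q_i|=O(n^{1+c}k_i^{1-c})$. The algorithm's guarantee is stated per partition in total, not per round. The budget must therefore be charged against a concrete partition consistent with all NO answers so far, and that partition must remain consistent with NO answers in round $i$ too. I would handle this by fixing an optimal coloring of $G_{i-1}$ and observing that the algorithm's round-$i$ queries are determined before any answers are seen. Any partition consistent with the history up to round $i-1$ (in particular, the one from the optimal coloring of $G_{i-1}$) is one on which the algorithm issues exactly $Q_i$. Therefore $|Q_1|+\dots+|Q_i|\le C n^{1+c}\chi(G_{i-1})^{1-c}$ directly, with no factor $r$ lost. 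This uses monotonicity of $\chi(G_{j})$ in $j$, and it is exactly what the induction needs.
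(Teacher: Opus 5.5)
Your proof is correct and follows the paper's argument: induction on $i$ via the recurrence $a_{i} = \frac{1+c+(1-c)a_{i-1}}{2}$, bounding $|F_i|$ by the budget $O(n^{1+c}\chi(G_{i-1})^{1-c})$, and then applying $\chi(G)=O(\sqrt{|E(G)|})$. Your final paragraph charges $|Q_1|+\cdots+|Q_i|$ to the partition given by an optimal coloring of $G_{i-1}$, on which the deterministic algorithm issues exactly the same queries; this justifies the budget step more cleanly than the paper does and keeps the indices straight, so the earlier detour through a factor of $r$ and the appeal to monotonicity of $\chi(G_j)$ are unnecessary.
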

	
	\begin{proof}
		The proof is by induction. When $i = 0$, we have $a_0 = 0$ and $G_0$ is the empty graph with $\chi(G_0) = 1$.
		Suppose the claim is true for some $i$ and we wish to prove it for $i+1$. 
		The queries made by the algorithm till and including round $i$ has an upper bound of $O(n^{1+c}k^{1-c}_i)$. 
		Using the fact that $k_i = \chi(G_i) = O(n^{a_i})$, we get
		\[
		|F_i| \leq O(n^{1 + c} \cdot n^{a_i(1 - c)}) ~~\underbrace{\Rightarrow}_{\chi(G) = O(\sqrt{|E(G)|})}~~ \chi(G_i) = O\left(n^{\frac{(1+c) + a_i(1-c)}{2}}\right)
		\]
		By choice of the sequence $a_i$, we have $a_{i+1}
		=
		\frac{1+c+(1-c)a_i}{2}$.	
		This proves the claim.
	\end{proof}
	When $c < 1$ and $r = o(\log\log n)$, we have $a_{r} < 1$. Therefore, at the end of $r$ rounds, the union of all queries is a graph $G_r$ with $\chi(G_r) = o(n)$.
	In particular, it is not the complete graph. As argued in the first paragraph of this proof, the algorithm cannot be correct. This completes the proof of the theorem.
\end{proof}
Although the above lower bound rules out any $o(\log\log n)$-round deterministic algorithm making $\Ot(n^{1+c}|\calP|^{1-c})$ many\footnote{here the $\widetilde{}$ is suppressing logarithmic factors in the denominator} queries for any $c < 1$, for the $c=0$ case, unfortunately, we do not obtain a stronger lower bound; in particular, an $O(\log\log n)$-round $\Ot(n|\calP|)$-query deterministic algorithm is not ruled out. Furthermore, our scheme above has a $\log\log n$ bottleneck; in particular, if the adversary keeps answering NO to all queries (as in our construction above) the algorithm makes (possibly because the final partition is the trivial one), then an algorithm could query edges of a clique with $n^{1/2}$ vertices in round 1 and know $|\calP| \ge n^{1/2}$, and then query edges of a clique of $n^{3/4}$ vertices in round 2 and know $|\calP|\geq n^{3/4}$, query edges of a clique of $n^{7/8}$ vertices, and so on, succeeding in certifying the trivial partition in $O(\log\log n)$ rounds. 

Henceforth, to prove a stronger lower bound, our adversary needs to answer YES once in a while. We do so in the next result.

\begin{theorem}\label{thm:lower-bound-det-unknown-k}
	Any deterministic algorithm which learns any hidden partition $\calP$ using $O(n|\calP|\log^c n)$-many queries, for some constant $c \geq 0$, 
	must proceed in $\Omega(\log n/\log\log n)$ rounds.
\end{theorem}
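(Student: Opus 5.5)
The plan is an adversary argument in the spirit of the proof of \Cref{thm:lower-bound-det-unknown-k-nuanced}. The difference is that the adversary now commits to a \emph{coarse} partition and refines it round by round, instead of answering NO to everything. Its YES answers inside a coarse part are what stop the algorithm from certifying a large lower bound on $|\calP|$ by querying big cliques.

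Let $t := (\log n)^{c+3}$. The adversary maintains a partition $\calP_i$ of $U$ into at most $t^i$ parts, with $\calP_0 = \{U\}$, and keeps the following invariant: every answer given in rounds $1,\dots,i$ equals PAIR$(x,y)$ evaluated on $\calP_i$. Round $i+1$ is handled as follows. The algorithm must then be correct on $\calP_i$ itself, so it cannot make more than $n|\calP_i|\log^c n \le n t^i\log^c n$ queries in this round. For each part $C\in\calP_i$, let $H_C$ be the graph on $C$ formed by the round-$(i+1)$ queries with both endpoints in $C$. The adversary splits $C$ into $t$ classes of equal size (up to rounding) so that at most a $1/t$ fraction of the edges of $H_C$ are monochromatic. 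Such a split exists because a uniformly random balanced $t$-coloring achieves this in expectation. Then $\calP_{i+1}$ is the collection of all these classes. The round-$(i+1)$ answers are read off $\calP_{i+1}$: cross-part queries get NO, monochromatic queries get YES, and bichromatic queries get NO. The answers from earlier rounds remain consistent with $\calP_{i+1}$. Indeed, earlier YES pairs lie inside a single class of an earlier partition, but they could be split now; so I must make the coloring respect earlier YES edges. I do this by contracting the connected components of previously-YES pairs inside $C$ into super-vertices before coloring. This is the step I would check carefully: the components must stay small enough that balanced coloring still works. Their total number of internal pairs is bounded by the number of monochromatic edges so far, which is at most a $1/t$ fraction of all queries.

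After $r$ rounds, $\calP_r$ has at most $t^r$ parts, each of size about $n/t^r$, and the total number of queries is at most $\sum_{i<r} n t^i\log^c n \le 2n t^{r-1}\log^c n$. For the contradiction, suppose that inside some final part $C$ there are two elements $a,b$ that lie in different YES-components and such that no NO edge joins the two components. Then splitting $C$ along a component boundary gives a second partition that is consistent with every answer, so the algorithm cannot have identified $\calP_r$. It therefore suffices to find such a part. To certify a part of size $s$ as a single part, the algorithm needs a YES spanning structure on it, i.e.\ at least $s-1$ YES edges. By the split rule, YES edges are at most a $1/t$ fraction of the queries, so the total number of YES edges is at most $2n t^{r-2}\log^c n$. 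Certifying all parts, however, requires about $n - t^r$ YES edges. Since $t^{r-2}\log^c n = o(1)$ whenever $r$ is small compared with $\log n/\log t$, the YES edges are insufficient once $t^r \ll n$. This holds for $r \le \epsilon\log n/\log\log n$, which gives the $\Omega(\log n/\log\log n)$ bound.

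I expect the main obstacle to be the consistency of refinements across rounds. Each new coloring must keep every earlier YES pair monochromatic while remaining balanced and having few monochromatic new edges; balance is what makes the parts have size $\approx n/t^r$ and is used in the counting step. My plan is to run the random coloring on the contracted YES-components. Because YES edges are so rare (a $1/t$ fraction), these components have size $O(1)$ on average, and I will use this to argue that a near-balanced coloring still exists. Any imbalance will be absorbed by only requiring that many parts remain large, rather than all of them.
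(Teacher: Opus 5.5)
Your final counting step does not go through, and the failure is structural rather than a matter of constants. You need the total number of YES answers, which you bound by $2nt^{r-2}\log^c n$, to be smaller than $n-t^r$. But $t^{r-2}\log^c n\ge 1$ for every $r\ge 2$, so this bound exceeds $n$ and rules out nothing beyond one round.

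The reason is that your adversary \emph{refines} $\calP_i$. The round-$(i+1)$ budget is $nt^i\log^c n$, while your split rule only promises that a $1/t$ fraction of the queries lying \emph{inside a part of $\calP_i$} are monochromatic. So the only bound you have on that round's YES answers is $nt^{i-1}\log^c n\ge n$: the budget grows by a factor $t$ per round exactly as fast as your YES fraction shrinks.

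The refinement also creates the consistency problem you flagged, and your fix for it is not valid. The internal pairs of YES-components are not bounded by the number of YES edges: a YES star on $s+1$ vertices has $s$ edges but $\binom{s+1}{2}$ internal pairs. Moreover, a single round-1 star from $x$ already makes the entire class of $x$, of size about $n/t$, one YES-component, which cannot be split into $t$ balanced classes in round 2.

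The missing idea, which is how the paper proceeds, is to make \emph{every} queried pair monochromatic with probability about $1/(\text{number of colors})$, regardless of the old partition.
\begin{itemize}
\item Let $G_i$ have the current YES-components as vertices and the NO answers as edges. The color classes of a proper coloring give a consistent partition, so $|F_i|+|Q_i|=O(n\chi(G_i)\log^c n)$.
\item Recolor $V(G_i)$ uniformly from a \emph{fresh} palette of $A\chi(G_i)$ colors, with $A=\Theta(r\log^c n)$.
\item For every old NO edge that became monochromatic, revert both endpoints to the old proper coloring, which uses a disjoint palette.
\item Answer YES exactly to the monochromatic pairs of $Q_i$.
\end{itemize}
Old YES answers are respected automatically because you color contracted components, and old NO edges stay bichromatic. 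The expected drop in the number of components is at most $(2|F_i|+|Q_i|)/(A\chi(G_i))=O(n/r)$ per round.

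After $r=o(\log n/\log\log n)$ rounds there are at least $n/2$ components but only $(O(r\log^c n))^r=n^{o(1)}$ colors. So two non-adjacent components share a color, and they can be either merged or kept apart consistently with all answers. No balance condition is needed anywhere, because the argument counts components rather than part sizes.
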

\begin{proof}
	Let us fix a deterministic algorithm $\calA$ which makes queries $Q_1, \ldots, Q_r$ in the $r$-rounds and for contradictions sake suppose $r = o(\log n/\log\log n)$. 
	We have $|Q_i| \leq nk_i\log^c n$ for each $i$, where $k_i$ is the {\em largest lower bound} on $|\calP|$ which the algorithm can deduce given the answers seen so far.
	We reach a contradiction by showing that at the end of $r$ rounds, the algorithm cannot distinguish between two distinct partitions.
	
	The proof proceeds via an adversary argument where the adversary decides the answers to $Q_i$ after seeing it. To do so,
	the adversary maintains graphs $G_1, \ldots, G_r$ where $G_i = (V_i, F_i)$, where each vertex of $V_i$ is a subset of $U$ that the adversary has revealed belong to the same part; in particular, they are the connected components of the YES answers the adversary has given to queries in rounds $1$ to $i-1$. 
	Each edge $e\in F_i$ between two parts $A\in V_i$ and $B\in V_i$ corresponds to a query $(x,y) \in Q_j$ for $j \leq i-1$ with $x\in A$, $y\in B$, and which has received answer NO from the adversary. To begin with, $V(G_1) = U$ and $E(G_1) = \emptyset$.
	
	Let $\chi(G_i)$ be the chromatic number of $G_i$. By definition of $G_i$, there is a partition of size $k_i$ consistent with the answers given by the adversary till now. Therefore, we have that 
	\begin{equation}\label{eq:007}
		|F_i| + |Q_i| \leq n\cdot \chi(G_i)\log^c n
	\end{equation}
	The main idea now is to answer the queries to $Q_i$ in such a way that (a) the YES answers do not contradict older NO answers, (b) the YES answers do not contract $V_i$ ``too much''; in particular, we assert $|V_{i+1}| \geq |V_i|/2$, and (c) the resulting graph $G_{i+1}$ doesn't have its chromatic number blown up; in particular, we want $\chi(G_{i+1}) \leq O\left(\chi(G_i)\log^c n\right)$.
	If we manage to do so, then note that after $r$-rounds for $r = o(\log n/\log\log n)$, the adversary will have a graph $G_{r+1}$ with $|V(G_{r+1})| \geq n/2$ and $\chi(G_{r+1}) = n^{o(1)}$.
	In particular, there is at least one (in fact many) pair $(A,B) \in V_{r+1} \times V_{r+1}$ such that all answers are consistent with the two partitions: one which is $V_{r+1}$ and the other 
	is $V_{r+1}$ with $A$ and $B$ merged.
	
	To obtain (a), (b), and (c), we need the following simple lemma.
		
	\begin{lemma}
		Let $H=(V,E)$ be a graph with a proper $L$-coloring $\phi$. Let $Q$ be another collection of pairs in $V\times V$.
		For every number $A\ge 1$, there exists a coloring $\psi$ of $V$ using at most $L(1+A)$ colors such that: (i) for every $(x,y)\in E$, $\psi(x) \neq \psi(y)$, 
		(ii) contracting all pairs $(x,y) \in Q$ with $\psi(x) = \psi(y)$, decreases the number of vertices by at most $\frac{2|E| + |Q|}{AL}$.
	\end{lemma}
	
	\begin{proof}
	This follows from a simple application of the probabilistic method with alteration.
	Note that $\phi$ uses $L$ colors. Use a fresh palette of $AL$ colors, and 
	randomly color every vertex independently using this. Call this random coloring $\rho$. 
	
	Let $X$ denote the number of $(u,v)\in E$ with $\rho(u) = \rho(v)$, and let $Y$ denote the number of $(u,v)\in Q$ with $\rho(u) = \rho(v)$.
	Note that $\Exp[X] = \frac{|E|}{AL}$ and $\Exp[Y] = \frac{|Q|}{AL}$. To obtain $\psi$, for each edge in $E$ which is monochromatic, revert its original color.
	More precisely, if we let $B$ denote the set of vertices incident to the $X$ edges of $E$ which are
	monochromatic under $\rho$, then 
		$\psi(v)= \phi(v)$ if $v\in B$ and $\psi(v) = \rho(v)$ if $v\notin B$.
	Note that $|B| \leq 2X$.
	
	Now, by design, we have obtained (i) since the palettes used by $\rho$ and $\phi$ are disjoint. We now need to argue (ii). 
	Consider the $\psi$-monochromatic pairs of $Q$. Such a pair of vertices is
		either already monochromatic under $\rho$, contributing to $Y$, or has
		both vertices in $B$ (if there is only one end point, then it won't be monochromatic since we use fresh colors). 
		Contracting the first type loses at most $Y$
		vertices, and contracting all edges inside $B$ loses at most
		$|B|\le 2X$ vertices. Thus the total number of vertices lost is at most $2X+Y$.
		Since $\Exp[2X + Y] \leq \frac{2|E| + |Q|}{AL}$, there exists a coloring as required by the lemma.
	\end{proof}
	Let us now return to the adversary. Inductively, we assume that at the beginning of round $i$, we have 
	\[
		|V(G_i)| \geq n\cdot\left(1 - \frac{i}{8r}\right) ~~\text{and}~~ \chi(G_i) \leq \left(16r\log^{c} n\right)^i
	\]
	Upon receiving the queries $Q_i$, the adversary sets
	$A:=8r\log^c n$, and applies the above lemma with
	\[
	H=G_i(V_i, F_i),\qquad Q=Q_i,\qquad L=\chi(G_i)
	\]
	The lemma leads to a graph $G_{i+1}$ with 
	\[|V(G_{i+1})|\geq |V(G_i)| - \frac{2|F_i| + |Q_i|}{A\chi(G_i)} \underbrace{\geq}_{\text{\eqref{eq:007}}} |V(G_i)| - \frac{2n\chi(G_i)\log^c n}{A\chi(G_i)} = |V(G_i)| - \frac{n}{8r},\] 
	and 
	\[
	\chi(G_{i+1}) \leq \chi(G_i)\cdot (1 + A) \leq 2A \chi(G_i) = (16r\log^c n)\cdot \chi(G_{i+1})
	\]
	Both imply that the inductive hypothesis holds at $i+1$ as well.

	The adversary answers YES to the $Q_i$-edges that are
	monochromatic under this coloring, answers NO to the remaining
	$Q_i$-edges, and then contracts the YES-connected components. By design, our answers are consistent with previous answers.

At the end of round $r$, we have $|V(G_{r+1})| \geq n/2$ and $\chi(G_{r+1}) \leq \left(16r \log^c n\right)^r$, and if $r = o(\log n/\log\log n)$, we have $\chi(G_{r+1}) = n^{o(1)}$.
This leads to the contradiction as described earlier.
\end{proof}

\section{Weak Subset Queries}

We now go back to the randomized algorithms, and extend them to use the power of weak subset queries in order to reduce the number of queries made to the oracle. Building on the techniques of BMS \cite{BMS25}, we show the following theorem:
\begin{theorem}\label{thm:known-subset}
	    Given an upper bound $k$ on the size of a partition $\calP$, there is a 3 round randomized algorithm for learns $\calP$ with $\Ot(\frac{nk}{s^2})$ many weak-subset queries of size at most $2\leq s \leq \sqrt{k}$.
\end{theorem}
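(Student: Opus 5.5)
We follow the structure of \Cref{alg:known} and replace each round's batch of PAIR queries by a batch of $s$-bounded weak subset queries. The tool we borrow from BMS~\cite{BMS25} is their non-adaptive randomized subroutine. Given a set $X$ of points together with an upper bound $k'$ on the number of parts meeting $X$, it recovers the restriction of $\calP$ to $X$ in a single round using $\Ot(|X|k'/s^2)$ subset queries of size at most $s$. A second form of the same subroutine handles a set of known representatives $S$ (one per part) and a set $Y$ of points: it classifies each $y \in Y$ into the part of its representative, or declares it unmatched, using $\Ot(|Y||S|/s^2)$ queries. Concretely, $Y$ is chunked into blocks of size about $s/2$, and random groups of representatives are tested alongside each block. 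The high-probability guarantees of these subroutines are what make the whole algorithm randomized and non-adaptive within each round. We state them as black boxes imported from~\cite{BMS25}.

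\textbf{Round 1.} Sample $R$ of size $10\sqrt{nk\log n}$ exactly as in \Cref{alg:known}. Recover the partition of $R$ with the non-adaptive subroutine, using $k$ as the bound on the number of parts. This costs $\Ot(|R|k/s^2) = \Ot(\sqrt{nk}\,k/s^2)$, which is at most $\Ot(nk/s^2)$ since $k\le n$. This yields a set $S$ of at most $k$ representatives.

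\textbf{Round 2.} Classify all of $[n]$ against $S$ using the representative subroutine, at cost $\Ot(nk/s^2)$. By \Cref{lem:rem_points}, whp the unclassified set $P'$ has $|P'|\le\sqrt{nk\log n}$, and it consists only of parts of size below $\sqrt{n\log n/k}$. This lemma is pure sampling and is unchanged by the query model.

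\textbf{Round 3.} Learn the partition of $P'$ non-adaptively, with at most $k$ parts. A naive bound of $\Ot(|P'|k/s^2) = \Ot(\sqrt{nk}\,k/s^2)$ is again within $\Ot(nk/s^2)$ because $k\le n$. Correctness follows as in \Cref{lem:known-correct}, conditioned on the subroutines succeeding whp, plus a union bound.

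The main obstacle is pinning down the exact guarantees of the BMS non-adaptive subroutine. It must be non-adaptive, cost $\Ot(mk/s^2)$ on $m$ points, and succeed whp. Here the constraint $s\le\sqrt{k}$ matters. When $s$ is larger than $\sqrt{k}$, a random $s$-subset contains many points of the same part, and the counts no longer certify pairs at the claimed rate; the lower bound $\max(n, nk/s^2)$ kicks in at that point. I would first try to realize the subroutine directly. Query random $s$-sets $T$ and, for each pair $x,y\in T$, compare $Q(T)$, $Q(T\setminus\{x\})$ and $Q(T\setminus\{y\})$, $Q(T\setminus\{x,y\})$; with $s\le\sqrt{k}$ a random $T$ has few collisions, so about $s^2$ pairs are certified per $O(1)$ queries. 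If that simulation turns out to lose more than logarithmic factors, I will instead cite the BMS one-round primitive verbatim. I will also check that the representative-classification variant fits within one round, for instance by grouping representatives into sets of size $s/2$ and checking whether $Q(G\cup\{y\})=Q(G)$ for each point $y$ in a single batch.
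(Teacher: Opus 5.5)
The three-round skeleton is the same as the paper's. However, the primitive you import for Rounds 1 and 3 does not exist, and Round 2, the only step that needs a new idea, is left unproved.

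\textbf{Rounds 1 and 3.} A non-adaptive learner using $\Ot(|X|k'/s^2)$ queries cannot exist. For $s=2$ it would be a one-round $\Ot(|X|)$ PAIR-query algorithm for partitions with $O(1)$ parts, contradicting the $\Omega(n^2)$ non-adaptive lower bound. In general, take the DOUBLETON/SINGLETON instance with $k=1$, which has at most $4$ parts. A subset query separates the two worlds only if it contains both $a$ and $b$, and $\{a,b\}$ is a uniformly random pair. So $\Omega(|X|^2/s^2)$ non-adaptive queries are needed.

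What BMS actually provide (their Lemma 4.1) costs $O(\frac{m^2}{s^2}\log(m/\delta)+m\log^4(m/\delta)\log s)$ on $m$ elements, independent of the number of parts. Rounds 1 and 3 still go through with it, but for a different reason than the one you give. Since $|R|,|P'|=O(\sqrt{nk\log n})$, both $|R|^2/s^2$ and $|P'|^2/s^2$ are $\Ot(nk/s^2)$; the argument via $k\le n$ should be replaced by this. Your fallback simulation also fails as described. $Q(T\setminus\{x,y\})$ is a different query for each pair, so the scheme spends $\Theta(s^2)$ queries per set $T$, not $O(1)$.

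\textbf{Round 2.} Here you only assert a black box. Your concrete scheme tests $Q(G\cup\{y\})=Q(G)$ for every point $y$ against groups $G$ of $s/2$ representatives. That costs about $\frac{nk}{s}$ queries (times $\log s$ to identify the representative), a factor of roughly $s$ too many. Pairing blocks of $s/2$ points with groups of $s/2$ representatives comes with no non-adaptive decoding scheme. Running the BMS learner on such unions does not help either: they have size $\Theta(s)$, so the allowed query size is only about $\sqrt{s}$, and the total is again $\Ot(nk/s)$.

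The missing idea is to batch at scale $s^2$ on both sides and reuse the $m^2/s^2$ learner:
\begin{itemize}
\item Split $[n]$ into blocks $L_1,\dots,L_p$ of size $s^2$, and the representatives $S$ into blocks $A_1,\dots,A_q$ of size $s^2$.
\item Run BMS Lemma 4.1 on every union $L_i\cup A_j$ in parallel. Each union has size at most $2s^2$ and costs $\Ot(s^2)$.
\item Learning the partition of $L_i\cup A_j$ tells you, for each $x\in L_i$ and $y\in A_j$, whether $x\in C_y$.
\item The total is $\lceil n/s^2\rceil\lceil |S|/s^2\rceil\cdot\Ot(s^2)=\Ot(nk/s^2)$.
\end{itemize}
When $|S|<s^2$ there is only one representative block and the cost is $\Ot(n)$. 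This case is absorbed exactly by the hypothesis $s\le\sqrt{k}$. That, together with the $\Omega(n)$ information-theoretic lower bound, is the real role of $s\le\sqrt{k}$; it has nothing to do with collisions inside random $s$-sets.
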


First note that if $k < s^2$, then $\frac{nk}{s^2} \leq n$: since $\Omega(n)$ queries are information-theoretically necessary (because each query returns $\log k$ bits of information, and that there are $k^n = 2^{n \log k}$ many possible partitions \cite{BLMS24}), there cannot be an algorithm scaling with $s^2$. Hence, we require for the theorem that $s \leq \sqrt{k}$.

To prove \Cref{thm:known-subset}, we use the following result of \cite[Lemma 4.1]{BMS25}.

\begin{lemma}[Lemma 4.1 of \cite{BMS25}]\label{thm:bms}
    There is a non-adaptive algorithm which for any query size bound $2\leq s\leq \sqrt{n}$ and error probability $\delta>0$, learns an arbitrary partition on $n$ elements exactly using $O(\frac{n^2}{s^2}\log(n/\delta)+n \log^4(n/\delta)\log s)$ subset queries of size at most $s$, and succeeds with probability at least $1-\delta$. 
\end{lemma}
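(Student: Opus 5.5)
The plan is to give a non-adaptive algorithm built from two query families, chosen in advance and interpreted afterwards. One family recovers the relation between elements of \emph{small} parts, meaning parts of size at most roughly $n/s$. The other recovers membership in \emph{large} parts, of size at least roughly $n/(s\log n)$. The basic gadget is a single differencing identity. For a set $S$ with $|S|\le s$ and $x\in S$, $Q(S)-Q(S\setminus\{x\})\in\{0,1\}$, and it equals $0$ iff $x$ has another element of its part inside $S$. Applying this to $S$, $S\setminus\{x\}$, $S\setminus\{y\}$ and $S\setminus\{x,y\}$ lets us read off whether $x\sim y$, provided no third element of their part lies in $S$.

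\textbf{Small parts.} Take $m=\Theta(\frac{n^2}{s^2}\log(n/\delta))$ independent uniformly random sets $S_1,\ldots,S_m$ of size $s$. A fixed pair $\{x,y\}$ lies in a given $S_j$ with probability $\Theta(s^2/n^2)$. So with probability $1-\delta/n^2$ it lies in $\Theta(\log(n/\delta))$ of them. If $x$'s part has size $\le n/s$, then conditioned on $x,y\in S_j$ the other $s-2$ elements avoid the part with constant probability. On such a ``clean'' $S_j$ the differencing identity decides $x\sim y$ correctly. To detect cleanness we also look at the answers $Q(S_j)-Q(S_j\setminus\{x\})$: these tell us whether $x$ has a partner in $S_j$, and we can certify that this partner is only $y$ by comparing with $S_j\setminus\{y\}$.

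There is a cost issue. Querying all $s^2$ deletions $S_j\setminus\{x,y\}$ per set would cost $m s^2\approx n^2$. Instead I would use only the $s+1$ queries $Q(S_j)$ and $Q(S_j\setminus\{x\})$ for $x\in S_j$. These reveal the set $D_j$ of elements that have a same-part partner inside $S_j$. Then, across the many random $S_j$ containing both $x$ and $y$, I would decide $x\sim y$ by a majority-type statistic: whether $x,y$ tend to appear together in $D_j$ exactly when they co-occur without other sampled members. This still costs $m(s+1)=O(\frac{n^2}{s}\log)$, which is too many by a factor of $s$. I would first try to fix the bookkeeping by replacing the singleton deletions with a group-testing design: delete $O(\log s)$ structured subsets of $S_j$, from which $D_j$ and the pairing are decoded. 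That brings the per-set cost to polylogarithmic and gives the $\frac{n^2}{s^2}\log(n/\delta)$ term, possibly with an extra factor to be absorbed.

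\textbf{Large parts.} Draw a random sample $T$ of size $\Theta(s\log(n/\delta))$. With probability $1-\delta$ it hits every part of size $\ge n/s$. Learn the partition of $T$ non-adaptively, either by the same small-set technique restricted to $T$ or by querying all pairs, which costs $O(s^2\log^2)\le O(n\log^2)$ since $s\le\sqrt n$. The obstacle is that $T$'s partition is unknown when queries are issued, so we cannot choose one representative per part in advance. Instead, for every $x\in U$ we query $Q(\{x\}\cup B)$ and $Q(B)$ for a fixed family of random subsets $B\subseteq T$ of size $s-1$. Then $x$ belongs to a part meeting $B$ iff the two answers are equal. Using $O(\log^2(n/\delta))$ random subsets at each of $O(\log s)$ density scales, the pattern of equal answers identifies, after the fact, the unique $T$-part containing $x$, if one exists. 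This is a non-adaptive group-testing argument that intersects consistent candidate parts. It costs $O(n\log^4(n/\delta)\log s)$ and gives the second term.

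I expect the main obstacle to be the decoding in the small-part phase. We must certify non-adaptively, within about $O(1)$ queries per set on average, which co-occurring pairs are clean, while the answers are contaminated by elements of large parts. I would handle that contamination by first removing elements already classified into large parts: their contribution to any $Q(S_j)$ is computable from the large-part phase, so it can be subtracted. A final union bound over all pairs and both phases gives success probability $1-\delta$.
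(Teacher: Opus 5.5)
The paper does not prove this lemma; it imports it from \cite{BMS25}, so your proposal has to stand on its own. Your large-part phase is essentially sound: membership of $x$ in a part meeting $B\subseteq T$ is read off from $Q(B\cup\{x\})=Q(B)$, and random subsets of $T$ at $O(\log s)$ sizes separate the $T$-parts after the fact. Subtracting resolved parts is also fine.

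\textbf{The gap: the small-part phase.} This phase carries the whole $\frac{n^2}{s^2}\log(n/\delta)$ term, and you have not established it. What you actually establish is the version with $s+1$ queries per covering set, which costs $\Theta(\frac{n^2}{s}\log(n/\delta))$. Your repair, decoding $D_j$ and its pairing from $O(\log s)$ structured deletions, is information-theoretically impossible. Take $s=\sqrt n$ and all parts of size $n^{1/4}$:
\begin{itemize}
\item Your sample $T$ meets a given part with probability only $O(n^{-1/4}\log(n/\delta))$, so almost every part falls into your small-part phase.
\item A random $s$-set contains about $n^{1/4}/2$ same-part pairs. So $D_j$ has about $n^{1/4}$ elements and carries $\Omega(n^{1/4}\log n)$ bits.
\item Any polylogarithmic number of queries on $S_j$ returns only polylogarithmically many bits.
\end{itemize}
Your other mechanisms cannot absorb these parts either. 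A single count $Q(S_j)$ essentially never certifies NO, because $S_j$ is almost never collision-free. Per-element tests $Q(B\cup\{x\})$ with $|B|\le s$ hit $x$'s part with probability at most $s|C_x|/n=n^{-1/4}$, so you would need about $n^{5/4}$ of them against a budget of $\Ot(n)$. In general, once $s>n^{1/3}$, parts of size between roughly $n/s^2$ and $s$ are out of reach of every mechanism in your proposal at the claimed cost.

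\textbf{The missing idea.} For such parts you should not try to cover every pair with a size-$s$ set. You only need a connected spanning structure inside each part, and smaller query sets make collisions isolable. Concretely:
\begin{itemize}
\item For each $j\le\log s$, take $m=\Theta(n\log(n/\delta))$ random sets of size $t=2^j$, aimed at parts of size $\Theta(n/t^2)$.
\item At that size a random $t$-set contains $O(1)$ unresolved same-part pairs in expectation. Yet each $x$ in such a part shares $\Theta(mt^2|C_x|/n^2)=\Theta(\log(n/\delta))$ sets with a partner, and a constant fraction of these are otherwise collision-free.
\item Measure the excess $e(S')=\#\{\text{current components meeting }S'\}-Q(S')$ on $S$ and on $O(\log(n/\delta))$ random subsets $S'\subseteq S$. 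In a clean set, whose only collision is one pair $x\sim y$ from different components, $e(S')=1$ exactly when $x,y\in S'$. This AND-pattern pins down $\{x,y\}$ with high probability, and a union bound rules out spurious decodes.
\item Each element then receives $\Theta(\log(n/\delta))$ decoded edges to random partners, so every part in the window becomes connected.
\item Process scales from $t=2$ (largest parts) upward, so that already-merged components are subtracted automatically inside $e(\cdot)$.
\end{itemize}
After the last scale only parts of size $O(n/s^2)$ remain. For these, your single-query covering with $\Theta(\frac{n^2}{s^2}\log(n/\delta))$ random $s$-sets does work: two components lie in different parts if and only if some set meeting both has $e(S)=0$. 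The scales cost $O(n\log^2(n/\delta))$ each, so the total stays within the stated bound.
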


The algorithm from \Cref{thm:bms} can be readily used to improve rounds 1 and 3 of \Cref{alg:known}: as  $R$ and $P'$ have size $O(\sqrt{nk \log n})$ (by \Cref{lem:rem_points}), learning completely the partition in each of them requires only $\Ot\left(\max\{\sqrt{nk}, \frac{nk}{s^2}\}\right) = \Ot\left(\frac{nk}{s^2}\right)$ many queries. It is therefore merely necessary to improve the query complexity in Round 2.
The question is therefore the following: given a subset $S$ of one representative from a subset of the clusters, can we learn the whole clusters of points in $R$, in $\Ot(\frac{nk}{s^2})$ many non-adaptive queries?

\begin{lemma}\label{lem:subset-find-clusters}
	Let $A \subseteq P$ and $|P|=n$. For $y  \in A$, let $C_y$ be $y$'s cluster in $P$. It is possible to learn all $C_y$, in one round, using $\Ot(\frac{n |A|}{s^2})$ weak subset queries.
\end{lemma}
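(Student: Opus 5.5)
We plan to use the fact that the representatives in $A$ lie in pairwise distinct clusters; this holds because they are taken from a maximal set of points in different clusters, as in \Cref{alg:known}. The idea is to find, for each $x \in P$, which $y\in A$ (if any) shares its cluster, by batching many representatives and many points into a single weak subset query. Set $t := \lfloor s/2 \rfloor$. Partition $A$ into groups $A_1,\dots,A_{\lceil |A|/t\rceil}$ of size at most $t$, and partition $P$ into blocks $B_1,\dots,B_{\lceil n/t\rceil}$ of size at most $t$. For every pair $(A_j, B_\ell)$ we will learn which elements of $B_\ell$ share a cluster with which element of $A_j$. There are $O(\frac{n|A|}{s^2})$ such pairs, so it is enough to handle each pair with polylogarithmically many queries of size at most $s$, all chosen non-adaptively.

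Fix a pair $(A_j,B_\ell)$. Since $A_j$ has distinct clusters, we know $Q(A_j)=|A_j|$. For any $S \subseteq B_\ell$ we have $Q(A_j\cup S) - |A_j| = $ the number of clusters met by $S$ that avoid $A_j$, while $Q(S)$ counts all clusters met by $S$. The quantity we need is the bipartite incidence between $B_\ell$ and $A_j$: the matching $x\mapsto y$ with $x \in B_\ell$, $y\in A_j$ and $x\sim y$. We learn it with group testing. For each bit position $b\in[\lceil\log t\rceil]$, let $A_j^{b}$ be the set of $y\in A_j$ whose index has bit $b$ equal to $1$. For each single $x\in B_\ell$, the quantity $Q(A_j^b\cup\{x\}) - |A_j^b|\in\{0,1\}$ tells us whether $x$'s partner has bit $b$ set. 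Doing this per $x$ costs $O(|B_\ell|\log t)$ queries, which is too many. So we also put random subsets of $B_\ell$ into the queries and decode. Our plan is to use the decoding of \Cref{thm:bms}: run it non-adaptively on the universe $A_j\cup B_\ell$, which has size at most $s$ and can therefore be queried as a whole, giving $O(\log^4(n/\delta)\log s)$ queries by the second term of \Cref{thm:bms}. This learns the full partition of $A_j\cup B_\ell$ with probability $1-\delta$. That partition directly reveals $C_y\cap B_\ell$ for each $y\in A_j$.

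Taking $\delta = n^{-10}$ and union bounding over all $O(n|A|/s^2)$ pairs, all pieces succeed whp. The total number of queries is $O(\frac{n|A|}{s^2}\log^5 n)=\Ot(\frac{n|A|}{s^2})$, it is one round, and $C_y=\bigcup_\ell (C_y\cap B_\ell)$. The main obstacle is checking the parameter regime of \Cref{thm:bms}: it needs $s\le\sqrt{n'}$ where $n' = |A_j \cup B_\ell| \le s$. We resolve this by invoking it with query size $s'=2$ … but then the cost becomes $O(n'^2)$, which is too large. Instead, we would apply \Cref{thm:bms} with query bound $\sqrt{n'}$ on a universe of size $n'=s^2$ as the fallback: take $|A_j|,|B_\ell|\approx s^2/2$. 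Its cost is then $O(s^2\log n + s^2\operatorname{polylog} n)$ per pair. There are $\frac{n|A|}{s^4}$ such pairs, so the total is again $\Ot(\frac{n|A|}{s^2})$. The edge case $|A|<s^2$ is handled by padding $A$-groups with points of $P$; this keeps the cost $\Ot(n/s^{2}\cdot s^2)=\Ot(n)\le \Ot(n|A|/s^2)\cdot$, up to the $s\le\sqrt{k}$ assumption. This balancing is the step where we expect the accounting to need care.
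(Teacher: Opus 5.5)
Your fallback in the last paragraph is exactly the paper's proof, and it is correct. It takes groups of size about $s^2/2$ in both $A$ and $P$, runs \Cref{thm:bms} with query bound $s=\sqrt{n'}$ on each union of size $n'\approx s^2$ at cost $\Ot(s^2)$, and sums over $O(n|A|/s^4)$ pairs. It also needs neither the distinct-clusters assumption nor any group testing, because \Cref{thm:bms} recovers the whole partition of each union.

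The primary plan of your second and third paragraphs fails for a more basic reason than the regime issue you flagged, so drop the count $O(\frac{n|A|}{s^2}\log^5 n)$ claimed there. The second term of \Cref{thm:bms} is $n'\log^4(n'/\delta)\log s$, which is linear in the universe size $n'$; you read it as $\log^4(n/\delta)\log s$. With $|A_j\cup B_\ell|\approx s$ (and query bound $\sqrt{n'}$ to respect the regime), each pair costs $\Ot(s)$, so the total is $\Ot(n|A|/s)$. No other per-pair decoder can save block size $\Theta(s)$. The restriction of the partition to $A_j\cup B_\ell$ can take $(t+1)^t$ values, since each point of $B_\ell$ joins the cluster of one of the $t$ representatives or of none. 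A query answer takes at most $s$ values, so decoding a pair from its own queries needs $\Omega(t\log t/\log s)=\Omega(s)$ queries. Blocks of size $s^2$ are exactly where the two terms $n'^2/s^2$ and $n'$ of \Cref{thm:bms} balance, so $\Ot(s^2)$ queries cover $\Theta(s^4)$ point--representative incidences.

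Your edge-case sentence is also off: $\Ot(n)\le\Ot(n|A|/s^2)$ fails exactly when $|A|\ll s^2$. In that regime the lemma as literally stated cannot hold. For $|A|=1$ and a two-part partition, $C_y$ is an arbitrary set and every answer is one bit, so $\Omega(n)$ queries are needed. The achievable bound is $\Ot(n|A|/s^2+n)$, which is enough for \Cref{thm:known-subset} because $s^2\le k$. The paper's proof tacitly assumes $|A|\ge s^2$ too, since it sets $q=|A|/s^2$. So this is a gloss you share with the paper, not a defect of your route.
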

\begin{proof}
	Partition $P$ into $L_1, L_2, \ldots, L_p$ each of size $s^2$ (except for $L_p$, which has size at most $s^2$), and partition $A$ into $A_1, \ldots, A_q$ each of size $s^2$ (except for $A_q$, which has size at most $s^2$), where $p = n/s^2$ and $q = |A|/s^2$.
	
	Using the algorithm from \Cref{thm:bms}, we can in one round learn the partition of $L_i \cup A_j$ using $\Ot(2s^2/s)^2 = \Ot(s^2)$ many weak subset queries for each $i, j$. The total number of queries made to learn the partition in all $L_i \cup A_j$ simultaneously is $\Ot(pqs^2) = \Ot(n|A|/s^2)$.
	
	This information is enough to recover the cluster of all points in $A$: fix points $x \in P$ and $y \in A$. Since the $L_i$ partition $[n]$, there is $i$ such that $x \in L_i$; similarly, there is a $j$ such that $y \in A_j$. When learning the full clustering of $L_i \cup A_j$, the algorithm will learn whether $x \in C_y$ or not. Since this holds for all $x$, this concludes the lemma.
\end{proof}

Combining this lemma with the results of the pairwise queries algorithm concludes the proof of \Cref{thm:known-subset}:
\begin{proof}[Proof of \Cref{thm:known-subset}]
	As sketched above, the algorithm follows the three steps of \Cref{alg:known}; and uses the algorithm from \Cref{lem:subset-find-clusters} instead of round 2.
	
	More formally, the algorithm starts with sampling a set $R$ of $\sqrt{nk \log n}$ random points.
	In round 1, the algorithm uses \Cref{thm:bms} to learn the partition of $R$ in $\Ot(\sqrt{nk \log n}/s)^2 = \Ot(nk/s^2)$ many queries. Then, the algorithm selects in $R$ a set $S$ of at most $k$ representatives, one per cluster.
	Then, in round $2$, the algorithm uses \Cref{lem:subset-find-clusters} (using $S$ for the set $A$) to find all points in the clusters of points from $S$, in $\Ot(\frac{nk}{s^2})$ queries. This in particular implies that all points in $R$ are correctly classified.
	Finally, let $P'$ be the set of remaining unclustered points: the third round is again an application of \Cref{thm:bms} to get the full clustering of $P'$, using $O(|P'|^2 / s^2)$ queries.
	
	This algorithm has the same behavior as \Cref{alg:known}: hence, \Cref{lem:known-correct} shows its correctness. 
	The number of queries is $\Ot(nk/s^2)$ in the first two rounds, and \Cref{lem:rem_points} shows that $|P'| \leq \sqrt{nk \log n}$: hence, the number of queries in the thirds round is as well $\Ot(nk/s^2)$, which concludes the theorem.
\end{proof}

A direct corollary of this is the adaptation of the algorithm for unknown $|\calP|$:

\begin{corollary}[Weak subset queries]
	There is a $4$ round algorithm which learns a partition $\calP$ and, with high probability, makes $\Ot(\frac{nk}{s^2})$ many weak subset queries of size at most $2\leq s \leq \sqrt{k}$. This algorithm has no knowledge about $|\calP|$.
\end{corollary}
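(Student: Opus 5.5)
The plan is to mirror the proof of \Cref{thm:upper-bound-unknown-k}. We keep Round 1 of \Cref{alg:unknown} and use the weak-subset version of \Cref{alg:known} from the proof of \Cref{thm:known-subset} for Rounds 2, 3 and 4. Round 1 stays exactly as written, with pair queries; a pair query is just a weak subset query of size $2\le s$. It returns the estimate $k^*$. By \Cref{cor:kstar-is-lb}, whp $k^*\le k$, hence $\hat k=2000k^*\log n = O(k\log n)$.

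\textbf{Round 1 cost.} Round 1 makes $\Ot(n)$ queries. Since the corollary restricts to $s\le\sqrt{k}$, we have $n\le nk/s^2$, so this cost is within budget. (Note the algorithm need not know $k$ to respect this: the restriction only concerns the analysis.)

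\textbf{Rounds 2 and 3.} Round 2 samples $R$ of size $10\sqrt{n\hat k\log n}$ and learns its partition with \Cref{thm:bms}, using $\Ot(n\hat k/s^2)=\Ot(nk/s^2)$ queries. Round 3 applies \Cref{lem:subset-find-clusters} with $A$ a set of representatives. One subtlety is that $A$ has at most $\min(|R|,k)$ elements, since there are at most $k$ true clusters. So this costs $\Ot(nk/s^2)$ regardless of $\hat k$.

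\textbf{Round 4, the main obstacle.} Round 4 runs \Cref{thm:bms} on the unclassified set $P'$, at cost $\Ot(|P'|^2/s^2)$ (plus $\Ot(|P'|)\le \Ot(n)$). The key step is the bound $|P'|^2=O(nk\log^2 n)$ whp, where $k$ is the true number of parts. This was established inside the proof of \Cref{thm:unknown-queries}. We multiply \eqref{eq:ub-on-p'} and \eqref{eq:ub-on-p'-2}, using the first bullet of \Cref{lem:useful-cor} and the fact that, as in \Cref{lem:rem_points}, all clusters of size at least $\sqrt{n\log n/\hat k}$ are hit by $R$. That argument only depends on which points remain unclassified, and this set is identical here because the subset-query rounds recover exactly the same information as the pair-query rounds. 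So it transfers verbatim and gives round-4 cost $\Ot(nk/s^2)$.

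\textbf{Correctness and probability.} Correctness holds as in \Cref{lem:known-correct}, up to the failure probability of \Cref{thm:bms}. We set $\delta=1/\mathrm{poly}(n)$ there, which costs only logarithmic factors. A union bound over the whp events then finishes the proof. I expect no real obstacle beyond checking that the parameter regime $s\le\sqrt{k}$ is what absorbs the $\Ot(n)$ first-round cost.
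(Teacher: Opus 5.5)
Your proposal is correct and matches the paper's proof. Round 1 is run with pair queries and fits the budget because $s\le\sqrt{k}$. Rounds 2--4 are the subset-query implementation from \Cref{thm:known-subset} with $\hat k=O(k\log n)$, and the final round is bounded by reusing $|P'|^2\le nk\log^2 n$ from \Cref{thm:unknown-queries}.

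Your remarks that the representative set has at most $k$ elements, and that \Cref{thm:bms} should be run with $\delta=1/\mathrm{poly}(n)$ plus a union bound, make explicit details the paper leaves implicit.
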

\begin{proof}
	This result is a direct corollary of \Cref{alg:unknown} and of \Cref{thm:known-subset}. To implement \Cref{alg:unknown} with weak subset queries, we proceed as follows: the first round uses $\Ot(n)$ PAIR queries (and does not use the power of the subset queries); and the subsequent rounds follow the implementation of \Cref{alg:unknown} with subset queries. 

    Since $s^2 \leq k$, we have $\Ot(n) = \Ot(\frac{nk}{s^2})$, so there is nothing to show for the first. 
	The second and third round thus use $\Ot(\frac{nk(s^*)}{s^2}) = \Ot(\frac{nk}{s^2})$ queries. The last one uses $\Ot(\frac{|U'|^2}{s^2})$ queries (with \cref{thm:bms}), where $U'$ is the set of unclustered points and verifies $|U'|^2 \leq nk\log^2 n$ (as shown in \Cref{thm:unknown-queries}). 
\end{proof}

\bibliographystyle{plain}
\bibliography{bibliography}

%\newpage
\appendix
\section{On Estimating the number of parts}\label{sec:appendix}

As mentioned in the main body, one natural approach to solve the partition learning problem when the number of parts $k := |\calP|$ is unknown is to first estimate $k$ and then apply~\Cref{alg:known}. More precisely, return a parameter $k'$ such that $k/C \leq k' \leq Ck$ for some constant $C$. Of course, one doesn't {\em need} this to learn as illustrated by~\Cref{alg:unknown}. Nonetheless, the question of estimating the number of parts is interesting in its own right: what is the optimal query complexity? what is the number of rounds required? etc.
We make a few observations. 

\begin{claim}
	For any integers $k < r$, any non-adaptive, possibly randomized algorithm making $o(n^2k/r^2)$ queries cannot distinguish between partitions with $k$ parts
	and partitions with $r$ parts.
\end{claim}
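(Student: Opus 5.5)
We use Yao's principle. It suffices to exhibit two distributions $\calD_k$ (on partitions with $k$ parts) and $\calD_r$ (on partitions with $r$ parts) such that any fixed query set $Q$ with $|Q| = o(n^2k/r^2)$ has nearly identical answer distributions under the two. The natural choice is to plant random parts of the appropriate sizes.

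\textbf{The distributions.} In $\calD_k$, assign every element independently and uniformly to one of $k$ labels, and let the parts be the label classes. Since $n$ is large compared to $k$, there are $k$ parts whp. In $\calD_r$, first draw the $k$-labelling exactly as in $\calD_k$. Then give each element an independent second label uniform in $[r/k]$, and refine every class by this second label. The result has $r$ parts whp, assuming $k \mid r$; otherwise round, losing only constants.

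\textbf{Coupling.} Couple the two instances through the shared first labels. A query $(x,y)$ can receive different answers only if $u_x = u_y$ and the second labels differ. In particular, every NO answer in the $k$-instance is also NO in the $r$-instance.

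\textbf{Bounding the disagreement.} We show that, for a fixed non-adaptive $Q$, the answer vectors are close in total variation. The naive union bound gives
\[\Pr[\exists (x,y)\in Q: u_x=u_y] \le |Q|/k,\]
which only rules out $o(k)$ queries. This is far weaker than the claim and shows that YES answers themselves are not rare. The approach must instead compare \emph{how many} YES answers appear. Both instances produce YES answers, and the distinguishing statistic is the density of YES answers: $1/k$ versus $1/r$. With $|Q|=m$ queries, the YES count has mean $m/k$ in one case and $m/r$ in the other. Even $m \approx k$ distinguishes them when $r \gg k$. So the plan as stated cannot work with these distributions, and the construction must match first moments.

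\textbf{Revised construction.} Make the two instances have the same pair-collision probability but different numbers of parts. Let $\calD_k$ consist of one giant part, padded with small parts, chosen so that the collision probability $\sum_i |P_i|^2/n^2$ agrees with that of $\calD_r$ while the part counts differ. The adversary hides the distinguishing feature (the small parts) randomly, and seeing it requires querying pairs inside a hidden set of size about $n\sqrt{k}/r$. That costs about $n^2k/r^2$ queries, matching the bound in the claim. The remaining computation is a second-moment/TV bound over the random placement: with $o(n^2k/r^2)$ pairs, whp none of the queried pairs lands inside the hidden structure. This is the same as the key-feature argument of \Cref{thm:det}, with a uniform random hidden set in place of the pair $(a,b)$.

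\textbf{Main obstacle.} The hardest step is choosing part sizes so that all low-order statistics visible to a non-adaptive query pattern coincide across the two distributions, in particular the YES density and the triangle/path patterns on queried pairs. I would first try matching only the YES probability and argue that the higher-order patterns require pairs inside the hidden set.
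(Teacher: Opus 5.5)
Your diagnosis of the first construction is correct, but the revised plan never becomes a proof. You do not specify $\calD_r$ or the part sizes, and your one quantitative claim is miscalibrated. Querying all pairs of a set of size $h=n\sqrt{k}/r$ costs $h^2=n^2k/r^2$ only if the set is known. For a randomly placed hidden set $H$, a fixed non-adaptive $Q$ has $\Exp|Q\cap(H\times H)|\approx |Q|h^2/n^2$, so about $n^2/h^2=r^2/k$ queries already hit it in expectation. The ``main obstacle'' you name is also a detour: you do not need to match the YES density or higher-order query patterns across the two distributions.

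What is missing is a coupling under which the two instances answer every query identically except on a rare set of pairs. Include each element in $R$ independently with probability $r/n$, and give each $x\in R$ a uniform label in $[k]$, forming $R_1,\dots,R_k$. One distribution outputs $\{U\setminus R,R_1,\dots,R_k\}$; the other outputs $\{U\setminus R\}\cup\{\{x\}:x\in R\}$. Under the shared randomness, the two instances answer as follows:
\begin{itemize}
\item a query with both ends outside $R$ is YES in both;
\item a query with exactly one end in $R$ is NO in both;
\item a query with both ends in $R$ can differ only if its ends share a label.
\end{itemize}
Hence the answers to a fixed $Q$ coincide unless $Q$ meets $\bigcup_i(R_i\times R_i)$. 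Moreover, $\Exp\bigl|Q\cap\bigcup_i(R_i\times R_i)\bigr|=|Q|\cdot(r/n)^2\cdot\frac1k$, which is $o(1)$ when $|Q|=o(n^2k/r^2)$. Markov's inequality and Yao's principle then finish.

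So the relevant hidden set has size $r$, not $n\sqrt{k}/r$. The bound $n^2k/r^2$ arises because roughly $k$ queries must land in $R\times R$ before one joins two elements with the same label. The per-query YES probabilities in this construction differ by $r^2/(kn^2)$, which is exactly what this budget cannot detect, so no moment matching is needed. This is the paper's proof.
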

\begin{proof}
	Fix a parameter $k$; for now consider to be any positive integer smaller than the universe size $n$.
	We construct distributions $\calD^{(1)}_{r,k}$ and $\calD^{(2)}_{r,k}$ over partitions of $U$.
	\begin{itemize}
		\item First we select $R\subseteq U$ of size $r$ by picking every element with probability $r/n$ independently.
		\item Next, we partition $R$ into $k$-sets $R_1, \ldots, R_k$ where each $x\in R$ picks $i\in [k]$ with probability $1/k$.
		\item In $\calD^{(1)}_{r,k}$, we output $\calP := \{U\setminus R, R_1, \ldots, R_k\}$ with roughly $k$ parts.
		
		In $\calD^{(2)}_{r,k}$, we output $\calP := \{U\setminus R\} \cup \{\{x\}~:~x\in R\}$ with roughly $r$ parts.

	\end{itemize}
	
	Let $Q$ be any collection of PAIR queries with $q := |Q|$. We now figure out the relation between $q, r$ and $k$ such that 
	the answers to these queries cannot distinguish the two distributions. This occurs, in particular, if $Q \cap \cup_{i=1}^k (R_i\times R_i) = \emptyset$.
	Call this event $B$. The expected size of the set is $\frac{qr^2}{kn^2}$, and so, if $q = o\left(\frac{n^2k}{r^2}\right)$, the probability $B$ occurs is $1-o(1)$.
\end{proof}

\begin{corollary}
	Any non-adaptive, possibly randomized algorithm making $\Ot(n)$ queries cannot distinguish between two partitions, one with $r = \sqrt{n}$ and one with $k = n^{\eps}$ parts
	for any $\eps > 0$.
\end{corollary}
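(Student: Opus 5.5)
The corollary follows by plugging parameters into the preceding claim, comparing the two distributions $\calD^{(1)}_{r,k}$ and $\calD^{(2)}_{r,k}$ with $r=\sqrt{n}$ and $k=n^{\eps}$. Assume without loss of generality that $\eps<1/2$; otherwise shrink $\eps$, since for $\eps\ge 1/2$ the statement either becomes the claim with the roles of the two sides swapped or is not needed. A partition drawn from $\calD^{(1)}$ has about $n^{\eps}+1$ parts. One from $\calD^{(2)}$ has about $\sqrt{n}+1$ parts. I will use Chernoff to get $|R|=(1\pm o(1))\sqrt{n}$ with high probability. I will also use that every $R_i$ is nonempty with high probability, because $|R|/k = n^{1/2-\eps}\gg \log k$. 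Together these give part counts of $\Theta(n^{\eps})$ and $\Theta(\sqrt n)$, so the two families are genuinely separated by a polynomial factor.

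Next, the claim's threshold: with $r=\sqrt n$ it reads $q=o(n^2k/r^2)=o(nk)=o(n^{1+\eps})$. Any $\Ot(n)$ bound, meaning $n\log^{C}n$, is $o(n^{1+\eps})$ for each fixed $\eps>0$. So the event $B=\{Q\cap\bigcup_i(R_i\times R_i)=\emptyset\}$ holds with probability $1-o(1)$.

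The remaining point is to make sure that event $B$ really implies indistinguishability, including for randomized algorithms. I will fix the algorithm's random seed, which determines $Q$ since the algorithm is non-adaptive. I will then couple the two distributions through the same $R$ and the same labels. On $B$, every query answer agrees in the two worlds. Queries touching $U\setminus R$ on both ends get YES in both. Queries with exactly one endpoint in $R$ get NO in both. Queries inside $R$ but across different $R_i$'s get NO in both. The only queries whose answers could differ are those inside a single $R_i$, and $B$ excludes them. Hence the transcripts have total variation distance at most $\Pr[\neg B]=o(1)$. Averaging over the seed gives the corollary.

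I expect no real obstacle. The only subtlety is the bookkeeping about $\eps$: the claim needs $k<r$, that is $\eps<1/2$, and the claim's expectation bound $\Exp\bigl[|Q\cap\bigcup_i(R_i\times R_i)|\bigr]=qr^2/(kn^2)$ must be applied with these parameters, where it equals $q/n^{1+\eps}=o(1)$ by Markov.
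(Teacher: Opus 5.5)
Your proposal is correct and is essentially the paper's argument: the corollary is the preceding claim instantiated at $r=\sqrt{n}$, $k=n^{\eps}$, where the threshold $n^2k/r^2=n^{1+\eps}$ dominates any $\Ot(n)$ budget, and your explicit coupling and part-count checks just fill in what the paper leaves implicit. The only blemish is your aside on $\eps\ge 1/2$, because swapping roles in the claim gives threshold $n^{5/2-2\eps}$, which beats $\Ot(n)$ only for $\eps<3/4$; you should simply restrict to $0<\eps<1/2$, which is the intended range, as in the paper, since the claim needs $k<r$.
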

%
%Given a set $Q$ of pair queries of size $|Q| = q$, let us figure out the probability that all queries in $Q \cap (R\times R)$ evaluate to NO. 
%
%Fix a query $(x,y)$. We are interested in the event
%$(x,y) \in R\times R$ and it answers YES. 
%This probability is $\sum_{i=1}^k \Pr[x,y \in C_i]$
%which is $k \cdot \left(\frac{r}{kn}\right)^2 = \frac{r^2}{kn^2}$. So, the expected number of queries which are of this type is $\frac{qr^2}{kn^2}$, and so if $\frac{r^2}{k}\ll n^2/q$, with $1-o(1)$ probability, all queries in $Q \cap (R\times R)$ evaluate to NO. 
%If $q = n$, then it suffices to pick $k = \frac{r^2\log n}{n}$
%
%Why is this interesting? Because, if all queries in $Q\cap (R\times R)$ evaluate to NO, then $Q$ cannot distinguish between this case and the case when all the $R$ nodes are singletons. That is, it cannot distinguish between $k = r$ and $k =\frac{r^2\log n}{n}$.
%
%In particular, if we set $r = n^{0.75}$, then in one round of $n$ queries, we cannot distinguish between $k = n^{0.75}$ and $k = \sqrt{n}\log n$.
%Note that ``effective'' $k$ is also the same as $k$ in both scenarios (effective $k$ is the number of parts all but $\sqrt{n}$ points are in).
%

\end{document}